\documentclass[reqno,12pt,a4paper]{amsart}

\voffset=0.0in \topmargin=0in \headheight=8pt \headsep=0.5in
\textheight=8.9in
\hoffset=-0.5in \leftmargin=0.0in \rightmargin=0.0in
\textwidth=6.6in
\parindent=2ex

\usepackage{mathrsfs}
\usepackage{amssymb}
\usepackage{amsfonts}
\usepackage{latexsym}
\usepackage{amsthm}
\usepackage{mathtools}
\usepackage{etoolbox}
\usepackage{enumitem}
\usepackage{amsmath}
\usepackage{amstext}
\usepackage{graphicx}
\setlist{label={(\roman*)}, wide=0pt, nosep}
\usepackage{titletoc}
\setcounter{tocdepth}{1}
\usepackage{pict2e}
\usepackage{graphicx}
\usepackage{tikz}



\clubpenalty=10000
\widowpenalty=10000
\numberwithin{equation}{section}

\begin{document}


\renewcommand{\theequation}{\arabic{section}.\arabic{equation}}
\theoremstyle{plain}
\newtheorem{theorem}{\bf Theorem}[section]
\newtheorem{lemma}[theorem]{\bf Lemma}
\newtheorem{corollary}[theorem]{\bf Corollary}
\newtheorem{proposition}[theorem]{\bf Proposition}
\newtheorem{definition}[theorem]{\bf Definition}
\newtheorem*{definition*}{\bf Definition}
\newtheorem*{example}{\bf Example}
\newtheorem*{theorem*}{\bf Theorem}
\newtheorem*{remark}{\bf Remark}

\def\a{\alpha}  \def\cA{{\mathcal A}}     \def\bA{{\bf A}}  \def\mA{{\mathscr A}}
\def\b{\beta}   \def\cB{{\mathcal B}}     \def\bB{{\bf B}}  \def\mB{{\mathscr B}}
\def\g{\gamma}  \def\cC{{\mathcal C}}     \def\bC{{\bf C}}  \def\mC{{\mathscr C}}
\def\G{\Gamma}  \def\cD{{\mathcal D}}     \def\bD{{\bf D}}  \def\mD{{\mathscr D}}
\def\d{\delta}  \def\cE{{\mathcal E}}     \def\bE{{\bf E}}  \def\mE{{\mathscr E}}
\def\D{\Delta}  \def\cF{{\mathcal F}}     \def\bF{{\bf F}}  \def\mF{{\mathscr F}}
\def\c{\chi}    \def\cG{{\mathcal G}}     \def\bG{{\bf G}}  \def\mG{{\mathscr G}}
\def\z{\zeta}   \def\cH{{\mathcal H}}     \def\bH{{\bf H}}  \def\mH{{\mathscr H}}
\def\e{\eta}    \def\cI{{\mathcal I}}     \def\bI{{\bf I}}  \def\mI{{\mathscr I}}
\def\p{\psi}    \def\cJ{{\mathcal J}}     \def\bJ{{\bf J}}  \def\mJ{{\mathscr J}}
\def\vT{\Theta} \def\cK{{\mathcal K}}     \def\bK{{\bf K}}  \def\mK{{\mathscr K}}
\def\k{\kappa}  \def\cL{{\mathcal L}}     \def\bL{{\bf L}}  \def\mL{{\mathscr L}}
\def\l{\lambda} \def\cM{{\mathcal M}}     \def\bM{{\bf M}}  \def\mM{{\mathscr M}}
\def\L{\Lambda} \def\cN{{\mathcal N}}     \def\bN{{\bf N}}  \def\mN{{\mathscr N}}
\def\m{\mu}     \def\cO{{\mathcal O}}     \def\bO{{\bf O}}  \def\mO{{\mathscr O}}
\def\n{\nu}     \def\cP{{\mathcal P}}     \def\bP{{\bf P}}  \def\mP{{\mathscr P}}
\def\r{\varrho} \def\cQ{{\mathcal Q}}     \def\bQ{{\bf Q}}  \def\mQ{{\mathscr Q}}
\def\s{\sigma}  \def\cR{{\mathcal R}}     \def\bR{{\bf R}}  \def\mR{{\mathscr R}}
\def\S{\Sigma}  \def\cS{{\mathcal S}}     
\def\t{\tau}    \def\cT{{\mathcal T}}     \def\bT{{\bf T}}  \def\mT{{\mathscr T}}
\def\f{\phi}    \def\cU{{\mathcal U}}     \def\bU{{\bf U}}  \def\mU{{\mathscr U}}
\def\F{\Phi}    \def\cV{{\mathcal V}}     \def\bV{{\bf V}}  \def\mV{{\mathscr V}}
\def\P{\Psi}    \def\cW{{\mathcal W}}     \def\bW{{\bf W}}  \def\mW{{\mathscr W}}
\def\o{\omega}  \def\cX{{\mathcal X}}     \def\bX{{\bf X}}  \def\mX{{\mathscr X}}
\def\x{\xi}     \def\cY{{\mathcal Y}}     \def\bY{{\bf Y}}  \def\mY{{\mathscr Y}}
\def\X{\Xi}     \def\cZ{{\mathcal Z}}     \def\bZ{{\bf Z}}  \def\mZ{{\mathscr Z}}
\def\O{\Omega}

\newcommand{\mc}{\mathscr {c}}

\newcommand{\gA}{\mathfrak{A}}          \newcommand{\ga}{\mathfrak{a}}
\newcommand{\gB}{\mathfrak{B}}          \newcommand{\gb}{\mathfrak{b}}
\newcommand{\gC}{\mathfrak{C}}          \newcommand{\gc}{\mathfrak{c}}
\newcommand{\gD}{\mathfrak{D}}          \newcommand{\gd}{\mathfrak{d}}
\newcommand{\gE}{\mathfrak{E}}
\newcommand{\gF}{\mathfrak{F}}           \newcommand{\gf}{\mathfrak{f}}
\newcommand{\gG}{\mathfrak{G}}           \newcommand{\Gg}{\mathfrak{g}}
\newcommand{\gH}{\mathfrak{H}}           \newcommand{\gh}{\mathfrak{h}}
\newcommand{\gI}{\mathfrak{I}}           \newcommand{\gi}{\mathfrak{i}}
\newcommand{\gJ}{\mathfrak{J}}           \newcommand{\gj}{\mathfrak{j}}
\newcommand{\gK}{\mathfrak{K}}            \newcommand{\gk}{\mathfrak{k}}
\newcommand{\gL}{\mathfrak{L}}            \newcommand{\gl}{\mathfrak{l}}
\newcommand{\gM}{\mathfrak{M}}            \newcommand{\gm}{\mathfrak{m}}
\newcommand{\gN}{\mathfrak{N}}            \newcommand{\gn}{\mathfrak{n}}
\newcommand{\gO}{\mathfrak{O}}
\newcommand{\gP}{\mathfrak{P}}             \newcommand{\gp}{\mathfrak{p}}
\newcommand{\gQ}{\mathfrak{Q}}             \newcommand{\gq}{\mathfrak{q}}
\newcommand{\gR}{\mathfrak{R}}             \newcommand{\gr}{\mathfrak{r}}
\newcommand{\gS}{\mathfrak{S}}              \newcommand{\gs}{\mathfrak{s}}
\newcommand{\gT}{\mathfrak{T}}             \newcommand{\gt}{\mathfrak{t}}
\newcommand{\gU}{\mathfrak{U}}             \newcommand{\gu}{\mathfrak{u}}
\newcommand{\gV}{\mathfrak{V}}             \newcommand{\gv}{\mathfrak{v}}
\newcommand{\gW}{\mathfrak{W}}             \newcommand{\gw}{\mathfrak{w}}
\newcommand{\gX}{\mathfrak{X}}               \newcommand{\gx}{\mathfrak{x}}
\newcommand{\gY}{\mathfrak{Y}}              \newcommand{\gy}{\mathfrak{y}}
\newcommand{\gZ}{\mathfrak{Z}}             \newcommand{\gz}{\mathfrak{z}}

\def\ve{\varepsilon}   \def\vt{\vartheta}    \def\vp{\varphi}    \def\vk{\varkappa}

\def\A{{\mathbb A}} \def\B{{\mathbb B}} \def\C{{\mathbb C}}
\def\dD{{\mathbb D}} \def\E{{\mathbb E}} \def\dF{{\mathbb F}} \def\dG{{\mathbb G}}
\def\H{{\mathbb H}}\def\I{{\mathbb I}} \def\J{{\mathbb J}} \def\K{{\mathbb K}} \def\dL{{\mathbb L}}
\def\M{{\mathbb M}} \def\N{{\mathbb N}} \def\O{{\mathbb O}} \def\dP{{\mathbb P}} \def\R{{\mathbb R}}
\def\dQ{{\mathbb Q}} \def\S{{\mathbb S}} \def\T{{\mathbb T}} \def\U{{\mathbb U}} \def\V{{\mathbb V}}
\def\W{{\mathbb W}} \def\X{{\mathbb X}} \def\Y{{\mathbb Y}} \def\Z{{\mathbb Z}}

\newcommand{\bpsi}{\boldsymbol{\psi}}
\newcommand{\bS}{\boldsymbol{S}}
\newcommand{\by}{\boldsymbol{y}}
\newcommand{\br}{\boldsymbol{\r}}


\def\la{\leftarrow}              \def\ra{\rightarrow}            \def\Ra{\Rightarrow}
\def\ua{\uparrow}                \def\da{\downarrow}
\def\lra{\leftrightarrow}        \def\Lra{\Leftrightarrow}


\def\lt{\biggl}                  \def\rt{\biggr}
\def\ol{\overline}               \def\wt{\widetilde}
\def\no{\noindent}


\let\ge\geqslant                 \let\le\leqslant
\def\lan{\langle}                \def\ran{\rangle}
\def\/{\over}                    \def\iy{\infty}
\def\sm{\setminus}               \def\es{\emptyset}
\def\ss{\subset}                 \def\ts{\times}
\def\pa{\partial}                \def\os{\oplus}
\def\om{\ominus}                 \def\ev{\equiv}
\def\iint{\int\!\!\!\int}        \def\iintt{\mathop{\int\!\!\int\!\!\dots\!\!\int}\limits}
\def\el2{\ell^{\,2}}             \def\1{1\!\!1}
\def\sh{\sharp}
\def\wh{\widehat}
\def\ds{\dotplus}


\def\all{\mathop{\mathrm{all}}\nolimits}
\def\where{\mathop{\mathrm{where}}\nolimits}
\def\as{\mathop{\mathrm{as}}\nolimits}
\def\Area{\mathop{\mathrm{Area}}\nolimits}
\def\arg{\mathop{\mathrm{arg}}\nolimits}
\def\adj{\mathop{\mathrm{adj}}\nolimits}
\def\const{\mathop{\mathrm{const}}\nolimits}
\def\det{\mathop{\mathrm{det}}\nolimits}
\def\diag{\mathop{\mathrm{diag}}\nolimits}
\def\diam{\mathop{\mathrm{diam}}\nolimits}
\def\dim{\mathop{\mathrm{dim}}\nolimits}
\def\dist{\mathop{\mathrm{dist}}\nolimits}
\def\Im{\mathop{\mathrm{Im}}\nolimits}
\def\Iso{\mathop{\mathrm{Iso}}\nolimits}
\def\Ker{\mathop{\mathrm{Ker}}\nolimits}
\def\Lip{\mathop{\mathrm{Lip}}\nolimits}
\def\rank{\mathop{\mathrm{rank}}\limits}
\def\Ran{\mathop{\mathrm{Ran}}\nolimits}
\def\Re{\mathop{\mathrm{Re}}\nolimits}
\def\Res{\mathop{\mathrm{Res}}\nolimits}
\def\res{\mathop{\mathrm{res}}\limits}
\def\dom{\mathop{\mathrm{dom}}\limits}
\def\sign{\mathop{\mathrm{sign}}\nolimits}
\def\supp{\mathop{\mathrm{supp}}\nolimits}
\def\Tr{\mathop{\mathrm{Tr}}\nolimits}
\def\AC{\mathop{\rm AC}\nolimits}
\def\ind{\mathop{\mathrm{ind}}\nolimits}
\def\BBox{\hspace{1mm}\vrule height6pt width5.5pt depth0pt \hspace{6pt}}


\newcommand\nh[2]{\widehat{#1}\vphantom{#1}^{(#2)}}
\def\dia{\diamond}
\def\Oplus{\bigoplus\nolimits}
\def\qqq{\qquad}
\def\qq{\quad}
\let\ge\geqslant
\let\le\leqslant
\let\geq\geqslant
\let\leq\leqslant

\newcommand{\ca}{\begin{cases}}
\newcommand{\ac}{\end{cases}}
\newcommand{\ma}{\begin{pmatrix}}
\newcommand{\am}{\end{pmatrix}}
\renewcommand{\[}{\begin{equation}}
\renewcommand{\]}{\end{equation}}
\def\bu{\bullet}


\DeclarePairedDelimiterX\set[1]\lbrace\rbrace{
    \def\given{\;\delimsize\vert\;}#1
}
\DeclarePairedDelimiterX\norm[1]\lVert\rVert{
    \ifblank{#1}{\:\cdot\:}{#1}
}
\DeclarePairedDelimiterX\abs[1]\lvert\rvert{
    \ifblank{#1}{\:\cdot\:}{#1}
}
\DeclarePairedDelimiter{\subs}{.}{\rvert}


\title[{Resonances and inverse problems for energy-dependent potentials}]
{Resonances and inverse problems for energy-dependent potentials on the half-line}

\date{\today}

\author[Evgeny Korotyaev]{Evgeny Korotyaev}
\address[Evgeny Korotyaev]{Saint Petersburg State University,
7/9 Universitetskaya Emb., St. Petersburg, 199034, Russia,\
HSE University,
3A Kantemirovskaya Street, St. Petersburg, 194100, Russia,\newline
E-mail address: \texttt{korotyaev@gmail.com, e.korotyaev@spbu.ru}}
    
\author[Andrea Mantile]{Andrea Mantile}
\address[Andrea Mantile]{Laboratoire de Math\'{e}matiques de Reims, UMR9008 CNRS et Universit\'{e} de Reims Champagne-Ardenne,
Moulin de la Housse BP 1039, 51687, Reims, France,\newline
E-mail address: \texttt{andrea.mantile@univ-reims.fr}}

\author[Dmitrii Mokeev]{Dmitrii Mokeev}
\address[Dmitrii Mokeev]{HSE University,
3A Kantemirovskaya Street, St. Petersburg, 194100, Russia,\newline
E-mail address: \texttt{mokeev.ds@yandex.ru,\ dsmokeev@hse.ru}}

\subjclass{}

\keywords{Resonances, energy depending potentials, Dirac operators}

\begin{abstract}
We consider Schr\"{o}dinger equations with linearly energy-depending
potentials which are compactly supported on the half-line. We first provide
estimates of the number of eigenvalues and resonances for such complex-valued
potentials under suitable regularity assumptions. Then, we consider a specific
class of energy-dependent Schr\"{o}dinger equations without eigenvalues,
defined with Miura potentials and boundary conditions at the origin. We solve
the inverse resonance problem in this case and describe sets of iso-resonance
potentials and boundary condition parameters. Our strategy consists in
exploiting a correspondance between Schr\"{o}dinger and Dirac equations on the
half-line. As a byproduct, we describe similar sets for Dirac operators and
show that the scattering problem for Schr\"{o}dinger equation or Dirac operator with
an arbitrary boundary condition can be reduced to the scattering problem with
the Dirichlet boundary condition.
\end{abstract}

\maketitle

\section{Introduction} \label{intro}

We consider a Schr\"{o}dinger equation on the half-line%

\[ \label{intro:eq:equation}
    -y''(x)  +V(x, k)  y(x) = k^{2} y(x),\qq (x,k) \in \R_{+} \times \C,
\]
where $k\mapsto V\left(  \cdot,k\right)  $ is a linear function
\[ \label{intro:eq:equation_1}
    V(x,k) := q(x)+2kp(x)
\]
which coefficients $q,p$ have compact support in $\mathbb{R}_{+}$. When
suitable boundary conditions are assigned at the origin,
(\ref{intro:eq:equation}) corresponds to an eigenvalue/resonance problem for
an energy-dependent perturbation of the Laplacian. We will address this problem
for different classes of potentials and boundary conditions. The equation
(\ref{intro:eq:equation}) has many applications; we refer to e.g.
\cite{J76, K07b} and references therein. In particular, if $q=-p^{2}$, we
obtain the massless Klein-Gordon equation (see e.g. \cite{N83}).

In the following, the standard notation $W^{s,p}\left(  {\mathbb{R}}%
_{+}\right)  $ refers to the Sobolev space of order $s\in{\mathbb{R}}$ in
$L^{p}({\mathbb{R}}_{+},\mathbb{C})$, $p\geq1$. When: $p=2$, this is an
Hilbert space denoted as $H^{s}\left(  {\mathbb{R}}_{+}\right)  $. Its dual
identifies with: $\left(  H^{s}\left(  {\mathbb{R}}_{+}\right)  \right)
^{\prime}=\widetilde{H}^{-s}\left(  {\mathbb{R}}_{+}\right)  $, where, for any
$s\in\mathbb{R}$, $\widetilde{H}^{s}\left(  {\mathbb{R}}_{+}\right)  $ is the
closure of $\mathcal{C}_{0}^{\infty}\left(  {\mathbb{R}}_{+}\right)  $ in
$H^{s}\left(  {\mathbb{R}}\right)  $.

\subsection{The potential class $\mathcal{P}$: Eigenvalues and resonances
estimates}

Let $\gamma>0$ and define the class of potential perturbations
$$
    \mathcal{P}:= \set{(p,q) \in W^{1,1}(\mathbb{R}_{+}) \times L^{1}(\mathbb{R}_{+})
    \given \max \supp (\abs{p} + \abs{q}) = \g}.
$$
We consider the problem (\ref{intro:eq:equation}), (\ref{intro:eq:equation_1})
for $\left(  p,q\right)  \in\mathcal{P}$ and the solution $y$ fulfilling the
Dirichlet boundary condition $y(0)=0$. 
Let introduce the Jost solution $y_{+}(x,k)$ of
equation (\ref{intro:eq:equation}) under the condition
\begin{equation}
y_{+}(x,k)=e^{ikx},\quad x\geq\gamma. \label{intro:eq:jostsol_def}%
\end{equation}
In Section \ref{jost}, adapting classical results, we show that the problem
(\ref{intro:eq:equation})-(\ref{intro:eq:jostsol_def}) is well posed and for each $x\geq0$ the function
$k\mapsto y_{+}(x,k)$ is entire (see Lemma \ref{jost:lm:jost_solution}). The
corresponding Jost function is defined by
\[
\psi(k)=y_{+}(0,k),\quad k\in{\mathbb{C}}.
\]
If $\psi(k)=0$ for some $k\in{\mathbb{C}}$, then the Jost solution
$y_{+}(\cdot,k)$ satisfies the Dirichlet boundary condition. By
(\ref{intro:eq:jostsol_def}), one has: $y_{+}(\cdot,k)\in L^{2}({\mathbb{R}%
}_{+})$ if $k\in{\mathbb{C}}_{+}$. Thus, we call the zeros of the Jost
function $\psi$ in ${\mathbb{C}}_{+}$ the eigenvalues of the equation
(\ref{intro:eq:equation}) with the Dirichlet boundary condition. In this
framework, the algebraic multiplicity of an eigenvalue $k\in{\mathbb{C}}_{+}$
is the multiplicity of the zero of the Jost function $\psi$ at the point $k$.
It is easy to check that the function $\psi$ has no zeros on $\R \sm \{0\}$, 
but the origin can be an eigenvalue and it needs additional
considerations. There are also zeros of the Jost function $\psi$ in
${\mathbb{C}}_{-}$ and for such points the solution $y_{+}(\cdot,k)$ is an
exponentially increasing function. We call the zeros of $\psi$ in
${\mathbb{C}}_{-}$ the resonances of equation (\ref{intro:eq:equation}) with
the Dirichlet boundary condition. Their multiplicity is the multiplicity of
the zero of the Jost function $\psi$.

\begin{remark}
    In this section we consider the eigenvalue problem for equation 
    (\ref{intro:eq:equation}), (\ref{intro:eq:equation_1})
    with the Dirichlet boundary conditions at the origin. 
    This problem corresponds to the eigenvalues problem for
    quadratic operator pencils $T(k)$ acting as
    $$
        T(k) y = -y'' + V(\cdot, k) y - k^2 y,\qq
        \dom (T(k)) = H^2(\R_+) \cap H^1_0(\R_+).
    $$

    Let us also notice that the spectral parameter of this model
    is $k^{2}$. Nevertheless, the use of its square root is a common and
    natural choice in spectral analysis. Our definition of eigenvalues and
    resonances should be understood in this particular sense.
\end{remark}

Below we show that the Jost function $\psi(k)$ tends to a non-zero finite limit as
$k \to \iy$ in $\ol \C_+$. Since $\psi$ is entire it follows that $\psi$ has
finitely many eigenvalues counted with multiplicities in $\ol \C_+$. Moreover, if
$\psi(0)\neq0$ and $q$ is real-valued, then the number of eigenvalues counted
with multiplicities of equation (\ref{intro:eq:equation}) with the Dirichlet
boundary condition equals the number of eigenvalues of the Schr\"{o}dinger
equation
\[ \label{intro:eq:schrodinger_equation}%
    -y^{\prime\prime}\left(  x\right)  +q\left(  x\right)  y\left(  x\right)
    =k^{2}y\left(  x\right),\qq y(0)=0,\qq x\geq0,
\]
see Proposition C.1 in \cite{K07a}. Note also that an eigenvalue $k$ of
equation (\ref{intro:eq:schrodinger_equation}) is simple and has zero real
part. In contrast, the eigenvalues of equation (\ref{intro:eq:equation}) with
the Dirichlet boundary condition can have the multiplicity greater than one
and a nontrivial real part (see e.g. \cite{J72}).

Our first results are devoted to bounds for eigenvalues and resonances of
equation (\ref{intro:eq:equation}). We use the norms
\[ \label{intro:eq:norms}%
    \norm{f}_{1} = \int_{\R_+}\abs{f(x)} dx,\qq 
    \norm{f}_{2} = \left(\int_{R_{+}}\abs{f(x)}^{2} dx\right)^{1/2},\qq
    \norm{f}_{w} = \int_{\R_{+}} x \abs{f(x)} dx,
\]
and the constants
\[ \label{intro:eq:constants}%
    C_{1} = \left\Vert q\right\Vert _{1}+\left\vert p\left(  \gamma\right)
    \right\vert +\left\Vert p^{\prime}\right\Vert _{1}+\left\Vert p\right\Vert
    _{2}^{2},\qq
    C_{2} = \max\left(  \left\Vert q\right\Vert _{w}+2\left\Vert
    p\right\Vert _{w}\,,\left\Vert q\right\Vert _{1}+2\left\Vert p\right\Vert
    _{1}\right).
\]
For real $(p,q)\in{\mathcal{P}}$ we also introduce the notation
\[
    p_{+} = \max_{x \in [0, \gamma]} p(x),\qq
    p_{-} = \min_{x \in [0, \gamma]} p(x).
\]

\begin{theorem}
\label{intro:thm:eigenvalues_and_resonances_bounds}Let $(p,q)\in\mathcal{P}$.
Let $k_{o}\in{\mathbb{C}}_{+}$ be an eigenvalue and let $r_{o}\in{\mathbb{C}%
}_{-}$ be a resonance of equation (\ref{intro:eq:equation}) with the Dirichlet
boundary condition. Then they satisfy%
$$
    \left\vert k_{o}\right\vert \leq C_{1}\left(1+e^{2\left\Vert p\right\Vert _{1}
    }\right),\qq \left\vert r_{o}\right\vert \leq C_{1}e^{C_{2}+2\gamma\left\vert
    \Im r_{o}\right\vert }.
$$
Moreover, if $\left\vert k_{o}\right\vert \geq1$, then%
\begin{equation}
0\leq\Im k_{o}\leq\frac{\gamma}{2}\left(\left\Vert q\right\Vert
_{2}+2\left\Vert p\right\Vert _{2}\right)^{4}\,.
\label{intro:eq:imaginary_part_eigenvalue_bound}%
\end{equation}
If $(p,q)\in{\mathcal{P}}$ is real-valued, then we have
\begin{equation}
p_{-}\leq \Re k_{o}\leq p_{+}\,.
\label{intro:eq:real_part_eigenvalue_bound}%
\end{equation}

\end{theorem}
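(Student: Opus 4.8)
The plan is to derive everything from an integral equation ("Volterra representation") for the Jost solution $y_+(x,k)$, treating the energy-dependent term $2kp$ together with $q$ as a perturbation. Writing $y_+(x,k) = e^{ikx} + \int_x^\gamma D_k(x,t)\,\big(q(t)+2kp(t)\big)\,y_+(t,k)\,dt$ where $D_k(x,t)=\frac{\sin k(t-x)}{k}$ is the Green kernel for $-\partial_x^2-k^2$ with the appropriate Jost-type boundary data, one gets $|D_k(x,t)|\le (t-x)e^{|\Im k|(t-x)}$ and, crucially, $|k\,D_k(x,t)|\le e^{|\Im k|(t-x)}$. The first of these controls the $q$-term, the second kills the dangerous factor of $k$ in front of $p$; the remaining $p$-term, after an integration by parts in $t$ to move the derivative off $y_+$ onto $p$ (this is where $p\in W^{1,1}$ and the boundary value $p(\gamma)$ enter — see the definition of $C_1$), produces terms bounded by $\|p'\|_1$, $|p(\gamma)|$, and, from iterating the $2kp\cdot y_+$ piece once more, a $\|p\|_2^2$ contribution. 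A Grönwall/successive-approximation estimate then yields a bound of the shape $|y_+(x,k)|\le (\text{const})\,e^{|\Im k|(\gamma-x)}e^{2\|p\|_1}$ for the eigenvalue regime, and $|y_+(x,k)-e^{ikx}|$ small compared with $|k|$ once $|k|$ is large. Setting $x=0$ and using $\psi(k_o)=0$, i.e. $e^{ik_o\cdot 0}=1=-\big(\psi(k_o)-1\big)$, gives $1\le$ (the error bound), which upon rearrangement is exactly $|k_o|\le C_1(1+e^{2\|p\|_1})$; for a resonance $r_o\in\C_-$ one has $\Im r_o<0$ so the exponential $e^{|\Im r_o|(\gamma-x)}$ survives and one picks up the stated $e^{C_2+2\gamma|\Im r_o|}$, the $C_2$ coming from a slightly more careful weighted estimate ($\|\cdot\|_w$ terms) used to handle the $t-x$ and $t$ factors uniformly.

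For the imaginary-part bound \er{intro:eq:imaginary_part_eigenvalue_bound} I would not use the crude Volterra bound (it only gives growth in $|\Im k|$), but rather test the equation against $\overline{y_+}$: if $\psi(k_o)=0$ then $y_+(\cdot,k_o)=:u\in L^2(\R_+)\cap H^1$, $u(0)=0$, and integrating $-u''+(q+2k_op)u=k_o^2u$ against $\bar u$ over $\R_+$ gives $\|u'\|_2^2+\int q|u|^2+2k_o\int p|u|^2=k_o^2\|u\|_2^2$. Taking imaginary parts isolates $\Im k_o$ in terms of $\int p|u|^2$, $\Im\int q|u|^2$ and $(\Re k_o)(\Im k_o)\|u\|_2^2$; combined with the real part and with the Sobolev/interpolation inequality $\|u\|_\infty^2\le \|u\|_2\|u'\|_2$ on the support $[0,\gamma]$ (using $u(0)=0$) and the normalization that on $[\gamma,\infty)$, $u(x)=\psi$-multiple of $e^{ik_ox}$ which forces $\|u\|_2^2\ge$ (something controlled by $|u(\gamma)|^2/\Im k_o$), one extracts $\Im k_o\le \frac{\gamma}{2}(\|q\|_2+2\|p\|_2)^4$. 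The power $4$ strongly suggests two applications of $\|u\|_\infty^2\le\|u\|_2\|u'\|_2$ (equivalently squaring an $H^1$-type estimate), so I would track the quartic dependence through that route.

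For the real-part sandwich \er{intro:eq:real_part_eigenvalue_bound} in the real-valued case, the same quadratic-form identity $\|u'\|_2^2+\int q|u|^2+2k_o\int p|u|^2=k_o^2\|u\|_2^2$ is the key, now with $q,p$ real. Write $k_o=a+ib$; taking the imaginary part gives $2b\big(\int p|u|^2\big)+ \big(-2ab\big)\|u\|_2^2 + (\text{boundary term from }x=\gamma)=0$ — more precisely, the boundary term from integrating by parts is $\overline{u(\gamma)}u'(\gamma)=-ik_o|\psi|^2\cdot(\text{positive})$ whose imaginary part, combined with $\|u\|_{L^2(\gamma,\infty)}^2=|\psi|^2/(2b)$, cancels against the $-2ab\|u\|_2^2$ coming from that tail, leaving $2b\big(\int_0^\gamma p|u|^2 - a\int_0^\gamma|u|^2\big)=0$; since $b=\Im k_o>0$ for an eigenvalue, this forces $\Re k_o=a=\big(\int_0^\gamma p|u|^2\big)/\big(\int_0^\gamma |u|^2\big)$, a weighted average of $p$ over $[0,\gamma]$, which lies in $[p_-,p_+]$. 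I expect the bookkeeping of the boundary terms at $x=\gamma$ — making sure the tail contribution from $[\gamma,\infty)$ exactly removes the $\Re k_o\,\Im k_o\,\|u\|_2^2$ term so that only the compactly supported average remains — to be the one genuinely delicate point; everything else is Grönwall-type estimation on a Volterra integral equation, for which the kernel bounds $|kD_k|\le e^{|\Im k|(t-x)}$ and the integration by parts absorbing $p'$ are the essential tricks.
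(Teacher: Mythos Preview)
Your approach to the modulus bounds $|k_o|\le C_1(1+e^{2\|p\|_1})$ and $|r_o|\le C_1e^{C_2+2\gamma|\Im r_o|}$ is close in spirit to the paper's, but misses one key device. The paper does not run Gr\"onwall directly on the Volterra equation for $y_+$; instead it writes $y_+(x,k)=e^{ikx}\bigl(u(x)+v(x,k)\bigr)$ with $u(x)=\exp\bigl(i\int_x^\gamma p\bigr)$, so that $u$ solves $u'=-ipu$, $|u|\equiv 1$, and absorbs the non-decaying contribution of the $2kp$ term. The remainder $v$ then satisfies an integral equation whose free term $v_o$ already carries the crucial $C_1/|k|$ decay (the integration by parts producing $p'$, $p(\gamma)$, $\|p\|_2^2$ happens here). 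Since $\psi(k_o)=0$ forces $|v(0,k_o)|=|u(0)|=1$, the estimate $|v(0,k)|\le (C_1/|k|)e^{\|q\|_1/|k|+2\|p\|_1}$ inverts (via the Lambert $W$ function) to the stated bound. In your scheme, without extracting $u$, the $2kp$ piece contributes an $O(\|p\|_1)$ term that does \emph{not} decay in $|k|$, so the step ``$|y_+(0,k)-1|<1$ for large $|k|$'' is not justified as written.

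For the bound \er{intro:eq:imaginary_part_eigenvalue_bound} on $\Im k_o$, the paper takes a different route altogether: Birman--Schwinger. Setting $Y(k)=V^{1/2}(H_o-k^2)^{-1}|V|^{1/2}$ and bounding its Hilbert--Schmidt norm via
$\|Y(k)\|^2\le \tfrac{1}{4|k|^2}\,I(k)^{1/2}\,\|V(\cdot,k)\|_2^2$
with $I(k)=\int_{[0,\gamma]^2}|G_o|^4\le 8\gamma/\Im k$, one gets $\|Y(k)\|\le \gamma^{1/4}(\|q\|_2+2\|p\|_2)/(2\Im k)^{1/4}$ for $|k|\ge 1$; the condition $\|Y(k)\|<1$ then excludes $k$ as an eigenvalue, yielding the fourth power directly. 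Your quadratic-form approach is fine for real potentials, but for complex $p,q$ the imaginary part of the form identity mixes $\Re k_o$, $\Im k_o$, $\Im\!\int q|u|^2$, $\Im\!\int p|u|^2$ in a way that does not isolate $\Im k_o$; the suggested route to the quartic exponent through two applications of $\|u\|_\infty^2\le \|u\|_2\|u'\|_2$ is not substantiated, and I would count this step as a genuine gap.

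For \er{intro:eq:real_part_eigenvalue_bound} your idea is correct and matches the paper, but you are making it harder than necessary. Since the eigenfunction $e_o\in L^2(\R_+)$ with $e_o(0)=0$ and $e_o(x)\to 0$, one integrates by parts over all of $\R_+$ with no boundary terms at all; taking the imaginary part (with $p,q$ real) gives directly $\Re k_o=\int_{\R_+}p\,|e_o|^2\big/\|e_o\|_2^2$, a weighted average of $p$ (extended by zero), hence in $[p_-,p_+]$. No splitting at $x=\gamma$ or tail-cancellation bookkeeping is needed.
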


\begin{remark}
    There are many results about bounds of eigenvalues of
    the Schr\"{o}dinger operator with complex-valued potentials given by
    (\ref{intro:eq:schrodinger_equation}). For example, each eigenvalue $k_{o}$
    of this operator satisfies $|k_{o}|\leq\Vert q\Vert$, see
    \cite{FLS11}.
\end{remark}

Now we estimate the number of resonances and eigenvalues in a large disk. Let
$N(r,f)$ be the number of zeros of a function $f$ counted with multiplicities
in the disc of a radius $r$ centered at the origin.

\begin{theorem} \label{intro:thm:resonances_estimate}
    Let $(p,q)\in{\mathcal{P}}$. Then for any $r\geq0$ we have
    \[
        {\mathcal{N}}(r,\psi)\leq\frac{4\gamma}{\pi\log2}r+C_{3},
    \]
    where
    \begin{equation}
        C_{3}=3\gamma C_{1}e^{2\Vert p\Vert_{1}}+3. \label{intro:eq:constant}%
    \end{equation}
    Moreover, if $(p,q)$ is real-valued, then the number of eigenvalues
    ${\mathcal{N}}_{+}$ of equation (\ref{intro:eq:equation}) with the Dirichlet
    boundary condition satisfies
    \[
        {\mathcal{N}}_{+}\leq{\mathcal{N}}(R,\psi),\qq 
        \text{where $R:=\max\left(
        1,\max(p_{+},p_{-})+\frac{\gamma}{2}\left(\Vert q\Vert_{2}+2\Vert p\Vert_{2}%
        \right)^{4}\right)$}.
    \]
\end{theorem}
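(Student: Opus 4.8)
The plan is to derive the first estimate from Jensen's formula applied to the entire Jost function $\psi$, fed by the exponential‑type information produced by the Volterra analysis of Section~\ref{jost}, and to read off the second estimate directly from the eigenvalue localization in Theorem~\ref{intro:thm:eigenvalues_and_resonances_bounds}. For the first part I would use two facts about $\psi$ that I expect from Section~\ref{jost}: a global bound of exponential type $2\gamma$ concentrated in the lower half‑plane,
\[
|\psi(k)|\le B\,e^{2\gamma(\Im k)_-},\qquad (\Im k)_-:=\max(0,-\Im k),\qquad k\in\C,
\]
with $B$ explicit (and $\log B$ estimated in terms of $\gamma$, $C_1$ and $\|p\|_1$; the length $2\gamma$ is the doubled support length, and the factor $e^{2\|p\|_1}$ is the price of the linear term $2kp$ in $V$, which the kernel $\tfrac{\sin k(t-x)}{k}$ converts into an $O(1)$ contribution); and the non‑vanishing limit $\psi(k)\to\psi_\infty$ as $k\to\infty$ in $\ol\C_+$, where one computes $\psi_\infty=e^{\,i\int_0^\gamma p}$, so $|\psi_\infty|\ge e^{-\|p\|_1}>0$, together with an explicit rate $|\psi(k)-\psi_\infty|=O(|k|^{-1})$ on $\ol\C_+$.

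Because $\psi$ may vanish at the origin (an eigenvalue at $0$ is allowed), I would apply Jensen's formula with radius doubling centered not at $0$ but at the point $iy_0$ on the positive imaginary axis, with $y_0$ chosen explicitly, in terms of the rate above and $\|p\|_1$, so that $|\psi(iy_0)|\ge\tfrac12|\psi_\infty|$. Writing $n(t)$ for the number of zeros of $\psi$ in $\{|k-iy_0|\le t\}$ and $J(t)=\tfrac1{2\pi}\int_0^{2\pi}\log|\psi(iy_0+te^{i\theta})|\,d\theta$, and setting $\rho:=r+y_0$, monotonicity of $n$ together with Jensen's formula gives
\[
N(r,\psi)\log 2\le n(\rho)\log 2\le\int_\rho^{2\rho}\frac{n(t)}t\,dt=J(2\rho)-J(\rho),
\]
where $N(r,\psi)\le n(\rho)$ since $\{|k|\le r\}\subset\{|k-iy_0|\le\rho\}$. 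Inserting the exponential‑type bound in $J(2\rho)$, and using $(\Im(iy_0+2\rho e^{i\theta}))_-\le 2\rho(\sin\theta)_-$ with $\tfrac1{2\pi}\int_0^{2\pi}(\sin\theta)_-\,d\theta=\tfrac1\pi$, gives $J(2\rho)\le\log B+\tfrac{4\gamma\rho}\pi$; while $J(\rho)\ge\log|\psi(iy_0)|\ge\log|\psi_\infty|-\log 2\ge-\|p\|_1-\log 2$. Dividing by $\log 2$ yields $N(r,\psi)\le\frac{4\gamma}{\pi\log 2}r+\frac{1}{\log 2}\big(\log B+\tfrac{4\gamma y_0}{\pi}+\|p\|_1+\log 2\big)$, and substituting the explicit $B$ and $y_0$ and estimating gives the constant $C_3$ of \eqref{intro:eq:constant}.

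For the second estimate, let $(p,q)\in\mathcal{P}$ be real‑valued and let $k_o\in\C_+$ be an eigenvalue, i.e.\ a zero of $\psi$ in the open upper half‑plane. If $|k_o|<1$ then $|k_o|<1\le R$; if $|k_o|\ge 1$, then by \eqref{intro:eq:imaginary_part_eigenvalue_bound} and \eqref{intro:eq:real_part_eigenvalue_bound} we have $\Im k_o\in[0,\tfrac\gamma2(\|q\|_2+2\|p\|_2)^4]$ and $\Re k_o\in[p_-,p_+]$, which together give $|k_o|\le R$. Hence all $\mathcal{N}_+$ eigenvalues lie in the closed disc of radius $R$; since $\mathcal{N}(R,\psi)$ counts every zero of $\psi$ in that disc (eigenvalues, together with any resonances and a possible zero at the origin), we conclude $\mathcal{N}_+\le\mathcal{N}(R,\psi)$.

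The one genuinely delicate ingredient is the sharp exponential‑type bound on $\psi$ in $\C_-$: the crude Gronwall estimate for the Jost solution there yields only $|\psi(k)|\le\exp\big(c\,e^{2\gamma|\Im k|}\big)$, a doubly exponential bound useless for counting zeros. To get the correct bound one must exploit the cancellation in the Born series for $y_+$, observing that the product of the kernels along a chain $0<t_1<\dots<t_n<\gamma$ together with the free factor $e^{\,ikt_n}$ carries total exponential growth at most $e^{2\gamma|\Im k|}$, so the full series sums to $e^{2\gamma(\Im k)_-}$ times a constant of order $C_1 e^{2\|p\|_1}$. This is exactly what Section~\ref{jost} supplies; granting it, the Jensen argument above and the bookkeeping of the explicit constants are routine, and the possible vanishing of $\psi$ at the origin is handled precisely by centering Jensen's formula at $iy_0$.
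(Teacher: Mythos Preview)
Your approach is essentially the paper's: Jensen's formula for $\psi$ centered at a point $iy_0$ on the positive imaginary axis (the paper takes $y_0=r_o:=3C_1e^{2\|p\|_1}$, chosen exactly so that $|v(0,ir_o)|\le\tfrac12$), combined with the radius--doubling trick and the exponential--type bounds $|v(0,k)|\le\omega_o(k)e^{2\gamma k_-+\omega_1(k)}$ from Section~\ref{jost}. Your treatment of the second claim via \eqref{intro:eq:imaginary_part_eigenvalue_bound} and \eqref{intro:eq:real_part_eigenvalue_bound} is identical to the paper's.

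The one place where your sketch falls short of the stated constant is the Jensen integral. Instead of your crude bound $(\Im(iy_0+2\rho e^{i\theta}))_-\le 2\rho(\sin\theta)_-$, the paper evaluates $\tfrac{1}{2\pi}\int_0^{2\pi}2\gamma(ir_o+re^{i\theta})_-\,d\theta$ exactly, obtaining $\tfrac{2\gamma}{\pi}\sqrt{r^2-r_o^2}+\tfrac{2\gamma}{\pi}r_o\arcsin\tfrac{r_o}{r}-\gamma r_o$; the extra $-\gamma r_o$ partially cancels the $\tfrac{4\gamma r_o}{\pi}$ term after the change of variables. With your estimate the coefficient of $\gamma C_1e^{2\|p\|_1}$ in the final constant comes out as $\tfrac{12}{\pi\log 2}\approx 5.5$ rather than the paper's $\tfrac{3(4/\pi-2/3)}{\log 2}\approx 2.6$, so you get a bound of the correct shape but not the specific $C_3=3\gamma C_1e^{2\|p\|_1}+3$. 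To recover the stated $C_3$ you need the exact integral, not the majorization by $2\rho(\sin\theta)_-$.
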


\begin{remark}
    An estimate on the number of eigenvalues of
    Schr\"{o}dinger operator on the half-line with a complex-valued potential was
    obtained in \cite{FLS16}. The case of compactly supported potentials was
    considered in \cite{K20}.
\end{remark}

\subsection{The potential class $\mathcal{Q}$: Inverse
resonances problems} \label{invres}

Let $\gamma>0$ and define the class of potential perturbations
$$
    \mathcal{Q} :=
    \set*{(p,q) \in L^{2}(\mathbb{R}_{+}) \times \left(H^{1}\left(\mathbb{R}_{+}\right)\right)^{\prime}
    \given q=u^{\prime}+u^{2},\, u\in L^{2}(\mathbb{R}_{+}),\, \max \supp \left(  \left\vert p\right\vert +\left\vert u\right\vert \right)
    = \g}.
$$
Since the potential $q$ has the form
$q=u^{\prime}+u^{2}$ for some $u\in L^{2}(0,\gamma)$, it is a
distribution and it is called a Miura potential. It is known that there exists
a unique $u\in L^{2}(0,\gamma)$ such that $q=u^{\prime}+u^{2}$, see
\cite{KPST05}. One of the usual way to introduce a solution of differential
equation with such distributional potentials is to use the
\textit{quasi-derivative} $y^{[1]}=y^{\prime}-uy$ (see, e.g., \cite{W87}). In
this spirit, we say that $y$ is a solution of equation
(\ref{intro:eq:equation}), (\ref{intro:eq:equation_1}) if $y^{[1]}$ is locally
absolutely continuous and the following identity holds true
$$
-\left(  y^{[1]}\right)  ^{\prime}-u\,y^{[1]}+2kp\,y=k^{2}y\,.
$$

We consider the problem (\ref{intro:eq:equation}), (\ref{intro:eq:equation_1})
with $\left(  p,q\right)  \in\mathcal{Q}$ and the boundary condition%
\[ \label{intro:eq:boundary_condition}%
    y^{[1]}(0)\sin\alpha+ky(0)\cos\alpha=0
\]
parametrized by $\alpha\in[0,\pi)$.
If $\alpha=0$, then we obtain the Dirichlet boundary condition. Note that if
$q\in L^{2}({\mathbb{R}}_{+})$, then $u\in C({\mathbb{R}}_{+})$ and the
boundary condition in (\ref{intro:eq:boundary_condition}) recasts as%
\[
y^{\prime}(0)\sin\alpha+ky(0)\cos\alpha=-u(0)\sin\alpha.
\]
\begin{remark}
    The eigenvalues problem for equation (\ref{intro:eq:equation}), (\ref{intro:eq:equation_1}) with
    boundary condition (\ref{intro:eq:boundary_condition}) corresponds to the eigenvalues problem
    for the quadratic operator pencil $T(k)$ considered in \cite{HM20} and acting as
    $$
        T(k) y = -\left(  y^{[1]}\right)  ^{\prime}-u\,y^{[1]}+2kp\,y - k^{2}y.
    $$
\end{remark}

As above, we introduce the Jost solution $y_{+}(x,k)$ of equation
(\ref{intro:eq:equation}) satisfying $y_{+}(x,k)=e^{ikx}$ when $x\geq\gamma$.
Due Lemma \ref{trans:lm:jost_solution}, $y_{+}(\cdot,k)$ exists for any
$k\in{\mathbb{C}}$ and, for each $x\geq0$, the function $k\mapsto y_{+}(x,k)$
is entire. Let us fix $\alpha\in[0,\pi)$ and introduce the Jost
function
$$
    \psi_{\alpha}(k)=y_{+}^{[1]}(0,k)\sin\alpha+ky_{+}(0,k)\cos\alpha,\qq k \in {\mathbb{C}}.
$$
This is an entire function and its zeros in $\overline{\mathbb{C}}_{+}$ are
the eigenvalues of the problem (\ref{intro:eq:equation}), (\ref{intro:eq:equation_1}), (\ref{intro:eq:boundary_condition}), while the
zeros in ${\mathbb{C}}_{-}$ are the resonances. Their multiplicities are the
multiplicities of the corresponding zeros of $\psi_{\alpha}$. When
$q=u^{\prime}+u^{2}$ for some $u\in L^{2}({\mathbb{R}}_{+})\cap L^{1}%
({\mathbb{R}}_{+})$, it has been shown in \cite{HM20} that the problem
(\ref{intro:eq:equation}), (\ref{intro:eq:equation_1}),
(\ref{intro:eq:boundary_condition}) has no eigenvalues. Hence, for
$(p,q)\in{\mathcal{Q}}$, this problem has no eigenvalues and the Jost function
$\psi_{\alpha}$ has no zeros in $\overline{\mathbb{C}}_{+}$. 

For any
$\alpha\in[0,\pi)$, we introduce the scattering matrix
$$
    S_{\alpha}(k)=\frac{\overline{\psi_{\alpha}}(k)}{\psi_{\alpha}(k)},\qq
    k\in{\mathbb{R}}.
$$
Since $\psi_{\alpha}$ is entire, it follows that $S_{\alpha}$ admits a
meromorphic continuation onto ${\mathbb{C}}$ and its poles are the resonances
of our problem.

For any continuous function $f:{\mathbb{R}}\rightarrow\mathbb{C}%
\backslash\{0\}$ with finite non-zero limits at infinity, we introduce the
winding number
\[
\text{ind}\,f=\lim_{x\rightarrow+\infty}\ln f(x)-\lim_{x\rightarrow-\infty}\ln
f(x)\,,
\]
and define the following classes of the scattering matrices.\smallskip

\begin{definition}
    Let
    \[
        {\mathcal{S}}=\bigcup_{\beta\in[0,\pi)}{\mathcal{S}}_{\beta}\,,
    \]
    where, for each $\beta\in[0,\pi)$, $S_{\beta}$ is
    the set of all meromorphic functions $S$ such that:
    \begin{enumerate}
        \item $\left\vert S(k)\right\vert =1$ for all $k\in\mathbb{R}$;
        \item $\ind S = 0$;
        \item $S$ admit the following representation
        $$
            S(k) = e^{2i\beta} + \int_{-\gamma}^{+\infty} F(s)e^{2iks} ds,\qq k \in {\mathbb{C}},
        $$
        for some $F \in L^{2}(\mathbb{R},\mathbb{C})\cap L^{1}(\mathbb{R},\mathbb{C})$
        such that: $\min \supp F = -\gamma$.
    \end{enumerate}
\end{definition}

Our first result is a solution of inverse scattering problem for equation
(\ref{intro:eq:equation}).

\begin{theorem} \label{intro:thm:scattering_inverse_problem}
    The mapping $(p,q,\alpha)\mapsto
    S_{\alpha}(\cdot,p,q)$ from ${\mathcal{Q}}\times[0,\pi)$ into
    ${\mathcal{S}}$ is a bijection. Moreover, we have $S_{\alpha}(\cdot
    ,p,q)\in{\mathcal{S}}_{\beta}$, where%
    $$
        \beta=-\alpha-\varphi(0) \mod \pi,\qq \varphi(0)=\int_{0}^{+\infty}p(x) dx.
    $$
\end{theorem}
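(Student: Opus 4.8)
The plan is to transfer the whole problem to the half-line Dirac operator, where the inverse scattering problem is classical, and then to keep track of the data under the Schr\"odinger--Dirac correspondence; the formula for $\beta$ will drop out of the large-$k$ behaviour of the Jost function.

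\emph{Step 1 (reduction to a Dirac problem).} First I would use the transformation underlying Lemma~\ref{trans:lm:jost_solution}: to $(p,q)\in\mathcal{Q}$ with $q=u'+u^{2}$ it associates a half-line Dirac operator whose matrix potential is a compactly supported (right endpoint $\gamma$) expression in $u$ and $p$, and the boundary condition~(\ref{intro:eq:boundary_condition}) with parameter $\alpha$ is carried to a self-adjoint boundary condition for the Dirac system. The two components of $y_{+}(\cdot,k)$ are linear combinations of the entries of the Dirac Jost solution, so $\psi_{\alpha}(k)$ agrees, up to a constant nonzero factor, with the Dirac Jost function, and hence $S_{\alpha}$ coincides with the (scalar) Dirac scattering matrix. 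By uniqueness of the Miura representative $u$ (\cite{KPST05}) this assignment is injective, and one checks it maps $\mathcal{Q}\times[0,\pi)$ bijectively onto the class of compactly supported $L^{2}$ Dirac potentials with right endpoint $\gamma$ together with a boundary parameter in $[0,\pi)$; this reduces the theorem to the inverse scattering problem for the half-line Dirac operator, handled by classical means.

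\emph{Step 2 (the image lies in $\mathcal{S}_{\beta}$).} Fix $(p,q,\alpha)\in\mathcal{Q}\times[0,\pi)$. As recalled above, $\psi_{\alpha}$ has no zeros in $\overline{\C}_{+}$, so $S_{\alpha}$ is meromorphic on $\C$ with poles only at the resonances, and since the potentials are real $\overline{\psi_{\alpha}(k)}=\psi_{\alpha}(-k)$ for $k\in\R$, whence $\abs{S_{\alpha}(k)}=1$ on $\R$. Writing $\Theta_{\alpha}(k):=\psi_{\alpha}(k)/k$, one has $S_{\alpha}(k)=\overline{\Theta_{\alpha}(k)}/\Theta_{\alpha}(k)$ for real $k$; by the Jost-solution asymptotics (WKB with local frequency $k-p(x)$) $\Theta_{\alpha}$ is analytic and zero free on $\overline{\C}_{+}$ with the finite nonzero limit $e^{i(\alpha+\varphi(0))}$ at infinity, so the argument principle gives $\ind S_{\alpha}=0$, while the limit
\[
S_{\alpha}(k)\longrightarrow e^{-2i(\alpha+\varphi(0))}\qquad(\abs{k}\to\infty)
\]
identifies $\beta=-\alpha-\varphi(0)\bmod\pi$. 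Finally, the Paley--Wiener/transformation-operator representation of $\psi_{\alpha}$, together with its zero-freeness in $\overline{\C}_{+}$ and the sharp support condition $\max\supp(\abs{p}+\abs{u})=\gamma$, gives condition~(iii) with $F\in L^{1}\cap L^{2}$ and $\min\supp F=-\gamma$ (a smaller support would, via Step~1, shrink the support of the Dirac potential); the behaviour of $\Theta_{\alpha}$ at $k=0$ is harmless since $\psi_{\alpha}(0)\neq0$, so $S_{\alpha}$ is continuous and unimodular there. Hence $S_{\alpha}\in\mathcal{S}_{\beta}$.

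\emph{Step 3 (bijectivity).} For injectivity, by Step~1 the value of $S_{\alpha}$ fixes the Dirac scattering matrix, which by the classical inverse scattering theory for the half-line Dirac operator determines the Dirac potential and the boundary parameter; running Step~1 backwards recovers $u$, $q=u'+u^{2}$, $p$ and $\alpha$. For surjectivity, given $S\in\mathcal{S}$ read off $\beta\in[0,\pi)$ from $e^{2i\beta}=\lim_{k\to\infty}S(k)$, solve the Dirac inverse scattering problem to get a Dirac operator with scattering matrix $S$, transfer back by Step~1 to obtain $p\in L^{2}(\R_{+})$, $u\in L^{2}(\R_{+})$, $q=u'+u^{2}$, and set $\alpha:=-\beta-\varphi(0)\bmod\pi\in[0,\pi)$ with $\varphi(0)=\int_{0}^{+\infty}p(x)\,dx$; conditions (i)--(iii) are exactly what guarantees that the reconstructed Dirac data lie in the image of Step~1 --- in particular $\min\supp F=-\gamma$ forces $\max\supp(\abs{p}+\abs{u})=\gamma$ --- so $(p,q)\in\mathcal{Q}$, and by Step~2 the scattering matrix of this triple is $S$.

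\emph{Main obstacle.} The estimates on $y_{+}$ and the verification of conditions (i)--(ii) are routine; the heart of the argument is Step~3, namely proving that the Dirac operator produced by inverse scattering genuinely lies in the image of $\mathcal{Q}\times[0,\pi)$ under the Step~1 transformation --- that its matrix potential has the energy-dependent Miura structure, the correct $L^{2}$ regularity, and precisely the support endpoint $\gamma$ rather than a proper subinterval. This demands a sharp description of that image and its matching with the natural domain of the Dirac inverse scattering map whose range is $\mathcal{S}$; the endpoint bookkeeping linking $\min\supp F=-\gamma$ with $\max\supp(\abs{p}+\abs{u})=\gamma$ on the two sides is the delicate point. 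A secondary technicality is the apparent singularity of $\Theta_{\alpha}=\psi_{\alpha}/k$ at the origin, which must be shown not to affect $S_{\alpha}$.
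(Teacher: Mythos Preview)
Your approach is essentially the paper's: transfer to the half-line Dirac operator via the map $\mathcal{T}$ of~(\ref{intro:eq:potential_mapping}), invoke the Dirac inverse scattering bijection (Theorem~\ref{dirac:thm:inverse_jost}), and read off $\beta$ from the asymptotics. The paper packages Step~1 as Lemma~\ref{trans:lm:jost_functions} and Lemma~\ref{trans:lm:potential_mapping}, and gets condition~(iii) for free from $\mathbf{S}_{\delta}\in\mathcal{S}_{\delta}$ rather than via a separate Paley--Wiener argument.

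Two small corrections are needed. First, in Step~1 you say $\psi_{\alpha}$ agrees with the Dirac Jost function up to a nonzero \emph{constant}; in fact (see~(\ref{trans:eq:11})) the relation is $\psi_{\alpha}(k)=ie^{i\beta}k\,\boldsymbol{\psi}_{\delta}(k)$, so there is an extra factor of $k$. You effectively correct this in Step~2 by passing to $\Theta_{\alpha}=\psi_{\alpha}/k$, but the wording should match. Second, precisely because of that factor, $\psi_{\alpha}(0)=0$ always, contrary to your parenthetical ``$\psi_{\alpha}(0)\neq0$''; what saves $S_{\alpha}$ at the origin is that $\Theta_{\alpha}(0)=ie^{i\beta}\boldsymbol{\psi}_{\delta}(0)\neq0$, since the Dirac Jost function is zero-free in $\overline{\mathbb{C}}_{+}$. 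With these fixes your sketch goes through and coincides with the paper's argument; the ``main obstacle'' you flag (matching the support endpoint $\gamma$ on both sides) is exactly the content of Lemma~\ref{trans:lm:potential_mapping}.
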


\begin{remark}
    Note that the scattering matrices for equation
    (\ref{intro:eq:equation}) with the same boundary conditions may have different
    asymptotics at infinity. That is for some $S\in S_{\beta}$ and
    $\beta\in[0,\pi)$ we cannot simply determine the corresponding
    boundary condition since it depends on $\varphi(0)$. Below we show,
    that resonances are devoid of this disadvantage.
\end{remark}

Now, we describe the inverse resonances problem for equation
(\ref{intro:eq:equation}) with boundary condition
(\ref{intro:eq:boundary_condition}).

\begin{definition}
    Let $\mathbf{s}$ be a set of all sequences
    $(r_{n})_{n\geq1}$ of points in $\mathbb{C}_{-}$ arranged by
    \begin{equation}%
        \begin{cases}
            0\leq\left\vert r_{1}\right\vert \leq\left\vert r_{2}\right\vert \leq\ldots,\\
            \Re r_{n} \leq \Re r_{n+1},\quad\text{if }\left\vert
            r_{n}\right\vert \leq\left\vert r_{n+1}\right\vert \,.
        \end{cases}
        \label{intro:eq:arrangement}%
    \end{equation}
\end{definition}

For $\alpha\in[0,\pi)$, we introduce the mapping $\varrho^{\alpha}: \cQ \to \mathbf{s}$ as
$$
    \varrho^{\alpha}(p,q):= \text{sequence of zeros of $\psi_{\alpha}\left(
    \cdot,p,q\right)$ in $\mathbb{C}_{-}$ arranged by
    (\ref{intro:eq:arrangement}).}
$$

According to this definition, $\varrho^{\alpha}(p,q)$ is a sequence of
resonances of equation (\ref{intro:eq:equation}) with the boundary condition
(\ref{intro:eq:boundary_condition}). In Section \ref{trans}, we show that, for
potentials from ${\mathcal{Q}}$, the problem (\ref{intro:eq:equation}%
), (\ref{intro:eq:equation_1}), (\ref{intro:eq:boundary_condition}) recasts as
a generalized eigenvalue problem for a suitable Dirac operator on the
half-line with a compactly supported potential depending on $(p,q)\in
\mathcal{Q}$ and boundary condition, induced from
(\ref{intro:eq:boundary_condition}) and parametrized by 
$\alpha \in [0,\pi)  $. The resonances of Dirac operators were considered in
\cite{K14, IK14a, IK14b, KM21} and all properties of resonances obtained in
these papers can be carried over to our case. In Section \ref{dirac} we focus
on the inverse resonance problem for a class of Dirac operators on the
half-line consistent with those obtained by transforming our Schr\"{o}dinger
problem. In this framework, we provide characterizftion of iso-resonance families and
solve the inverse-resonance problem in the self-adjoint case. Then,
building on the transformation linking the Dirac and the Schr\"{o}dinger
problems, we solve the inverse resonances Schr\"{o}dinger problem by
exploiting the solution of the inverse resonances problem for Dirac operators
with compactly supported potentials under the condition that the equation
(\ref{intro:eq:equation}) has no eigenvalues.

We introduce the set
$$
    \mathcal{X} := \set*{f \in L^{2}(\mathbb{R}_{+},\mathbb{C}) \given \max \supp f = \gamma}.
$$
\begin{definition}
    $\mathcal{R}$ is a set of all sequences $(r_{n})_{n\geq1}$ of zeros,
        counted with multiplicity and arranged by (\ref{intro:eq:arrangement}), of
        some entire functions $\psi$, which has no zeros
        in $\overline{\mathbb{C}}_{+}$ and admit representation
        \begin{equation}
        \psi(k)=1+\int_{0}^{\gamma}h(s)e^{2iks}ds\,,\quad k\in{\mathbb{C}},\quad
        h\in{\mathcal{X}}.\label{psi_id}%
        \end{equation}
\end{definition}
It is worth mentioning that the function $\psi$ from this definition is a Jost
function of a Dirac operator with compactly supported potential and Dirichlet
boundary condition. Hence, $\mathcal{R}$ corresponds to the class of
resonances of such operators.\smallskip

\begin{theorem} \label{intro:thm:resonances_inverse_problem}
    \begin{enumerate}
        \item For any $\a \in [0,\pi)$, the mapping $\r^{\a}: \cQ \to \cR$ is a bijection.
        \item For any $(\z,\b) \in \cR \ts [0,\pi)$ there exists a unique
        $(p,q,\a) \in \cQ \ts [0,\pi)$ such that
        $\r^{\a}(p,q) = \z$ and $S_{\a}(\cdot,p,q) \in \cS_{\b}$.
    \end{enumerate}
\end{theorem}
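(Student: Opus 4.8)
The plan is to reduce everything to the already-announced correspondence between the energy-dependent Schrödinger problem on $\mathcal{Q}\times[0,\pi)$ and a Dirac problem on the half-line, and then to invoke the inverse-resonance theory for compactly supported Dirac operators. First I would record the transformation carefully: for $(p,q)\in\mathcal{Q}$ with $q=u'+u^2$, the substitution turning \eqref{intro:eq:equation}, \eqref{intro:eq:equation_1} into a first-order system produces a Dirac operator $D$ on $L^2(\R_+,\C^2)$ with a compactly supported potential built from $(p,u)$ and supported on $[0,\gamma]$, together with a boundary condition at the origin parametrized by some $\alpha'\in[0,\pi)$ induced from $\alpha$ and from $\varphi(0)=\int_0^\infty p\,dx$. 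Under this map, the Jost function $\psi_\alpha(\cdot,p,q)$ is (up to a nonvanishing entire factor that is irrelevant for zeros) the Jost function of $D$ with its Dirichlet-type boundary condition, which by the cited Dirac results \cite{K14,IK14a,IK14b,KM21} has the representation \eqref{psi_id} with $h\in\mathcal{X}$; crucially, $\max\supp h=\gamma$ encodes that the Dirac potential genuinely reaches $\gamma$. Since the problem has no eigenvalues (\cite{HM20}, as recalled in the excerpt), the resulting $\psi$ has no zeros in $\ol{\C}_+$, so the sequence $\varrho^\alpha(p,q)$ indeed lands in $\mathcal{R}$, and the map $\varrho^\alpha:\mathcal{Q}\to\mathcal{R}$ is well defined.

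For part (i), injectivity of $\varrho^\alpha$: a sequence in $\mathcal{R}$ comes from a function $\psi$ of the form \eqref{psi_id} with no zeros in $\ol{\C}_+$, and such a $\psi$ is determined by its zeros up to a Blaschke/canonical-product argument — more precisely, $\psi$ is an entire function of exponential type $\le 2\gamma$ (by \eqref{psi_id}), of bounded characteristic in $\C_+$, with no zeros in $\ol{\C}_+$, hence by a Hadamard-type factorization it is determined by its zero set together with the normalization $\psi(k)\to 1$ as $|k|\to\infty$ in $\ol{\C}_+$ (which also follows from Riemann–Lebesgue applied to \eqref{psi_id}). So the zero sequence recovers $\psi$, and then one recovers $h$, then the Dirac potential via the inverse problem for Dirac Jost functions \cite{IK14a,KM21}, and finally $(p,q)$ by inverting the transformation. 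Surjectivity: given $\zeta\in\mathcal{R}$, take the associated $\psi$ with representation \eqref{psi_id}; this is a legitimate Dirac Jost function, so there is a compactly supported Dirac potential realizing it with Dirichlet boundary condition; transporting back through the (invertible on $\mathcal{Q}$) correspondence yields $(p,q)\in\mathcal{Q}$ with $\varrho^\alpha(p,q)=\zeta$. The fixed $\alpha$ just selects which boundary condition on the Dirac side we use; for each fixed $\alpha$ the composite is a bijection onto $\mathcal{R}$, because the class $\mathcal{R}$ was defined exactly to match the Dirichlet-Dirac resonance set and the $\alpha$-dependence is absorbed in the transformation.

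Part (ii) is the finer statement pairing a resonance sequence with a prescribed scattering class. The point is that resonances alone do not see $\varphi(0)$ (that is the content of the Remark after Theorem \ref{intro:thm:scattering_inverse_problem}), so given $(\zeta,\beta)\in\mathcal{R}\times[0,\pi)$ I would: first use part (i) to fix the "resonance data," i.e. the entire function $\psi$ with zero set $\zeta$, hence $h\in\mathcal{X}$, hence the Dirac potential's "modulus-type" information; then use the scattering side — by Theorem \ref{intro:thm:scattering_inverse_problem} the scattering matrix $S_\alpha(\cdot,p,q)$ lies in $\mathcal{S}_{\beta'}$ with $\beta'=-\alpha-\varphi(0)\bmod\pi$, and $S_\alpha$ and $\psi_\alpha$ are tied by $S_\alpha=\ol{\psi_\alpha}/\psi_\alpha$, so prescribing $\beta$ together with $\psi$'s zeros and the normalization pins down $S_\alpha$ uniquely in $\mathcal{S}_\beta$; finally solve $-\alpha-\varphi(0)\equiv\beta\pmod\pi$ for the pair $(\alpha,\varphi(0))$ — here the one free continuous parameter $\varphi(0)=\int_0^\infty p$ compensates the choice of $\alpha\in[0,\pi)$, and uniqueness comes from demanding $\alpha\in[0,\pi)$ once $(p,q)$ (hence $\varphi(0)$) is fixed. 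Assembling, $(p,q)$ is reconstructed from $\psi$ (equivalently from $\zeta$) via the inverse Dirac problem, $\varphi(0)$ is then determined, and $\alpha$ is the unique element of $[0,\pi)$ with $\alpha\equiv-\beta-\varphi(0)\pmod\pi$.

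The main obstacle I expect is making the Schrödinger–Dirac correspondence precise enough that (a) the boundary-condition parameter transforms correctly and invertibly (tracking the $\sin\alpha,\cos\alpha$ combination in \eqref{intro:eq:boundary_condition} through the quasi-derivative substitution $y^{[1]}=y'-uy$ and checking the induced Dirac boundary condition stays in $[0,\pi)$), and (b) the support condition $\max\supp(|p|+|u|)=\gamma$ transfers exactly to $\max\supp h=\gamma$ in \eqref{psi_id}, with no loss or gain of support — this requires a non-degeneracy argument showing the leading term of $\psi$ near the "edge" $k\to\infty$ along suitable rays, or equivalently the top of the support of $h$, does not accidentally vanish. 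Everything else is either cited Dirac inverse-resonance machinery or the elementary $\bmod\pi$ bookkeeping for the phase $\beta$.
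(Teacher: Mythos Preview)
Your overall strategy (pass through the Dirac correspondence and cite Dirac inverse-resonance theory) is the right one, but there is a genuine gap in part (i), and part (ii) inherits it in the form of a circularity.

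The issue is in the sentence ``the $\alpha$-dependence is absorbed in the transformation.'' It is not. By Lemma~\ref{trans:lm:jost_functions} (the transformation you outline), the Dirac boundary parameter $\delta$ induced from the Schr\"odinger boundary parameter $\alpha$ satisfies $\alpha+\delta+\varphi(0)\equiv\tfrac{\pi}{2}\pmod\pi$, with $\varphi(0)=\int_0^\infty p$. So for \emph{fixed} $\alpha$, the Dirac boundary condition $\delta$ still varies with the potential through $\varphi(0)$. Consequently, $\varrho^\alpha$ is \emph{not} the composite of the fixed bijection $\mathcal{T}:\mathcal{Q}\to\mathcal{X}$ with a fixed Dirac resonance map $\boldsymbol{\varrho}^{\delta}:\mathcal{X}\to\mathcal{R}$; rather, the second factor depends on the first. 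Your injectivity argument (``recover $\psi$ from its zeros, then $h$, then the Dirac potential, then $(p,q)$'') recovers \emph{some} element of the iso-resonance class, namely the one corresponding to the Dirichlet Dirac condition $\delta=0$, and there is no reason this is the original $(p,q)$. Your surjectivity argument has the same defect: transporting back from the Dirichlet Dirac potential produces $(p_0,q_0)$ with $\varrho^{\alpha_0}(p_0,q_0)=\zeta$ for a \emph{particular} $\alpha_0$ determined by $\varphi_0(0)$, not for the prescribed $\alpha$.

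What the paper does to close this gap is to first describe the entire iso-resonance set (Theorem~\ref{intro:thm:isoresonance_potentials}): it is the one-parameter family $\{(p_\delta,q_\delta,\alpha_\delta)\mid\delta\in[0,\pi)\}$ with $\alpha_\delta=\alpha_o+\vartheta_o(0)-\vartheta_\delta(0)\bmod\pi$, where $\vartheta_\delta$ solves the ODE \eqref{intro:eq:isoresonances_potentials_ivp}. The key analytic input you are missing is the monotonicity $\partial_\delta\vartheta_\delta(0)>0$ together with $\vartheta_\pi(0)=\vartheta_o(0)+\pi$, which shows that $\delta\mapsto\alpha_\delta$ is a bijection of $[0,\pi)$ onto itself; hence each $\alpha$ is hit exactly once within the iso-resonance class, giving bijectivity of $\varrho^\alpha$. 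For part (ii) the paper avoids your circularity by reversing the order: first fix the Dirac boundary parameter $\delta=\beta-\tfrac{\pi}{2}\bmod\pi$ from $\beta$ alone, then take the unique $v\in\mathcal{X}$ with $\boldsymbol{\varrho}^{\delta}(v)=\zeta$, set $(p,q)=\mathcal{T}^{-1}(v)$, and only then read off $\alpha$ from $\alpha+\delta+\varphi(0)\equiv\tfrac{\pi}{2}\pmod\pi$.
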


\begin{remark}
    For any boundary condition
    (\ref{intro:eq:boundary_condition}) equation (\ref{intro:eq:equation}) has the
    same class of resonances and due to $(i)$ for any boundary
    condition (\ref{intro:eq:boundary_condition}) the sequence of resonances
    uniquely determines a potential of equation (\ref{intro:eq:equation}).
\end{remark}

It follows from Theorem \ref{intro:thm:resonances_inverse_problem} that
equation (\ref{intro:eq:equation}) with potentials from ${\mathcal{Q}}$ has
exactly the same set of resonances as a Dirac operator with compactly
supported potentials and Dirichlet boundary condition. Building on results
from \cite{K14, IK14a, IK14b, KM21}, we obtain the forbidden domain for
resonances of equation (\ref{intro:eq:equation}) with potentials from
${\mathcal{Q}}$.

\begin{theorem}
\label{intro:thm:resonances_forbidden_domain} Let $r_{o}\in{\mathbb{C}}_{-}$
be a resonance of equation (\ref{intro:eq:equation}) with boundary condition
(\ref{intro:eq:boundary_condition}) for some $(p,q,\alpha)\in{\mathcal{Q}%
}\times[0,\pi)$ and let $\varepsilon>0$. Then there exists a constant
$C=C(\varepsilon,p,q,\alpha)\geq0$ such that the following inequality holds
true:
\begin{equation}
    2\gamma\Im r_{o}\leq\ln\left(  \varepsilon+\frac{C}{|r_{o}|}\right).
    \label{intro:eq:forbidden_domain}%
\end{equation}
In particular, for any $A>0$, there are only finitely many resonances in the
strip
\begin{equation}
    \set*{z \in {\mathbb{C}} \given 0 > \Im z > -A}.
    \label{intro:eq:forbidden_domain_strip}%
\end{equation}

\end{theorem}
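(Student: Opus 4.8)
The plan is to reduce everything to the Jost function of the associated Dirac problem and then extract the forbidden domain from a Riemann--Lebesgue estimate applied uniformly along strips. First, by Theorem~\ref{intro:thm:resonances_inverse_problem}(i) the sequence $\varrho^{\alpha}(p,q)$ of resonances of $(\ref{intro:eq:equation})$, $(\ref{intro:eq:boundary_condition})$ coincides with the sequence of zeros in $\C_{-}$ of some entire function $\psi$ admitting the representation $(\ref{psi_id})$, namely $\psi(k)=1+\int_{0}^{\gamma}h(s)e^{2iks}\,ds$ with $h\in\mathcal{X}$; here $h$ is supported in $[0,\gamma]$ and lies in $L^{2}(0,\gamma)\subset L^{1}(0,\gamma)$, so $\psi$ is entire and hence has only finitely many zeros in any bounded subset of $\C$. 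It is therefore enough to prove $(\ref{intro:eq:forbidden_domain})$ for the zeros $r_{o}=\sigma-i\tau\in\C_{-}$ (so $\tau>0$) of such a $\psi$; this is, in effect, the forbidden-domain argument for Dirac operators of \cite{K14,IK14a,IK14b,KM21}, rewritten here. Multiplying $\psi(r_{o})=0$ by $e^{-2i\gamma r_{o}}$ and substituting $t=\gamma-s$ yields the identity
\[
e^{2\gamma\Im r_{o}}=\bigl|e^{-2i\gamma r_{o}}\bigr|=\left|\int_{0}^{\gamma}h(\gamma-t)\,e^{-2\tau t}\,e^{-2i\sigma t}\,dt\right|=\bigl|\widehat{g_{\tau}}(2\sigma)\bigr|,
\]
where $g_{\tau}(t):=h(\gamma-t)e^{-2\tau t}$ on $(0,\gamma)$, $\widehat{g_{\tau}}(\xi):=\int_{0}^{\gamma}g_{\tau}(t)e^{-i\xi t}\,dt$, and $\|g_{\tau}\|_{L^{1}(0,\gamma)}\le\|h\|_{L^{1}(0,\gamma)}$ for all $\tau\ge0$. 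After taking logarithms, $(\ref{intro:eq:forbidden_domain})$ becomes the bound $\bigl|\widehat{g_{\tau}}(2\sigma)\bigr|\le\varepsilon+C/|r_{o}|$.

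Next I would establish the decay input. Fix $b>0$. Since $\bigl|e^{-2\tau t}-e^{-2\tau' t}\bigr|\le 2\gamma|\tau-\tau'|$ for $\tau,\tau'\ge0$ and $t\in[0,\gamma]$, one gets $\|g_{\tau}-g_{\tau'}\|_{L^{1}(0,\gamma)}\le 2\gamma|\tau-\tau'|\,\|h\|_{L^{1}(0,\gamma)}$, so $\{g_{\tau}:0\le\tau\le b\}$ is a norm-compact subset of $L^{1}(0,\gamma)$. Covering it by finitely many $L^{1}$-balls and applying the Riemann--Lebesgue lemma at their centres gives a uniform statement:
\[
\omega_{b}(R):=\sup_{0\le\tau\le b}\ \sup_{|\xi|\ge R}\bigl|\widehat{g_{\tau}}(\xi)\bigr|\ \longrightarrow\ 0\qquad\text{as }R\to+\infty,
\]
and $\omega_{b}$ is non-increasing.

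To conclude, let $\varepsilon>0$. If $\varepsilon\ge1$, then $(\ref{intro:eq:forbidden_domain})$ is trivial, so assume $\varepsilon\in(0,1)$, set $b:=\tfrac{1}{2\gamma}\ln(1/\varepsilon)>0$, choose $R_{\varepsilon}\ge 2b$ with $\omega_{b}(R_{\varepsilon})\le\varepsilon$, and put $C:=R_{\varepsilon}$, which depends only on $\varepsilon$ and, through $h$, on $(p,q,\alpha)$. For a resonance $r_{o}=\sigma-i\tau$: if $\tau>b$, then $e^{2\gamma\Im r_{o}}=e^{-2\gamma\tau}<\varepsilon$; if $\tau\le b$ and $|r_{o}|<R_{\varepsilon}$, then $C/|r_{o}|>1\ge e^{2\gamma\Im r_{o}}$; and if $\tau\le b$ and $|r_{o}|\ge R_{\varepsilon}$, then $|r_{o}|\ge 2\tau$ forces $|2\sigma|=2\sqrt{|r_{o}|^{2}-\tau^{2}}\ge|r_{o}|$, so the identity above gives $e^{2\gamma\Im r_{o}}=\bigl|\widehat{g_{\tau}}(2\sigma)\bigr|\le\omega_{b}(|r_{o}|)\le\omega_{b}(R_{\varepsilon})\le\varepsilon$. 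In every case $e^{2\gamma\Im r_{o}}\le\varepsilon+C/|r_{o}|$, i.e.\ $(\ref{intro:eq:forbidden_domain})$. For the last assertion, given $A>0$ apply $(\ref{intro:eq:forbidden_domain})$ with $\varepsilon=\tfrac12 e^{-2\gamma A}$: any resonance with $\Im r_{o}>-A$ satisfies $e^{-2\gamma A}<e^{2\gamma\Im r_{o}}\le\tfrac12 e^{-2\gamma A}+C/|r_{o}|$, hence $|r_{o}|<2C\,e^{2\gamma A}$; since $\psi$ has finitely many zeros in $\{\,|z|<2C\,e^{2\gamma A}\,\}$, only finitely many resonances lie in the strip $(\ref{intro:eq:forbidden_domain_strip})$.

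The main obstacle is the \emph{uniformity} in the second step: pointwise Riemann--Lebesgue decay of $\widehat{g_{\tau}}(\xi)$ for each fixed $\tau$ is not enough, because along a strip the parameter $\tau=-\Im r_{o}$ of a resonance need not tend to a limit as $|r_{o}|\to\infty$. The resolution is exactly that $\{g_{\tau}:0\le\tau\le b\}$ is norm-compact in $L^{1}(0,\gamma)$, which upgrades Riemann--Lebesgue to the uniform estimate on $\omega_{b}$. The remaining ingredients --- the Fourier identity read off from $\psi(r_{o})=0$, the case split on the size of $\tau$, and the bookkeeping for $C$ --- are routine.
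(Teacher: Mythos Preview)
Your argument is correct, but it takes a different route from the paper. The paper's proof is a two-line reduction: via Lemma~\ref{trans:lm:jost_functions} one has $\varrho^{\alpha}(p,q)=\boldsymbol{\varrho}^{\delta}(v)$ for $v=\mathcal{T}(p,q)$ and a suitable $\delta\in[0,\pi)$, and then Theorem~\ref{dirac:thm:forbidden_domain} (quoted from \cite{KM21}) is applied to the Dirac resonances as a black box. You instead invoke Theorem~\ref{intro:thm:resonances_inverse_problem}(i) to obtain the representation~(\ref{psi_id}) and then \emph{re-prove} the forbidden-domain estimate from scratch: the identity $e^{2\gamma\Im r_{o}}=|\widehat{g_{\tau}}(2\sigma)|$ followed by a uniform Riemann--Lebesgue bound, made uniform in $\tau\in[0,b]$ via the $L^{1}$-compactness of $\{g_{\tau}\}$. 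This is essentially the content of the cited Dirac result, so what you gain is self-containedness (no appeal to \cite{KM21}), at the cost of carrying out the analysis explicitly; the paper's approach is shorter but relies on the external reference. A minor remark: your use of Theorem~\ref{intro:thm:resonances_inverse_problem}(i) is legitimate since it is proved earlier in Section~\ref{proof}, though the representation~(\ref{psi_id}) can be obtained more directly from Lemma~\ref{trans:lm:jost_functions} together with $\boldsymbol{\psi}_{\delta}\in\mathcal{J}_{\delta}$, avoiding the detour through the bijection statement.
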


Recall that if $p=0$, then (\ref{intro:eq:equation}) is a Schr\"{o}dinger
equation with a potential which does not depend on the energy. For such
equation the inverse resonances problem was solved in \cite{K04a} in case
$q\in L^{1}(0,\gamma)$. Using Theorem
\ref{intro:thm:resonances_inverse_problem}, we solve this problem for
$q\in{\mathcal{Q}}_{o}$, where
\[
    \mathcal{Q}_{o} :=\set*{ q \given \left(0, q\right) \in{\mathcal{Q}}} .
\]
In this case, the set of resonances ${\mathcal{R}}_{o}$ consists of all
sequences from ${\mathcal{R}}$ which are symmetric with respect to the
imaginary line. In order to rigorously define this set, we introduce a reflection
operator $R:\mathbf{s} \rightarrow \mathbf{s}$ such that:
$R(r)$ is a rearrangement of the sequence $r = (-\overline{r_{n}})_{n\geq1}$
satisfying (\ref{intro:eq:arrangement}). 

\begin{definition}
    We denote with
    $\cR_{o}$ the subset of all sequences $r\in \cR$ such that $R r = r$.
\end{definition}

Recall that $\varrho^{o}(0,q)$ is a sequence of resonances of a Schr\"odinger
equation with a potential $q$, which does not depend on the energy. In the
following theorem we solve the inverse resonances problem for such equation.

\begin{theorem}
\label{intro:thm:resonances_inverse_problem_schrodinger_operator} The mapping
$q \mapsto\varrho^{o}(0,q)$ from ${\mathcal{Q}}_{o}$ into ${\mathcal{R}}_{o}$
is a bijection.
\end{theorem}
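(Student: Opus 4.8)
I would deduce the statement from Theorem~\ref{intro:thm:resonances_inverse_problem}(i) together with a single reflection symmetry: the sign change $p\mapsto-p$ transforms the resonance sequence into its image under $R$. Write $\psi_{0}(\cdot,p,q)$ for the Dirichlet ($\alpha=0$) Jost function attached to $(p,q)\in\mathcal{Q}$, so that, by definition, $\varrho^{o}(p,q)$ is the sequence of zeros of $\psi_{0}(\cdot,p,q)$ in $\mathbb{C}_{-}$ arranged by (\ref{intro:eq:arrangement}), and recall $\psi_{0}(k,p,q)=k\,y_{+}(0,k,p,q)$ with $y_{+}(\cdot,k,p,q)$ the Jost solution. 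The key identity I would establish is
\[
\psi_{0}(-\overline{k},-p,q)=-\,\overline{\psi_{0}(k,p,q)}\,,\qquad k\in\mathbb{C}.
\]
To prove it, set $g(x):=\overline{y_{+}(x,k,p,q)}$. Using that $u$ and $p$ are real-valued, the quasi-derivative of $g$ equals $\overline{y_{+}^{[1]}(x,k,p,q)}$ and is locally absolutely continuous, $g(x)=e^{i(-\overline{k})x}$ for $x\geq\gamma$, and conjugating equation (\ref{intro:eq:equation})--(\ref{intro:eq:equation_1}) in quasi-derivative form shows that $g$ solves the same equation with $k$ replaced by $-\overline{k}$ and $p$ replaced by $-p$. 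By uniqueness of the Jost solution (Lemma~\ref{trans:lm:jost_solution}), $g(x)=y_{+}(x,-\overline{k},-p,q)$ for all $x\geq0$; evaluating at $x=0$ and using $\psi_{0}(k,p,q)=k\,y_{+}(0,k,p,q)$ gives the identity.

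Since the identity is symmetric under $p\leftrightarrow-p$ and the map $k\mapsto-\overline{k}$ is an anti-holomorphic bijection of $\mathbb{C}$ preserving $\mathbb{C}_{-}$, moduli and imaginary parts, it follows that $k_{0}$ is a zero of $\psi_{0}(\cdot,p,q)$ of order $m$ if and only if $-\overline{k_{0}}$ is a zero of $\psi_{0}(\cdot,-p,q)$ of order $m$; in terms of sequences arranged by (\ref{intro:eq:arrangement}) this reads
\[
\varrho^{o}(-p,q)=R\,\varrho^{o}(p,q)\,,\qquad(p,q)\in\mathcal{Q}.
\]
Now I conclude. For $q\in\mathcal{Q}_{o}$ one has $(0,q)\in\mathcal{Q}$, so taking $p=0$ above gives $\varrho^{o}(0,q)=R\,\varrho^{o}(0,q)$, i.e. $\varrho^{o}(0,q)\in\mathcal{R}_{o}$; hence $q\mapsto\varrho^{o}(0,q)$ maps $\mathcal{Q}_{o}$ into $\mathcal{R}_{o}$, and it is injective, being the composition of $q\mapsto(0,q)$ with the injective map $\varrho^{o}\colon\mathcal{Q}\to\mathcal{R}$ of Theorem~\ref{intro:thm:resonances_inverse_problem}(i). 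For surjectivity onto $\mathcal{R}_{o}$, let $\zeta\in\mathcal{R}_{o}$; by Theorem~\ref{intro:thm:resonances_inverse_problem}(i) there is a unique $(p,q)\in\mathcal{Q}$ with $\varrho^{o}(p,q)=\zeta$, and then $\varrho^{o}(-p,q)=R\,\varrho^{o}(p,q)=R\zeta=\zeta=\varrho^{o}(p,q)$ while $(-p,q)\in\mathcal{Q}$, so injectivity of $\varrho^{o}$ forces $p=0$, whence $q\in\mathcal{Q}_{o}$ and $\varrho^{o}(0,q)=\zeta$.

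The genuine point of the argument is the reflection identity, and within it the only delicate step is to run the conjugation rigorously in the distributional/quasi-derivative setting for the Miura potential $q=u'+u^{2}$; this uses nothing beyond real-valuedness of $u$ and $p$ and the uniqueness statement of Lemma~\ref{trans:lm:jost_solution}. Everything else is bookkeeping: that $R$, defined via rearrangement under (\ref{intro:eq:arrangement}), really carries $\varrho^{o}(p,q)$ to $\varrho^{o}(-p,q)$ (immediate, since $r\mapsto-\overline{r}$ fixes $|r|$ and $\Im r$ and only flips $\Re r$), together with the invocation of the bijectivity of $\varrho^{o}$. Notably, no minimum-modulus or Hadamard-factorisation argument is needed here, nor the Schr\"odinger--Dirac correspondence, because $\varrho^{o}$ is defined directly through the zeros of $\psi_{0}$.
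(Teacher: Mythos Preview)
Your proof is correct and takes a genuinely different route from the paper's. The paper works entirely on the Dirac side: it shows that under the bijection $\mathcal{T}:\mathcal{Q}\to\mathcal{X}$ the condition $p=0$ is equivalent to $v=\mathcal{T}(p,q)$ being real-valued, then invokes Lemma~\ref{trans:lm:jost_functions} to identify $\varrho^{o}(0,q)=\boldsymbol{\varrho}^{\pi/2}(v)$ and finally appeals to Theorem~\ref{dirac:thm:inverse_resonances_real_valued_potentials}, whose proof in turn uses the Dirac reflection identity of Theorem~\ref{dirac:thm:conjugate_potential} together with the Hadamard factorisation. You instead prove the reflection identity $\psi_{0}(-\overline{k},-p,q)=-\overline{\psi_{0}(k,p,q)}$ directly on the Schr\"odinger side by conjugating the quasi-derivative equation, and then exploit only the bijectivity statement of Theorem~\ref{intro:thm:resonances_inverse_problem}(i) to force $p=0$ from $R\zeta=\zeta$. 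The two symmetries are in fact the same symmetry seen through $\mathcal{T}$ (one checks $\mathcal{T}(-p,q)=\overline{\mathcal{T}(p,q)}$), so your identity is the Schr\"odinger-level avatar of Theorem~\ref{dirac:thm:conjugate_potential}. What your approach buys is a cleaner, self-contained argument for this particular theorem that avoids re-entering the Dirac machinery and the Hadamard factorisation; what the paper's approach buys is consistency with its global strategy and a parallel treatment of the Dirac and Schr\"odinger cases. One small caveat: your closing remark that ``no Schr\"odinger--Dirac correspondence is needed'' should be read as \emph{within this proof}, since the input Theorem~\ref{intro:thm:resonances_inverse_problem}(i) is itself established via that correspondence.
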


Note that it follows from Theorem \ref{intro:thm:resonances_inverse_problem},
that one sequence from ${\mathcal{R}}$ can be a sequence of resonances for
various potentials and boundary conditions. Now we describe the potentials
which have the same sequence of resonances with different boundary condition
parameters. For any $(p_{o},q_{o},\alpha_{o})\in{\mathcal{Q}}\times
[0,\pi)$, we introduce the set of ``isoresonances''  potentials and
boundary condition parameters Iso$(p_{o},q_{o},\alpha_{o})$ as follows
$$
    \Iso(p_{o},q_{o},\alpha_{o}) = 
    \set*{(p,q,\alpha) \in \mathcal{Q} \times [0,\pi) \given 
    \varrho^{\alpha}(p,q) = \varrho^{\alpha_{o}}(p_{o},q_{o})}.
$$
We also introduce the mapping ${\mathcal{T}}:{\mathcal{Q}}\rightarrow
{\mathcal{X}}$ by
\begin{equation}
{\mathcal{T}}:(p,q)\mapsto(-u+ip)e^{-2i\varphi}\,,\quad{q=u}^{\prime}%
+u^{2}\,,\quad\varphi(x)=\int_{x}^{+\infty}p(t)dt,\quad x\in{\mathbb{R}}%
_{+}\,. \label{intro:eq:potential_mapping}%
\end{equation}

\begin{remark}
    In Section \ref{trans}, we will use the mapping
    $\mathcal{T}$ to transform a potential of equation
    (\ref{intro:eq:equation}) into potential of a corresponding Dirac equation and
    we will show that this mapping, actually, a bijection. Note also that the
    mapping $\mathcal{T}$ is a composition of two non-linear mappings:
    $q\mapsto u$, where ${q=u}^{\prime}+u^{2}$, and $(p,u)\mapsto
    (-u+ip)e^{-2i\varphi}$, where $\varphi(x)=\int_{x}^{+\infty}%
    p(t)dt$, $x\in \R_{+}$. The first mapping is the inverse mapping
    of the Miura mapping, which is widely studied (see, e.g.,
    \cite{K02,K03,KPST05,IK17}).
\end{remark}

\begin{theorem}
\label{intro:thm:isoresonance_potentials} Let $(p_{o},q_{o},\alpha_{o}%
)\in{\mathcal{Q}}\times[0,\pi)$ and let $v_{o}={\mathcal{T}}(p_{o}%
,q_{o})$. Then we have
\begin{equation}
    \Iso(p_{o},q_{o},\alpha_{o})=
    \set*{(p_{\delta},q_{\delta},\alpha_{\delta}) \given 
    \delta\in[0,\pi)},
    \label{intro:eq:isoresonances_potentials_set}%
\end{equation}
where
\begin{equation}
\begin{aligned} \alpha_{\d} &= \alpha_o + \vartheta_o(0) - \vartheta_{\d}(0) \mod \pi,\\ p_{\d} &= \mathop{\mathrm{Im}}\nolimits v_o \cos 2 \vartheta_{\d} + \mathop{\mathrm{Re}}\nolimits v_o \sin 2 \vartheta_{\d},\\ u_{\d} &= -\mathop{\mathrm{Re}}\nolimits v_o \cos 2 \vartheta_{\d} + \mathop{\mathrm{Im}}\nolimits v_o \sin 2 \vartheta_{\d},\\ q_{\d} &= u_{\d}' + u_{\d}^2, \end{aligned} \label{intro:eq:isoresonances_potentials}%
\end{equation}
and $\vartheta_{\delta}$ is a unique solution of the initial value problem
\begin{equation}%
\begin{cases}
\vartheta_{\delta}^{\prime}=-\Im v_{o}\cos2\vartheta_{\delta
}-\Re v_{o}\sin2\vartheta_{\delta},\\
\vartheta_{\delta}(\gamma)=\delta.
\end{cases}
\label{intro:eq:isoresonances_potentials_ivp}%
\end{equation}
Moreover, $\vartheta_{\delta}(0)$ is a strictly increasing function of
$\delta$ and $\vartheta_{\pi}(0)=\vartheta_{o}(0)+\pi$.
\end{theorem}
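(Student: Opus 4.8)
The plan is to transfer the problem to a Dirac operator through the map $\mathcal{T}$ and the Schr\"{o}dinger--Dirac correspondence of Section~\ref{trans}, where the transformation \eqref{intro:eq:isoresonances_potentials} turns into a constant gauge rotation of the Dirac potential (hence manifestly resonance-preserving), and then to combine this with the injectivity of $\varrho^{\alpha}$ from Theorem~\ref{intro:thm:resonances_inverse_problem} to rule out any further isoresonant triples. First I would verify that the family $\{(p_{\delta},q_{\delta},\alpha_{\delta})\}_{\delta\in[0,\pi)}$ is well defined in $\mathcal{Q}\times[0,\pi)$: for each $\delta$ the initial value problem \eqref{intro:eq:isoresonances_potentials_ivp} has a unique global solution $\vartheta_{\delta}\in W^{1,1}(\mathbb{R}_{+})$, because its right-hand side is Lipschitz in $\vartheta$ with integrable Lipschitz ``constant'' $2(\abs{\Re v_{o}}+\abs{\Im v_{o}})$ --- recall $v_{o}=\mathcal{T}(p_{o},q_{o})\in\mathcal{X}$ is compactly supported --- and $\vartheta_{\delta}\equiv\delta$ on $[\gamma,+\infty)$. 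From \eqref{intro:eq:isoresonances_potentials} one reads off $\abs{p_{\delta}}^{2}+\abs{u_{\delta}}^{2}=\abs{v_{o}}^{2}$, so $p_{\delta},u_{\delta}\in L^{2}(\mathbb{R}_{+})$ are compactly supported with $\max\supp(\abs{p_{\delta}}+\abs{u_{\delta}})=\max\supp\abs{v_{o}}=\gamma$; together with $q_{\delta}=u_{\delta}'+u_{\delta}^{2}$ this yields $(p_{\delta},q_{\delta})\in\mathcal{Q}$, and $\alpha_{\delta}\in[0,\pi)$ after reduction modulo $\pi$.

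The algebraic heart of the argument is the identity $\mathcal{T}(p_{\delta},q_{\delta})=e^{2i\delta}v_{o}$. Indeed, \eqref{intro:eq:isoresonances_potentials} rewrites as $-u_{\delta}+ip_{\delta}=v_{o}e^{2i\vartheta_{\delta}}$ and as $\vartheta_{\delta}'=-p_{\delta}$; since $\varphi_{\delta}(x):=\int_{x}^{+\infty}p_{\delta}(t)\,dt$ also satisfies $\varphi_{\delta}'=-p_{\delta}$, $\varphi_{\delta}(\gamma)=0$, and $\vartheta_{\delta}(\gamma)=\delta$, I get $\vartheta_{\delta}-\varphi_{\delta}\equiv\delta$, and hence $\mathcal{T}(p_{\delta},q_{\delta})=(-u_{\delta}+ip_{\delta})e^{-2i\varphi_{\delta}}=v_{o}e^{2i(\vartheta_{\delta}-\varphi_{\delta})}=e^{2i\delta}v_{o}$. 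Taking $\delta=0$ and using uniqueness for \eqref{intro:eq:isoresonances_potentials_ivp} identifies $\vartheta_{o}$ with $x\mapsto\int_{x}^{+\infty}p_{o}$ and gives $(p_{0},q_{0},\alpha_{0})=(p_{o},q_{o},\alpha_{o})$, so the base point lies in the family; in particular $\vartheta_{o}(0)=\int_{0}^{+\infty}p_{o}(x)\,dx$.

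To see that resonances are preserved, I would use that, by Section~\ref{trans}, the Jost function $\psi_{\alpha}(\cdot,p,q)$ of \eqref{intro:eq:equation}, \eqref{intro:eq:boundary_condition} equals, up to a nonvanishing explicit factor, the Jost function of a Dirac operator with compactly supported potential $\mathcal{T}(p,q)$ and a boundary condition determined by $\alpha$ and $\int_{0}^{+\infty}p$. Since $\mathcal{T}(p_{\delta},q_{\delta})=e^{2i\delta}\mathcal{T}(p_{o},q_{o})$ differs from $\mathcal{T}(p_{o},q_{o})$ only by a constant unimodular factor, the two Dirac Jost solutions differ by a constant matrix, so the two Dirac Jost functions --- and therefore $\psi_{\alpha_{\delta}}(\cdot,p_{\delta},q_{\delta})$ and $\psi_{\alpha_{o}}(\cdot,p_{o},q_{o})$ --- have the same zeros in $\mathbb{C}_{-}$ once the boundary parameter is shifted correspondingly; tracing that shift through the dictionary of Section~\ref{trans} (with $\vartheta_{o}(0)=\int_{0}^{+\infty}p_{o}$ and $\varphi_{\delta}(0)=\vartheta_{\delta}(0)-\delta$) produces precisely $\alpha_{\delta}=\alpha_{o}+\vartheta_{o}(0)-\vartheta_{\delta}(0)\bmod\pi$. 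Alternatively one can stay in the Schr\"{o}dinger picture and check directly that the $x$-dependent rotation built from $\vartheta_{\delta}-\vartheta_{o}$ carries $(y_{+},y_{+}^{[1]})$ for $(p_{o},q_{o})$ to the corresponding pair for $(p_{\delta},q_{\delta})$ while preserving the normalization $y_{+}(x,k)=e^{ikx}$, $x\ge\gamma$. Either way, $\varrho^{\alpha_{\delta}}(p_{\delta},q_{\delta})=\varrho^{\alpha_{o}}(p_{o},q_{o})$, which proves the inclusion ``$\supseteq$'' in \eqref{intro:eq:isoresonances_potentials_set}.

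For the reverse inclusion I would first record that $\delta\mapsto\alpha_{\delta}$ is a bijection of $[0,\pi)$ onto itself: the vector field in \eqref{intro:eq:isoresonances_potentials_ivp} is $\pi$-periodic in $\vartheta$, so $\vartheta_{o}+\pi$ solves the problem with data $\vartheta(\gamma)=\pi$, whence $\vartheta_{\pi}=\vartheta_{o}+\pi$ and $\vartheta_{\pi}(0)=\vartheta_{o}(0)+\pi$; distinct solutions cannot cross (uniqueness), so $\delta\mapsto\vartheta_{\delta}(0)$ is strictly increasing, and it is continuous by continuous dependence on initial data, hence a homeomorphism of $[0,\pi]$ onto $[\vartheta_{o}(0),\vartheta_{o}(0)+\pi]$, and correspondingly $\delta\mapsto\alpha_{\delta}$ is a bijection of $[0,\pi)$ onto $[0,\pi)$. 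Then, given $(p,q,\alpha)\in\Iso(p_{o},q_{o},\alpha_{o})$, I pick the unique $\delta\in[0,\pi)$ with $\alpha_{\delta}=\alpha$ and obtain $\varrho^{\alpha}(p,q)=\varrho^{\alpha_{o}}(p_{o},q_{o})=\varrho^{\alpha_{\delta}}(p_{\delta},q_{\delta})=\varrho^{\alpha}(p_{\delta},q_{\delta})$; since $\varrho^{\alpha}:\mathcal{Q}\to\mathcal{R}$ is injective (Theorem~\ref{intro:thm:resonances_inverse_problem}(i)) this forces $(p,q)=(p_{\delta},q_{\delta})$, hence $(p,q,\alpha)=(p_{\delta},q_{\delta},\alpha_{\delta})$, giving ``$\subseteq$''. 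The hard part will be the resonance-preservation step: making the elementary gauge invariance $V\mapsto e^{2i\delta}V$ of the Dirac Jost function rigorous, and especially the careful bookkeeping of the induced change of boundary condition through the two translations Schr\"{o}dinger $\to$ Dirac $\to$ Schr\"{o}dinger, since this is exactly where the formula for $\alpha_{\delta}$ and the quantity $\vartheta_{o}(0)=\int_{0}^{+\infty}p_{o}$ arise.
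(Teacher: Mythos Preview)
Your overall strategy --- push everything through $\mathcal{T}$ to the Dirac side, where the iso-resonance family is the orbit $v_o\mapsto e^{2i\delta}v_o$, and pull back --- is exactly the paper's route, and your direct verification of $\mathcal{T}(p_\delta,q_\delta)=e^{2i\delta}v_o$ via $-u_\delta+ip_\delta=v_o e^{2i\vartheta_\delta}$ and $\vartheta_\delta-\varphi_\delta\equiv\delta$ is clean and correct. Your monotonicity argument (non-crossing of solutions plus $\pi$-periodicity of the vector field) is a valid alternative to the paper's, which instead differentiates \eqref{intro:eq:isoresonances_potentials_ivp} in $\delta$ and solves the resulting linear ODE to get $\partial_\delta\vartheta_\delta(0)=e^{-2\int_0^\gamma u_\delta}>0$.

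There is, however, a genuine circularity in your reverse inclusion. You invoke the injectivity of $\varrho^{\alpha}$ from Theorem~\ref{intro:thm:resonances_inverse_problem}(i), but in the paper the proof of Theorem~\ref{intro:thm:resonances_inverse_problem}(i) \emph{uses} Theorem~\ref{intro:thm:isoresonance_potentials} (precisely the characterization of $\Iso(p_o,q_o,\alpha_o)$ together with the bijectivity of $\delta\mapsto\alpha_\delta$). So you cannot appeal to it here. The paper avoids this by arguing the reverse inclusion directly on the Dirac side: given $(p,q,\alpha)\in\Iso(p_o,q_o,\alpha_o)$, set $v=\mathcal{T}(p,q)$ and use Lemma~\ref{trans:lm:jost_functions} to find $\beta$ with $\varrho^{\alpha}(p,q)=\boldsymbol{\varrho}^{\beta}(v)$; then $(v,\beta)\in\Iso(v_o,\delta_o)$, and Theorem~\ref{dirac:thm:isoresonances_potentials} gives $v=e^{2i\delta}v_o$ for a unique $\delta$, whence $(p,q)=\mathcal{T}^{-1}(e^{2i\delta}v_o)=(p_\delta,q_\delta)$ and the boundary-parameter bookkeeping forces $\alpha=\alpha_\delta$. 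Replacing your appeal to Theorem~\ref{intro:thm:resonances_inverse_problem} with this direct Dirac argument (which you already have all the ingredients for) removes the circularity.
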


Now, we describe the scattering matrices and potentials of the corresponding
Dirac operators for the set of ``isoresonances'' potentials and boundary
condition parameters.

\begin{theorem}
\label{intro:thm:isoresonances_scattering_matrices} Let $(p_{o},q_{o}%
,\alpha_{o})\in{\mathcal{Q}}\times[0,\pi)$ and let Iso$(p_{o}%
,q_{o},\alpha_{o})$ be given by (\ref{intro:eq:isoresonances_potentials_set})
and (\ref{intro:eq:isoresonances_potentials}). Then for any $\delta\in
[0,\pi)$ the following identities hold true:
\begin{equation}
{\mathcal{T}}(p_{\delta},q_{\delta})=e^{2i\delta}{\mathcal{T}}(p_{o}%
,q_{o})\,,\quad S_{\alpha_{\delta}}(\cdot,p_{\delta},q_{\delta})=e^{2i\delta
}S_{\alpha_{o}}(\cdot,p_{o},q_{o})\,.
\label{intro:eq:isoresonances_scattering_matrices}%
\end{equation}

\end{theorem}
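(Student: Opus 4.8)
The plan is to establish the two identities in \er{intro:eq:isoresonances_scattering_matrices} in turn: the identity for $\mathcal{T}$ is a direct computation from the explicit formulas \er{intro:eq:isoresonances_potentials}--\er{intro:eq:isoresonances_potentials_ivp}, and the identity for $S_{\alpha}$ then follows by combining it with the boundary‑parameter formula of Theorem \ref{intro:thm:scattering_inverse_problem}, the iso‑resonance property of Theorem \ref{intro:thm:isoresonance_potentials}, and the injectivity built into Theorems \ref{intro:thm:scattering_inverse_problem} and \ref{intro:thm:resonances_inverse_problem}.

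\emph{Step 1 (the identity for $\mathcal{T}$).} Writing $v_{o}=\Re v_{o}+i\Im v_{o}$ and using $-\sin\theta+i\cos\theta=ie^{i\theta}$, the formulas for $p_{\delta}$ and $u_{\delta}$ in \er{intro:eq:isoresonances_potentials} give at once
\[
-u_{\delta}+ip_{\delta}=\Re v_{o}\,e^{2i\vartheta_{\delta}}+i\,\Im v_{o}\,e^{2i\vartheta_{\delta}}=v_{o}\,e^{2i\vartheta_{\delta}}.
\]
Comparing the formula for $p_{\delta}$ with the right–hand side of \er{intro:eq:isoresonances_potentials_ivp} also shows $p_{\delta}=-\vartheta_{\delta}^{\prime}$. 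Since $v_{o}=\mathcal{T}(p_{o},q_{o})\in\mathcal{X}$ vanishes on $(\gamma,+\infty)$, the initial value problem forces $\vartheta_{\delta}\equiv\delta$ there, hence $p_{\delta}$ and $u_{\delta}$ vanish on $(\gamma,+\infty)$, and therefore, for every $x\in\R_{+}$,
\[
\varphi_{\delta}(x)=\int_{x}^{+\infty}p_{\delta}(t)\,dt=\int_{x}^{\gamma}p_{\delta}(t)\,dt=\vartheta_{\delta}(x)-\delta.
\]
Substituting both relations into the definition \er{intro:eq:potential_mapping} of $\mathcal{T}$ yields $\mathcal{T}(p_{\delta},q_{\delta})=(-u_{\delta}+ip_{\delta})e^{-2i\varphi_{\delta}}=v_{o}e^{2i\vartheta_{\delta}}e^{-2i(\vartheta_{\delta}-\delta)}=e^{2i\delta}v_{o}$, which is the first identity in \er{intro:eq:isoresonances_scattering_matrices}. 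Taking $\delta=0$ in this computation and invoking uniqueness for \er{intro:eq:isoresonances_potentials_ivp} also gives $(p_{0},q_{0},\alpha_{0})=(p_{o},q_{o},\alpha_{o})$ and, in particular, $\varphi_{o}(0)=\vartheta_{o}(0)$, which I use below.

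\emph{Step 2 (the identity for $S_{\alpha}$).} By Theorem \ref{intro:thm:isoresonance_potentials}, $(p_{\delta},q_{\delta},\alpha_{\delta})\in\Iso(p_{o},q_{o},\alpha_{o})$, that is $\varrho^{\alpha_{\delta}}(p_{\delta},q_{\delta})=\varrho^{\alpha_{o}}(p_{o},q_{o})=:\zeta$. Because these problems have no eigenvalues, $\psi_{\alpha}$ has no zeros in $\ol\C_{+}$ and $\ol{\psi_{\alpha}}$ has no zeros in $\C_{-}$, so the poles of $S_{\alpha}(\cdot,p,q)=\ol{\psi_{\alpha}}/\psi_{\alpha}$, counted with multiplicity, are exactly the zeros of $\psi_{\alpha}$ in $\C_{-}$, i.e.\ the sequence $\varrho^{\alpha}(p,q)$; hence $S_{\alpha_{\delta}}(\cdot,p_{\delta},q_{\delta})$ and $e^{2i\delta}S_{\alpha_{o}}(\cdot,p_{o},q_{o})$ have one and the same pole sequence $\zeta$. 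On the other hand, multiplication by the unimodular constant $e^{2i\delta}$ preserves conditions (i) and (ii) in the definition of $\mathcal{S}$ and turns the representation (iii) with parameter $\beta$ and kernel $F$ into the one with parameter $\beta+\delta$ (reduced modulo $\pi$) and kernel $e^{2i\delta}F$, so $e^{2i\delta}\mathcal{S}_{\beta}\subset\mathcal{S}_{\beta+\delta}$. By Theorem \ref{intro:thm:scattering_inverse_problem}, $S_{\alpha_{o}}(\cdot,p_{o},q_{o})\in\mathcal{S}_{\beta_{o}}$ with $\beta_{o}=-\alpha_{o}-\varphi_{o}(0)\mod \pi$, hence $e^{2i\delta}S_{\alpha_{o}}(\cdot,p_{o},q_{o})\in\mathcal{S}_{\beta_{o}+\delta}$; while the same theorem, together with the formula for $\alpha_{\delta}$ in \er{intro:eq:isoresonances_potentials} and the relations $\varphi_{\delta}(0)=\vartheta_{\delta}(0)-\delta$, $\varphi_{o}(0)=\vartheta_{o}(0)$ of Step 1, gives
\[
S_{\alpha_{\delta}}(\cdot,p_{\delta},q_{\delta})\in\mathcal{S}_{\beta_{\delta}},\qquad \beta_{\delta}=-\alpha_{\delta}-\varphi_{\delta}(0)=-\alpha_{o}-\vartheta_{o}(0)+\delta=\beta_{o}+\delta\mod \pi.
\]
Thus both functions lie in the same class $\mathcal{S}_{\beta_{o}+\delta}$ and share the pole sequence $\zeta$. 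Finally, Theorems \ref{intro:thm:scattering_inverse_problem} and \ref{intro:thm:resonances_inverse_problem}(ii) together show that an element of $\mathcal{S}$ is determined by the pair (its pole sequence in $\mathcal{R}$, its class index in $[0,\pi)$): if $S=S_{\alpha}(\cdot,p,q)$ and $S'=S_{\alpha'}(\cdot,p',q')$ share both data, then $(p,q,\alpha)$ and $(p',q',\alpha')$ solve the same instance of Theorem \ref{intro:thm:resonances_inverse_problem}(ii) and hence coincide, whence $S=S'$. Applying this to $S_{\alpha_{\delta}}(\cdot,p_{\delta},q_{\delta})$ and $e^{2i\delta}S_{\alpha_{o}}(\cdot,p_{o},q_{o})$ gives the second identity.

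I expect the one genuinely delicate point to be the boundary‑parameter bookkeeping in Step 2 — the verification that the class index of the scattering matrix is shifted by exactly $\delta$ — which rests on the relation $\varphi_{\delta}(0)=\vartheta_{\delta}(0)-\delta$, and this in turn on the compact support of $v_{o}$; everything else is either the elementary algebra of Step 1 or a bookkeeping use of already established results. As an alternative, one can argue entirely on the Dirac side: by Section \ref{trans}, $S_{\alpha}(\cdot,p,q)$ is the scattering matrix of the Dirac operator with potential $\mathcal{T}(p,q)$ and a boundary angle determined by $\alpha$ and $\varphi(0)$; then Step 1 and the gauge covariance $v\mapsto e^{2i\delta}v$ of the Dirac Jost function developed in Section \ref{dirac} yield the second identity directly.
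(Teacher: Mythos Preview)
Your proof is correct. Step 1 is exactly the computation that the paper carries out inside the proof of Theorem \ref{intro:thm:isoresonance_potentials} (and to which it simply refers back here), so there is no difference there.

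For Step 2 you take a genuinely different route from the paper. The paper argues directly on the Dirac side: by Lemma \ref{trans:lm:jost_functions} one has $S_{\alpha_{\delta}}(\cdot,p_{\delta},q_{\delta})=-\bS_{\delta_{o}+\delta}(\cdot,v_{\delta})$, and then Theorem \ref{dirac:thm:isoresonances_potentials} gives $\bS_{\delta_{o}+\delta}(\cdot,e^{2i\delta}v_{o})=e^{2i\delta}\bS_{\delta_{o}}(\cdot,v_{o})$, which is the second identity in two lines. This is precisely the ``alternative'' you sketch at the end. Your main argument instead shows that $S_{\alpha_{\delta}}(\cdot,p_{\delta},q_{\delta})$ and $e^{2i\delta}S_{\alpha_{o}}(\cdot,p_{o},q_{o})$ share the same pole sequence and the same class index $\beta_{o}+\delta$, and then invokes the injectivity encoded in Theorems \ref{intro:thm:scattering_inverse_problem} and \ref{intro:thm:resonances_inverse_problem}(ii) to conclude. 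This is valid and there is no circularity, since the proofs of those two theorems (and of Theorem \ref{intro:thm:isoresonance_potentials}) do not use the present statement. The trade-off is that the paper's route is shorter and constructive (it exhibits the Jost-function identity behind the scattering-matrix identity), whereas yours is a clean rigidity argument that stays on the Schr\"odinger side and uses only the already-packaged inverse results; your boundary-parameter bookkeeping via $\varphi_{\delta}(0)=\vartheta_{\delta}(0)-\delta$ is the exact analogue of the paper's relation \er{proof:eq:10}.
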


\begin{remark}
    In Theorem \ref{dirac:thm:isoresonances_potentials}, we
    describe the set of isoresonances potentials for Dirac operator with compactly
    supported potentials and we also describe the scattering matrices for such
    potentials. The first identity in
    (\ref{intro:eq:isoresonances_scattering_matrices}) implies that $\mathcal{T}$ maps $\Iso (p_{o},q_{o},\alpha_{o})$
    onto the the set of
    isoresonances potentials for Dirac operator. And it follows from the second
    identity in (\ref{intro:eq:isoresonances_scattering_matrices}) that the
    scattering matrices depend on $\delta$ as in case of Dirac
    operator.
\end{remark}

Using this theorem, we show that the scattering problem for equation
(\ref{intro:eq:equation}) with arbitrary boundary condition
(\ref{intro:eq:boundary_condition}) can be reduced to the scattering problem
with the Dirichlet boundary condition. For any $\alpha\in[0,\pi)$ we introduce
the mappings $\xi_{\alpha}: {\mathcal{Q}} \to{\mathcal{Q}}$ and $\phi_{\alpha
}: {\mathcal{Q}} \to[0,\pi)$ by
\[
\begin{gathered}
\xi_{\a}(p,q) = (\tilde{p},\tilde{q}) \iff (\tilde{p},\tilde{q},0) \in \Iso(p,q,\a),\\
\phi_{\a}(p,q) = \a + \int_0^{+\iy}(p(x) - \tilde{p}(x)) dx \mod \pi.
\end{gathered}
\]

\begin{theorem}
\label{intro:thm:reducing_of_scattering_problem} Let $\alpha\in[0,\pi)$. Then
the mapping $\xi_{\alpha}: {\mathcal{Q}} \to{\mathcal{Q}}$ is a bijection and
for any $(p,q) \in{\mathcal{Q}}$ we have
\[
S_{\alpha}(\cdot,p,q) = e^{-2i\phi_{\alpha}(p,q)} S_{0}(\cdot,\xi_{\alpha
}(p,q)).
\]

\end{theorem}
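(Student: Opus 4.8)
The plan is to deduce this statement directly from Theorem \ref{intro:thm:isoresonances_scattering_matrices} together with the properties of the isoresonance set established in Theorem \ref{intro:thm:isoresonance_potentials}. First I would observe that $\xi_\a$ is well defined: given $(p,q) \in \cQ$, Theorem \ref{intro:thm:isoresonance_potentials} describes $\Iso(p,q,\a)$ as the one-parameter family $\{(p_\d,q_\d,\a_\d) : \d \in [0,\pi)\}$, and the formula $\a_\d = \a + \vt_o(0) - \vt_\d(0) \mod \pi$ together with the stated monotonicity of $\d \mapsto \vt_\d(0)$ and the endpoint relation $\vt_\pi(0) = \vt_o(0) + \pi$ shows that $\d \mapsto \a_\d$ is a bijection of $[0,\pi)$ onto itself. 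Hence there is exactly one $\d = \d(p,q,\a)$ with $\a_\d = 0$, and setting $\xi_\a(p,q) := (p_\d,q_\d)$ gives a map $\cQ \to \cQ$ with $(\xi_\a(p,q),0) \in \Iso(p,q,\a)$, exactly as in the definition; uniqueness of such a triple with third entry $0$ makes $\xi_\a$ single-valued.

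Next I would prove that $\xi_\a$ is a bijection. Injectivity: if $\xi_\a(p,q) = \xi_\a(p',q') =: (\tilde p,\tilde q)$, then $(\tilde p,\tilde q,0)$ lies in both $\Iso(p,q,\a)$ and $\Iso(p',q',\a)$, so $\r^0(\tilde p,\tilde q) = \r^\a(p,q) = \r^\a(p',q')$; since $\r^\a$ is a bijection $\cQ \to \cR$ by Theorem \ref{intro:thm:resonances_inverse_problem}(i), we get $(p,q) = (p',q')$. Surjectivity: given any $(\tilde p,\tilde q) \in \cQ$, the triple $(\tilde p,\tilde q,0)$ belongs to $\Iso(\tilde p,\tilde q,0)$, and by the structure of the isoresonance family there is a $\d$ with $\a_\d = \a$; the corresponding $(p_\d,q_\d)$ then satisfies $(\tilde p,\tilde q,0) \in \Iso(p_\d,q_\d,\a)$, i.e. $\xi_\a(p_\d,q_\d) = (\tilde p,\tilde q)$. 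Alternatively one can simply note that the assignment $(p,q) \mapsto \xi_\a(p,q)$ corresponds, under the bijections $\r^\a$ and $\r^0$, to the identity on $\cR$, which makes bijectivity immediate.

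For the scattering identity, write $(\tilde p,\tilde q) = \xi_\a(p,q)$, so that $(\tilde p,\tilde q,0) \in \Iso(p,q,\a)$. Applying Theorem \ref{intro:thm:isoresonances_scattering_matrices} to the reference triple $(p_o,q_o,\a_o) = (p,q,\a)$ with the parameter value $\d$ selected so that $(p_\d,q_\d,\a_\d) = (\tilde p,\tilde q,0)$, the second identity in \er{intro:eq:isoresonances_scattering_matrices} reads $S_0(\cdot,\tilde p,\tilde q) = e^{2i\d} S_\a(\cdot,p,q)$, hence $S_\a(\cdot,p,q) = e^{-2i\d} S_0(\cdot,\xi_\a(p,q))$. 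It then remains to identify the phase: from $\a_\d = \a + \vt_o(0) - \vt_\d(0) \equiv 0 \pmod\pi$ we get $\d$'s effect expressed through $\vt$, but the cleaner route is to match against the definition of $\phi_\a$. Using Theorem \ref{intro:thm:scattering_inverse_problem}, which gives $S_\a(\cdot,p,q) \in \cS_\b$ with $\b = -\a - \f(0) \bmod \pi$ where $\f(0) = \int_0^\iy p$, and similarly $S_0(\cdot,\tilde p,\tilde q) \in \cS_{\tilde\b}$ with $\tilde\b = -\tilde\f(0) \bmod \pi$, $\tilde\f(0) = \int_0^\iy \tilde p$, comparing the asymptotics at $+\infty$ of both sides of $S_\a(\cdot,p,q) = e^{-2i\d}S_0(\cdot,\xi_\a(p,q))$ forces $2\b \equiv 2\tilde\b - 2\d \pmod{2\pi}$, i.e. $\d \equiv \b - \tilde\b \equiv \a + \f(0) - \tilde\f(0) \equiv \a + \int_0^\iy(p - \tilde p) \pmod\pi$, which is precisely $\phi_\a(p,q)$. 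Substituting $\d = \phi_\a(p,q) \bmod \pi$ into $S_\a(\cdot,p,q) = e^{-2i\d}S_0(\cdot,\xi_\a(p,q))$ yields the claimed formula.

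The main obstacle I anticipate is the bookkeeping of the mod-$\pi$ phases: one must be careful that $e^{-2i\d}$ is well defined when $\d$ is only determined modulo $\pi$ (it is, since $e^{-2i(\d+\pi)} = e^{-2i\d}$), and that the three different mod-$\pi$ relations — the one defining $\a_\d$, the one from Theorem \ref{intro:thm:scattering_inverse_problem}, and the one defining $\phi_\a$ — are consistently aligned so that the $\d$ appearing in Theorem \ref{intro:thm:isoresonances_scattering_matrices} is genuinely equal to $\phi_\a(p,q) \bmod \pi$ rather than merely congruent up to a sign or an extra shift. Everything else is a direct transcription of the already-proved theorems.
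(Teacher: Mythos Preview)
Your proof is correct and follows essentially the same route as the paper: both use Theorem \ref{intro:thm:isoresonance_potentials} to locate the unique $\d$ with $\a_\d = 0$, then read off the scattering identity $S_\a = e^{-2i\d}S_0(\cdot,\xi_\a(p,q))$ from Theorem \ref{intro:thm:isoresonances_scattering_matrices}, and handle bijectivity of $\xi_\a$ via the one-parameter structure of the isoresonance family.

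The one tactical difference is in identifying $\d = \phi_\a(p,q)$. You do it externally, by comparing the asymptotics $e^{2i\b}$ of the two scattering matrices via Theorem \ref{intro:thm:scattering_inverse_problem}. The paper does it internally: it recalls from the proof of Theorem \ref{intro:thm:isoresonance_potentials} that $\vt_\d(0) = \d + \vp_\d(0)$ where $\vp_\d(0) = \int_0^\infty p_\d$, and then computes directly
\[
\phi_\a(p,q) = \a + \vp_o(0) - \vp_{\d'}(0) = \bigl(\a + \vt_o(0) - \vt_{\d'}(0)\bigr) + \d' = \a_{\d'} + \d' = \d' \pmod \pi,
\]
which sidesteps entirely the mod-$\pi$ bookkeeping you flag as the main obstacle. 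Your route works too, but note a sign slip: from $2\b \equiv 2\tilde\b - 2\d$ one gets $\d \equiv \tilde\b - \b$, not $\b - \tilde\b$; your final value $\a + \f(0) - \tilde\f(0)$ is nonetheless the right one, since that is precisely $\tilde\b - \b$.
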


\begin{remark}
    In Section \ref{proof}, the mappings $\xi_{\alpha}$ and $\phi_{\alpha}$ 
    will be constructed implicitly via
    (\ref{intro:eq:isoresonances_potentials}) and
    (\ref{intro:eq:isoresonances_potentials_ivp}). In Corollary
    \ref{dirac:cor:reducing_of_scattering_problem} we obtain similar result for
    Dirac operators. In contrast to Theorem
    \ref{intro:thm:reducing_of_scattering_problem}, the mappings $\xi_{\alpha}$ and 
    $\phi_{\alpha}$ in the case of the Dirac operator are
    significantly straightforward.
\end{remark}

\subsection{A short review}

There are a lot papers devoted to resonances for various operators see
articles \cite{F97, H99, K04a, S00, Z87} concerning to one-dimensional
Schr\"odinger operator with compactly supported potentials and the book
\cite{DZ19} and the references therein concerning to another operators. In
particular, the inverse resonances problem (uniqueness, reconstruction and
characterization) were solved by Korotyaev for a Schr\"odinger operator with a
compactly supported real-valued potential on the half-line \cite{K04a} and on
the real line \cite{K05}. See also Zworski \cite{Z01} and
Brown-Knowles-Weikard \cite{BKW03} concerning the uniqueness problem. The
stability problems for resonances of one-dimensional Schr{\"o}dinger operators
were considered in papers \cite{K04b, MSW10, B12}.

Resonances are discussed for various perturbations of one-dimensional
Schr\"odinger operator. For example, Sch\"odinger operator with periodic plus
a compactly supported potential were considered by Firsova \cite{F84},
Korotyaev \cite{K11} and Korotyaev-Schmidt \cite{KS12}. Resonances for
Schr\"odinger operator with a step-like potential were discussed by
Christiansen \cite{C06}. Stark operator with compactly supported potentials
was studied in \cite{FH21, K17, K18}. Resonances were also considered for
three and fourth order differential operator with compactly supported
coefficients on the line, see \cite{K19, BK19}. Note that the inverse
resonance scattering was discussed in \cite{K11, KS12, C06, K17}.

In our paper we use properties of resonances of one-dimensional Dirac
operator. For this operator, global estimates on resonances in terms of
potential were obtained by Korotyaev \cite{K14}. In the case of smooth
potentials, the asymptotic distribution of resonances and the forbidden domain
was obtained by Iantchenko and Korotyaev \cite{IK14a, IK14b}. They also
considered resonances of radial Dirac operator in \cite{IK15}. The inverse
resonances problem was solved by Korotyaev and Mokeev for Dirac operator with
compactly supported potentials on the half-line \cite{KM21} and on the
real-line \cite{KM23}. The stability problem for resonances of Dirac operator
on the half-line was solved by Mokeev \cite{M22}. In the case of
super-exponential decaying potentials the inverse resonances problem was
solved by Gubkin \cite{G22}. There is a number of papers dealing with other
related problems for the one-dimensional Dirac operators, for instance, the
resonances were studied in application to the Dirac fields in the black holes,
see, e.g., \cite{I17, I18}.

The properties of resonances of Schr\"odinger equations with linearly energy
dependent potentials have not yet been studied. However, there are a lot of
results about direct and inverse scattering problem for this equation. The
scattering theory of Schr\"odinger equations with linearly energy dependent
potentials was first discussed by Cornille \cite{Cornille70} and Weiss and
Scharf \cite{WS71}. The inverse scattering problem for Schr\"odinger equations
with linearly energy dependent potentials was solved by Jaulent and Jean
\cite{J72,JJ72,JJ76a,JJ76b} on the half-line and on the real line when the
potentials are real-valued and there are no bound states. Kamimura
\cite{K07a,K07b,K08a,K08b} improved the results of Jaulent and Jean to a wider
class of potentials, but when also there are no bound states. Further, Hryniv
and Manko \cite{HM20} solved the inverse scattering problem on the half-line
for class of Miura potentials, transforming Schr\"odinger equations with
linearly energy dependent potentials to Dirac operator on the half-line.

Using the Riemann-Hilbert problem approach the inverse scattering problem with
purely imaginary or real potential $p$ was considered by Sattinger and
Szmigielski \cite{SS95,SS96}. In general, they discussed the case when there
are no bound states, but they also considered the case when there is one bound
states. Moreover, they considered this equation in application to the
Camassa-Holm equation. The inverse scattering problem with purely imaginary
potential $p$ was also considered by Aktosun, Klaus and van der Mee
\cite{AM90,AM91,AKM98a,AKM98b}.

Our paper is organized as follows: In Section \ref{jost} we obtain estimates
of the Jost function for potentials from ${\mathcal{P}}$. In Section
\ref{comp} we prove the main theorems for potentials from ${\mathcal{P}}$. In
Section \ref{dirac} we give results about the inverse resonances problem for
the Dirac operator. In Section \ref{trans} we describe the transformation of
equation (\ref{intro:eq:equation}) to the Dirac equation. In Section
\ref{proof} we prove the main theorems for potentials from ${\mathcal{Q}}$.

\section{Jost solution} \label{jost}

In this section we study properties of the Jost solution $y_+(\cdot,k)$
of equation (\ref{intro:eq:equation}) satisfying (\ref{intro:eq:jostsol_def}) when $(p,q) \in \cP$.
Recall that for any $k \in \C$ the Jost solution $y_+(\cdot,k)$ is a solution of the following integral equation
$$
    y_+(x,k) = e^{ikx} +
    \int_x^{\g} \frac{\sin k(t - x)}{k} V(t,k) y_+(t,k) dt,\qq x \in \R_+,
$$
where $V(t,k) = q(t) + 2 k p(t)$ for all $(t,k) \in \R_+ \ts \C$.
In order to study properties of the Jost solution we introduce the
function $y(x,k) = e^{-ikx}y_+(x,k)$, which satisfies
\[
\label{jost:eq:integral_equation}
    y(x,k) = 1 + \int_x^{\g} G(t - x, k) V(t,k) y(t,k) dt,
    \qq (x,k) \in \R_+ \ts \C,
\]
where $G(t,k) = \frac{e^{2ikt} - 1}{2ik}$ for all $(t,k) \in \R_+
\ts \C$. We introduce the function $u(x) = e^{i \int_x^{\g} p(s)
ds}$ for all $0 \leq x \leq \g$, which is a solution of the initial
value problem
$$
    u'(x) = -i p(x) u(x),\qq x \geq 0,\qq u(\g) = 1,
$$
and then
\[ \label{jost:eq:1}
    u(x) = 1 + i \int_x^{\g} p(s) u(s) ds,\qq 0 \leq x \leq \g.
\]
We rewrite $y$ in the form $y(x,k) = u(x) + v(x,k)$. Then we have
\[ \label{jost:eq:2}
    u(x) + v(x,k) = 1 +  \int_x^{\g} G(t - x, k) V(t,k)
    u(t)dt + \int_x^{\g} G(t - x, k) V(t,k) v(t,k)dt
\]
Using $V(t,k) = q(t) + 2kp(t)$ and $2kG(t,k) = -ie^{2ikt} + i$, we get
\begin{multline*}
    \int_x^{\g} G(t - x, k) V(t,k) u(t)dt =
    \int_x^{\g} G(t - x, k) q(t) u(t)dt + \int_x^{\g} 2kG(t - x, k) p(t) u(t)dt\\
    = \int_x^{\g} G(t - x, k) q(t) u(t)dt - i \int_x^{\g} e^{2ik(t-x)}
    p(t) u(t)dt + i \int_x^{\g} p(t) u(t)dt.
\end{multline*}
Using also (\ref{jost:eq:1}), we get
\begin{multline} \label{jost:eq:3}
    \int_x^{\g} G(t - x, k) V(t,k) u(t)dt
    \\
    = \int_x^{\g} G(t - x, k) q(t) u(t)dt - i \int_x^{\g} e^{2ik(t-x)} p(t) u(t)dt + u(x) - 1.
\end{multline}
Substituting (\ref{jost:eq:3}) in (\ref{jost:eq:2}), we obtain
the following integral equation for the function $v$:
\[
\label{jost:eq:integral_equation_v}
    v(x,k) = v_o(x,k) + \int_x^{\g} G(t - x, k) V(t,k) v(t,k)dt,
\]
where
\[
\label{jost:eq:definition_v_o}
    v_o(x,k) = \int_x^{\g} G(t - x, k) q(t) u(t)dt - i \int_x^{\g} e^{2ik(t-x)} p(t) u(t)dt.
\]

At first, we study properties of the function $v_o$.
Recall that $\norm{f}_{1}$ and $\norm{f}_{w}$ was defined in (\ref{intro:eq:norms}) and
the constants $C_1$ and $C_2$ was defined in (\ref{intro:eq:constants}).
For any $(p,q) \in \cP$ and $k \neq 0$ we introduce the following functions:
\[ \label{jost:eq:omega_definition}
    \omega_o(k) = \min\left(\norm{q}_{w} + \norm{p}_{1}, \frac{C_1}{\abs{k}}\right),\qq
    \omega_1(k) = \min\left(\norm{q}_{w} + 2 \abs{k}\norm{p}_{w},
    \frac{\norm{q}_{1}}{\abs{k}} + 2\norm{p}_{1}\right),
\]
Below we need the following notation
$$
    k_- = \frac{1}{2}\left(\abs{\Im k} - \Im k\right)=
    \begin{cases}
        0 &\text{if $k \in \ol \C_+$}\\
        \abs*{\Im k}  &\text{if $k \in \C_-$}
    \end{cases}.
$$

\begin{lemma} \label{jost:lm:v_o_properties}
    Let $(p,q) \in \cP$ and let $v_o$ be given by
    (\ref{jost:eq:definition_v_o}). Then each $v_o(x,\cdot)$, $x \geq 0$, is
    an entire function of exponential type and has the following representation
    \[ \label{jost:eq:v_o_representation}
        v_o(x,k) = \int_0^{\g - x} e^{2iks}\left( - i p(s + x)u(s + x)
        + \int_{s + x}^{\g} q(t)u(t) dt\right) ds,
    \]
    for all $(x,k) \in [0,+\iy) \ts \C$, and the following estimate holds true:
    \[ \label{jost:eq:v_o_bounds}
        \abs{v_o(x,k)} \leq \omega_o(k) e^{2 (\g - x) k_-}.
    \]
\end{lemma}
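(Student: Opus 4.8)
The plan is to prove the three assertions in the order stated: the closed form \er{jost:eq:v_o_representation} first, since the analyticity and the exponential type fall out of it at once, and then the two-sided bound \er{jost:eq:v_o_bounds}. To get \er{jost:eq:v_o_representation} from \er{jost:eq:definition_v_o}, I would rewrite the kernel of the first integral as $G(t-x,k)=\int_{0}^{t-x}e^{2iks}\,ds$, substitute, and apply Fubini's theorem — legitimate because $qu\in L^{1}(x,\g)$ and $e^{2iks}$ is bounded on the triangle $\{x\le t\le\g,\ 0\le s\le t-x\}$ — to interchange the order of integration; rewriting that triangle as $\{0\le s\le\g-x,\ s+x\le t\le\g\}$ turns the first integral into $\int_{0}^{\g-x}e^{2iks}\big(\int_{s+x}^{\g}q(t)u(t)\,dt\big)ds$. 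In the second integral of \er{jost:eq:definition_v_o} the substitution $s=t-x$ gives $-i\int_{0}^{\g-x}e^{2iks}p(s+x)u(s+x)\,ds$, and adding the two yields \er{jost:eq:v_o_representation}. Writing $g_{x}(s):=-ip(s+x)u(s+x)+\int_{s+x}^{\g}q(t)u(t)\,dt$, we have $g_{x}\in L^{1}(0,\g-x)$, because $p\in W^{1,1}(\R_{+})\subset L^{\iy}$, $q\in L^{1}(\R_{+})$, and $u$ is continuous (hence bounded) on $[0,\g]$ by \er{jost:eq:1}; consequently $k\mapsto v_{o}(x,k)$ is the Fourier--Laplace transform of a fixed compactly supported $L^{1}$ function, so it is entire (Morera together with dominated convergence) and of exponential type at most $2(\g-x)$, since $\abs{v_{o}(x,k)}\le\norm{g_{x}}_{1}e^{2(\g-x)\abs{\Im k}}$.

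For the estimate \er{jost:eq:v_o_bounds}, the recurring device is the elementary inequality $\abs{e^{2iks}}=e^{-2s\Im k}\le e^{2(\g-x)k_{-}}$, valid for every $s\in[0,\g-x]$ and every $k\in\C$ (check the cases $\Im k\ge0$ and $\Im k<0$ separately). Inserting it into \er{jost:eq:v_o_representation} and pulling the factor out gives $\abs{v_{o}(x,k)}\le e^{2(\g-x)k_{-}}\norm{g_{x}}_{1}$, and, bounding $u$ by $1$ on $[0,\g]$ and using Fubini once more on the double integral, $\norm{g_{x}}_{1}\le\int_{x}^{\g}\abs{p(t)}\,dt+\int_{x}^{\g}(t-x)\abs{q(t)}\,dt\le\norm{p}_{1}+\norm{q}_{w}$; this is the first entry of $\omega_{o}$. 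For the entry $C_{1}/\abs{k}$ I would go back to \er{jost:eq:definition_v_o}: in the $q$-integral use $\abs{G(t-x,k)}\le\frac{\abs{e^{2ik(t-x)}}+1}{2\abs{k}}\le\frac{e^{2(\g-x)k_{-}}}{\abs{k}}$, contributing $\frac{\norm{q}_{1}}{\abs{k}}e^{2(\g-x)k_{-}}$; in the $p$-integral integrate by parts, using the antiderivative $\int e^{2ik(t-x)}\,dt=\frac{1}{2ik}e^{2ik(t-x)}$ and $(pu)'=p'u+pu'=(p'-ip^{2})u$ (the last step from $u'=-ipu$), which produces the boundary term $\frac{1}{2ik}\big(p(\g)e^{2ik(\g-x)}-p(x)u(x)\big)$ — recall $u(\g)=1$ — plus $-\frac{1}{2ik}\int_{x}^{\g}(p'-ip^{2})(t)u(t)e^{2ik(t-x)}\,dt$. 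Estimating both with the same exponential bound and with $\abs{p(x)}\le\abs{p(\g)}+\norm{p'}_{1}$ and $\int_{x}^{\g}\abs{p}^{2}\le\norm{p}_{2}^{2}$, one gets a bound of the shape $\frac{e^{2(\g-x)k_{-}}}{\abs{k}}\big(\abs{p(\g)}+\norm{p'}_{1}+\norm{p}_{2}^{2}\big)$; adding the $q$-contribution gives $\abs{v_{o}(x,k)}\le\frac{C_{1}}{\abs{k}}e^{2(\g-x)k_{-}}$ with $C_{1}$ as in \er{intro:eq:constants}, and taking the minimum of the two estimates is exactly \er{jost:eq:v_o_bounds}.

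The only step that is not routine bookkeeping is the integration by parts in the $p$-integral: one must differentiate the product $p\,u$, eliminate $u'$ through the defining ODE so that everything is expressed through $p$ and $p'$, and then repack the boundary term and the remaining integral into precisely the combination $\abs{p(\g)}+\norm{p'}_{1}+\norm{p}_{2}^{2}$ of $C_{1}$ — all the while carrying the factor $e^{2(\g-x)k_{-}}$ uniformly in $x\in[0,\g]$. A minor point to keep in mind is the bound on $u$: from \er{jost:eq:1} one has $\abs{u(x)}=\abs{e^{i\int_{x}^{\g}p(s)\,ds}}$, which equals $1$ when $p$ is real-valued and is in any case bounded on $[0,\g]$, so it only affects the numerical constants. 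Everything else is Fubini and elementary modulus inequalities.
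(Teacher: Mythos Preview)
Your proposal is correct and matches the paper's approach: Fubini for \er{jost:eq:v_o_representation}, the pointwise bound $|e^{2iks}|\le e^{2(\g-x)k_-}$ for the first half of $\omega_o$, and integration by parts against $e^{2iks}$ (using $u'=-ipu$) for the $C_1/|k|$ half --- the only cosmetic difference being that the paper integrates by parts directly in \er{jost:eq:v_o_representation} (so the $q$-term also passes through the IBP) rather than splitting off the $q$-integral and bounding it via $|G|\le e^{2(\g-x)k_-}/|k|$ as you do. Your caveat about $|u|$ is apt: the paper simply uses $|u(x)|=1$, which is exact for real $p$.
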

\begin{proof}
    Substituting $G(t,k) = \int_0^{t} e^{2iks}ds$ in (\ref{jost:eq:definition_v_o}), we get
    $$
        v_o(x,k) = \int_x^{\g} \int_0^{t - x} e^{2iks}ds q(t) u(t)dt - i \int_x^{\g} e^{2ik(t-x)} p(t) u(t)dt.
    $$
    Then we obtain (\ref{jost:eq:v_o_representation}) by changing the order of integration in the last formula.
    Due to $\abs{u(x)} = 1$ for all $x \geq 0$, it follows from (\ref{jost:eq:v_o_representation}) that
    $$
        \abs{v_o(x,k)} \leq \int_0^{\g - x} \abs{e^{2iks}}
        \left(\int_{s + x}^{\g} \abs{q(t)} dt + \abs{p(s + x)}\right) ds.
    $$
    Since $\abs{e^{2iks}} \leq e^{2 (\g - x) k_-}$ for all $(s,k) \in [0,\g - x] \ts \C$, we have
    $$
        \abs{v_o(x,k)} \leq e^{2 (\g-x)k_-}
        \left(\int_0^{\g - x} \int_{s + x}^{\g} \abs{q(t)} dt ds + \int_0^{\g - x} \abs{p(s + x)}ds\right).
    $$
    Changing the order of integration, we get
    \[ \label{jost:eq:4}
        \abs{v_o(x,k)} \leq e^{2 (\g-x) k_-} (\norm{q}_{w} + \norm{p}_{1}),\qq (x,k) \in [0,+\iy) \ts \C.
    \]
    Using integration by parts in (\ref{jost:eq:v_o_representation}), we have
    \begin{multline*}
        v_o(x,k) = \frac{1}{2ik} \left(\int_x^{\g} q(t) u(t) dt - i p(x)u(x) + ip(\g) e^{2ik(\g - x)}\right.\\
        + \left.\int_0^{\g - x} e^{2iks}u(s+x)\left(q(s + x)+ i p'(s + x) + p^2(s + x)\right) \right) ds.
    \end{multline*}
    Recall that $\abs{u(s+x)} = 1$ and $\abs{e^{2iks}} \leq e^{2 (\g -
    x) k_-}$ for all $(s,k) \in [0,\g - x] \ts \C$. Then, we have
    $$
        \abs{v_o(x,k)} \leq
        \frac{1}{2\abs{k}} \left(\norm{q}_{1} + \abs{p(x)}
        + e^{2 (\g - x) k_-} \abs{p(\g)} + e^{2 (\g - x) k_-} \left(\norm{q}_{1}+ \norm{p'} +
        \norm{p^2}\right)\right) ds.
    $$
    Since $e^{2 (\g - x) k_-} \geq 1$ and $\abs{p(x)} \leq \abs{p(\g)} + \norm{p'}$, we get
    \[ \label{jost:eq:5}
        \abs{v_o(x,k)} \leq e^{2 (\g - x) k_-}
        \frac{\norm{q}_{1} + \abs{p(\g)} + \norm{p'} + \norm{p^2}}{\abs{k}} =
        e^{2 (\g - x) k_-} \frac{C_1}{\abs{k}}.
    \]
    Combining (\ref{jost:eq:4}) and (\ref{jost:eq:5}), we obtain (\ref{jost:eq:v_o_bounds}).
\end{proof}

A solution of integral equation (\ref{jost:eq:integral_equation_v})
can be constructed as follows
\[\label{jost:eq:v_iterations}
    \begin{aligned}
        v(x,k) &= \sum_{n \geq 0} v_n(x,k),\qq (x,k) \in \R_+ \ts \C,\\
        v_n(x,k) &= \int_x^{\g} G(t-x,k) V(t,k) v_{n-1}(t,k) dt,\qq n \geq 1.
    \end{aligned}
\]

\begin{lemma} \label{jost:lm:v_function}
    Let $(p,q) \in \cP$ and let $v$, $v_n$ be given by
    (\ref{jost:eq:v_iterations}). Then for any $x \in [0,+\iy)$ the
    functions $v(x,\cdot)$, $v_n(x,\cdot)$, $n \geq 0$, are entire of
    exponential type
    and for any $(x,k) \in [0,+\iy) \ts \C$ they satisfy the following estimates:
    \begin{align}
        \label{jost:eq:v_iterations_estimate}
        \abs{v_n(x,k)} &\leq \omega_o(k) e^{2 (\g - x) k_-}
        \frac{\omega_1(k)^n}{n!},\\
        \label{jost:eq:v_estimates}
        \abs{v(x,k)} &\leq \omega_o(k) e^{2 (\g - x) k_- + \omega_1(k)},\\
        \label{jost:eq:v_estimates2}
        \abs{v(x,k) - v_o(x,k)} &\leq \omega_o(k)\omega_1(k) e^{2 (\g - x) k_- + \omega_1(k)}.
    \end{align}
\end{lemma}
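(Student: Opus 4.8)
The plan is to dominate the Neumann series \eqref{jost:eq:v_iterations} by a scalar Volterra iteration, the normalization $e^{2(\g-x)k_-}$ being carried along essentially for free. The first ingredient is a uniform kernel bound. From the two representations $G(t,k)=\int_0^t e^{2iks}\,ds$ and $G(t,k)=\frac{e^{2ikt}-1}{2ik}$, using $\abs{e^{2iks}}=e^{-2s\Im k}\le e^{2tk_-}$ for $0\le s\le t$ in the first and $\abs{e^{2ikt}}=e^{-2t\Im k}\le e^{2tk_-}$ together with $e^{2tk_-}\ge1$ in the second, one gets in all cases
\[
    \abs{G(t,k)}\le\min\Bigl(t,\tfrac{1}{\abs{k}}\Bigr)e^{2tk_-},\qquad t\ge0,\ k\neq0,
\]
which combined with $\abs{V(t,k)}\le\abs{q(t)}+2\abs{k}\,\abs{p(t)}$ controls the kernel $G(t-x,k)V(t,k)$ appearing in \eqref{jost:eq:v_iterations}.

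First I would prove \eqref{jost:eq:v_iterations_estimate} by induction on $n$. Put $\rho(t)=\min(t,1/\abs{k})$, $g(t)=\abs{q(t)}+2\abs{k}\,\abs{p(t)}$, and pass to the rescaled quantities $\tilde v_n(x)=e^{-2(\g-x)k_-}\abs{v_n(x,k)}$ (here $v_0:=v_o$). Since $2(t-x)-2(\g-x)+2(\g-t)=0$, the kernel bound gives, for $x\le t\le\g$,
\[
    e^{-2(\g-x)k_-}\,\abs{G(t-x,k)V(t,k)}\,e^{2(\g-t)k_-}\le\min\Bigl(t-x,\tfrac{1}{\abs{k}}\Bigr)g(t)\le\rho(t)g(t),
\]
so \eqref{jost:eq:v_iterations} turns into the scalar recursion $\tilde v_n(x)\le\int_x^\g\rho(t)g(t)\,\tilde v_{n-1}(t)\,dt$, with base bound $\tilde v_0(x)=e^{-2(\g-x)k_-}\abs{v_o(x,k)}\le\omega_o(k)$, which is \eqref{jost:eq:v_o_bounds}. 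Introduce the monotone ``clock'' $\sigma(x)=\int_x^\g\rho(t)g(t)\,dt\in\AC[0,\g]$: it satisfies $\sigma(\g)=0$, $\sigma'=-\rho g$ a.e., and, by the elementary inequality $\int\min(a,b)\le\min(\int a,\int b)$ applied to $\rho(t)=\min(t,1/\abs{k})$,
\[
    \sigma(x)\le\sigma(0)\le\min\Bigl(\norm{q}_w+2\abs{k}\norm{p}_w,\ \tfrac{1}{\abs{k}}\norm{q}_1+2\norm{p}_1\Bigr)=\omega_1(k).
\]
An induction now yields $\tilde v_n(x)\le\omega_o(k)\,\sigma(x)^n/n!$: in the inductive step $\rho g\,\sigma^{n-1}\,dt=-\tfrac1n\,d(\sigma^n)$ and $\sigma(\g)=0$ give $\int_x^\g\rho g\,\sigma^{n-1}=\sigma(x)^n/n$. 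Bounding $\sigma(x)\le\omega_1(k)$ is then exactly \eqref{jost:eq:v_iterations_estimate}; the value $k=0$ is recovered by continuity (or by running the same argument with $\rho(t)=t$ and $\omega_1(0)=\norm{q}_w$).

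Estimates \eqref{jost:eq:v_estimates} and \eqref{jost:eq:v_estimates2} follow by summing \eqref{jost:eq:v_iterations_estimate}: $\sum_{n\ge0}\omega_1(k)^n/n!=e^{\omega_1(k)}$ gives the first, and $\sum_{n\ge1}\omega_1(k)^n/n!=e^{\omega_1(k)}-1\le\omega_1(k)e^{\omega_1(k)}$ gives the second. The analyticity claims are routine: $v_o(x,\cdot)$ is entire by Lemma \ref{jost:lm:v_o_properties}, hence by induction each $v_n(x,\cdot)$ is entire, being the integral over the compact interval $[x,\g]$ of a family of entire functions depending continuously on $t$ and bounded locally uniformly in $k$; by \eqref{jost:eq:v_iterations_estimate} the series converges locally uniformly in $k$, so $v(x,\cdot)$ is entire as well. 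For the type, note that $\omega_o(k)$ and $\omega_1(k)$ stay bounded for $\abs{k}\ge1$ and $k_-\le\abs{k}$, so \eqref{jost:eq:v_iterations_estimate}--\eqref{jost:eq:v_estimates} give $\abs{v_n(x,k)}+\abs{v(x,k)}\le Ce^{2(\g-x)\abs{k}}$ there; hence $v_n(x,\cdot)$ and $v(x,\cdot)$ are of exponential type at most $2\g$.

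The only genuine difficulty is arranging the iteration so that it produces the factorial. The weight $\min(t-x,1/\abs{k})$ coming from $\abs{G(t-x,k)}$ depends on $x$, so it must first be replaced by the $x$-independent majorant $\rho(t)=\min(t,1/\abs{k})$; only then does the primitive $\sigma(x)=\int_x^\g\rho g$ serve as a monotone substitution variable with $\sigma'=-\rho g$, which is what converts the iterated integral into $\sigma(x)^n/n!$. One then has to observe that the two competing expressions defining $\omega_1(k)$ are precisely the two bounds one gets for $\sigma(0)$ via $\int\min\le\min\int$. Everything else — the passage to $\tilde v_n$, the exact telescoping of the factors $e^{2(\g-x)k_-}$ through $2(t-x)-2(\g-x)+2(\g-t)=0$, and the summation of the series — is bookkeeping.
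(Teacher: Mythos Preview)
Your proof is correct and follows essentially the same approach as the paper's: both bound $\abs{G(t,k)}$ by $\min(t,1/\abs{k})e^{2tk_-}$, replace $t-x$ by $t$ to make the weight independent of the lower limit, and extract the factorial from the resulting Volterra iteration. The only cosmetic difference is that the paper unrolls the recursion into an explicit $n$-fold integral over the simplex $D_n(x)=\{x\le t_1\le\cdots\le t_n\le\g\}$ and runs the two kernel bounds $\abs{G}\le e^{2tk_-}/\abs{k}$ and $\abs{G}\le te^{2tk_-}$ as separate computations before taking the minimum, whereas you keep the recursion, merge the two bounds into $\rho(t)=\min(t,1/\abs{k})$ from the start, and induct via the antiderivative $\sigma(x)=\int_x^\g\rho g$; the content is identical.
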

\begin{proof}
    It follows from (\ref{jost:eq:v_iterations}) that
    \[ \label{jost:eq:6}
        v_n(x,k)=\int_{D_n(x)} \left( \prod_{1 \leq j \leq n} G(t_j -
        t_{j-1}, k) V(t_j, k) \right) v_o(t_n,k)dt,\qq dt= dt_1 \ldots dt_n,
    \]
    where $D_n(x) = \set{(t_1,\ldots, t_n) \in \R_+^n \given x = t_0
    \leq t_1 \ldots \leq t_n \leq \g}$. Recall that $\abs{G(t,k)} \leq
    \frac{e^{2 t k_-}}{\abs{k}}$ for all $t \in [0,\g]$ and $k \in \C
    \setminus \{0\}$. Substituting this estimate in (\ref{jost:eq:6}) and
    using (\ref{jost:eq:v_o_bounds}), we get
    \[ \label{jost:eq:7}
        \begin{aligned}
            \abs{v_n(x,k)} &\leq \int_{D_n(x)} \left( \prod_{1 \leq j \leq n}
            \frac{e^{2 (t_j - t_{j-1}) k_-}}{\abs{k}} \abs{V(t_j, k)} \right)
            \omega_o(k) e^{2 (\g - t_n) k_-} dt\\
            &\leq \frac{\omega_o(k) e^{2 (\g - x) k_-}}{\abs{k}^n} \int_{D_n(x)}
            \prod_{1 \leq j \leq n} \abs{V(t_j, k)} dt \leq
            \omega_o(k) e^{2 (\g - x) k_-}\frac{\norm{V(\cdot,k)}^n}{\abs{k}^n n!}.
        \end{aligned}
    \]
    Now, we use another estimate $\abs{G(t,k)} \leq e^{2 t k_-}t$ for
    all $t \in [0,\g]$ and $k \in \C$. Substituting this estimate in
    (\ref{jost:eq:6}) and using (\ref{jost:eq:v_o_bounds}), we get
    \[ \label{jost:eq:8}
        \begin{aligned}
            \abs{v_n(x,k)} &\leq \int_{D_n(x)} \left( \prod_{1 \leq j \leq n}
            e^{2 (t_j - t_{j-1}) k_-}(t_j - t_{j-1}) \abs{V(t_j, k)} \right)
            \omega_o(k) e^{2 (\g - t_n) k_-} dt
            \\
            &\leq \omega_o(k) e^{2 (\g - x) k_-} \int_{D_n(x)} \prod_{1 \leq j \leq n}
            (t_j - t_{j-1}) \abs{V(t_j, k)} dt
            \\
            &\leq \omega_o(k) e^{2 (\g - x) k_-} \int_{D_n(x)}
            \prod_{1 \leq j \leq n} t_j \abs{V(t_j, k)} dt
            \leq \omega_o(k) e^{2 (\g - x) k_-} \frac{\norm{V(\cdot,k)}_{w}^n}{n!}.
        \end{aligned}
    \]
    Combining (\ref{jost:eq:7}) and (\ref{jost:eq:8}), we obtain (\ref{jost:eq:v_iterations_estimate}).
    Moreover, since $G(x,\cdot)$, $V(x,\cdot)$ and $v_o(x, \cdot)$ are
    entire functions, it follows from representation (\ref{jost:eq:6}) that
    the function $v_n(x,\cdot)$ is entire for any $n \geq 0$.  Due to
    estimate (\ref{jost:eq:v_iterations_estimate}) holds true, it follows that series in
    (\ref{jost:eq:v_iterations}) converges uniformly for $x \in [0,\g]$ on any bounded
    subset of $\C$ and then the function $v(x,\cdot)$ is entire.
    Using (\ref{jost:eq:v_iterations}) and (\ref{jost:eq:v_iterations_estimate}), we get
    $$
        \abs{v(x,k)} \leq \sum_{n \geq 0} \omega_o(k) e^{2 (\g - x) k_-}
        \frac{\omega_1(k)}{n!} = \omega_o(k) e^{2 (\g - x) k_- + \omega_1(k)},
    $$
    and
    $$
        \begin{aligned}
            \abs{v(x,k) - v_o(x,k)} &\leq \sum_{n \geq 1} \abs{v_n(x,k)}
            \leq \sum_{n \geq 1} \omega_o(k) e^{2 (\g - x) k_-} \frac{\omega_1(k)^n}{n!}
            \\
            &\leq \omega_o(k)\omega_1(k) e^{2 (\g - x) k_-}
            \sum_{n \geq 0} \frac{\omega_1(k)^n}{n! (n+1)}
            \leq \omega_o(k)\omega_1(k) e^{2 (\g - x) k_- + \omega_1(k)}.
        \end{aligned}
    $$
\end{proof}

Recall that we was looking for the Jost solution $y_+(\cdot,k)$ of equation (\ref{intro:eq:equation})
in the form $y_+(x,k) = e^{ikx} y(x,k)$, where $y(x,k) = u(x) + v(x,k)$ and $v$ is a solution of
integral equation (\ref{jost:eq:integral_equation_v}). So that Lemma \ref{jost:lm:v_function} implies
the following lemma.
\begin{lemma} \label{jost:lm:jost_solution}
    Let $(p,q) \in \cP$. Then for any $k \in \C$ there exists a unique Jost solution $y_+(\cdot,k)$
    of equation (\ref{intro:eq:equation}) satisfying (\ref{intro:eq:jostsol_def}).
    Moreover, the function $y_+(x,\cdot)$, $x \geq 0$, is an entire function of exponential type.
\end{lemma}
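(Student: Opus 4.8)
The plan is to read the statement off the work already done: Lemma~\ref{jost:lm:v_function} furnishes a solution $v$ of the Volterra equation (\ref{jost:eq:integral_equation_v}) together with its entireness and exponential type, so it only remains to pass back to $y_+$, to check that $y_+$ really solves (\ref{intro:eq:equation})--(\ref{intro:eq:jostsol_def}), and to establish uniqueness.

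For existence I would set $y(x,k)=u(x)+v(x,k)$, with $v=\sum_{n\ge0}v_n$ the Neumann series (\ref{jost:eq:v_iterations}), and put $y_+(x,k)=e^{ikx}y(x,k)$. The manipulations (\ref{jost:eq:2})--(\ref{jost:eq:3}), which use only (\ref{jost:eq:1}) together with $V=q+2kp$ and $2kG(t,k)=-ie^{2ikt}+i$, form a chain of equivalences, so that ``$v$ solves (\ref{jost:eq:integral_equation_v})'' is the same as ``$y$ solves (\ref{jost:eq:integral_equation})'', hence the same as ``$y_+$ solves the Jost integral equation displayed at the start of this section''. Differentiating that integral equation twice in $x$ (the kernel $k^{-1}\sin k(t-x)$ is entire in $x$, and the boundary terms it produces vanish since $\sin 0=0$) shows that $y_+(\cdot,k)$ solves (\ref{intro:eq:equation}). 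For $x\ge\g$ the integral runs over the empty interval, so $v(x,k)=0$ and $u(x)=1$, whence $y_+(x,k)=e^{ikx}$, i.e.\ (\ref{intro:eq:jostsol_def}).

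For uniqueness, fix $k\in\C$ (reading $k^{-1}\sin k(t-x)$ as $t-x$ when $k=0$) and let $w$ be the difference of two Jost solutions for this $k$; then $w(\cdot,k)$ is continuous, vanishes on $[\g,+\iy)$, and satisfies $w(x,k)=\int_x^{\g}k^{-1}\sin k(t-x)\,V(t,k)\,w(t,k)\,dt$. Using $\abs{k^{-1}\sin k(t-x)}\le(t-x)e^{(t-x)\abs{\Im k}}\le t\,e^{\g\abs{\Im k}}$ for $0\le x\le t\le\g$ and iterating $n$ times over the simplices $D_n(x)$, exactly as in the proof of Lemma~\ref{jost:lm:v_function}, one gets $\sup_{[0,\g]}\abs{w(\cdot,k)}\le\big(\sup_{[0,\g]}\abs{w(\cdot,k)}\big)\big(e^{\g\abs{\Im k}}\norm{V(\cdot,k)}_{w}\big)^{n}/n!$; letting $n\to\iy$ forces $w\ev0$ on $[0,\g]$, hence on $\R_+$. (Alternatively invoke the standard uniqueness theorem for Volterra equations with kernel continuous in $x$.) For the exponential type, fix $x\ge0$: then $y_+(x,k)=e^{ikx}\big(u(x)+v(x,k)\big)$, a product of $k\mapsto e^{ikx}$ (entire of exponential type $x$), the constant $u(x)$, and $v(x,\cdot)$ (entire of exponential type by Lemma~\ref{jost:lm:v_function}); a product of entire functions of exponential type is again of exponential type. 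The only step here not reducible to a citation of Lemmas~\ref{jost:lm:v_o_properties}--\ref{jost:lm:v_function} is the uniqueness claim, and even there the obstacle is mild --- it is the same Volterra iteration used for the $v_n$, applied to a difference of two solutions.
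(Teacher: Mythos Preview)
Your proposal is correct and follows exactly the approach the paper indicates: the paper simply remarks, just before the lemma, that $y_+(x,k)=e^{ikx}(u(x)+v(x,k))$ with $v$ from Lemma~\ref{jost:lm:v_function}, and declares the result as an immediate consequence. You have supplied the details the paper omits --- verifying that the integral-equation solution actually solves (\ref{intro:eq:equation})--(\ref{intro:eq:jostsol_def}), and the Volterra-iteration uniqueness argument --- but there is no methodological difference.
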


\section{Complex-valued potentials} \label{comp}

We estimate the imaginary part of an eigenvalue of our equation
(\ref{intro:eq:equation}).

\begin{theorem} \label{comp:thm:eigenvalue_bound}
    Let $(p,q) \in \cP$ and let $k_o \in \C_+$, $\abs{k_o} \geq 1$, be an eigenvalue of
    equation (\ref{intro:eq:equation}) with the Dirichlet boundary condition.
    Then it satisfies the following inequality:
    \[ \label{comp:eq:eigenvalue_and_resonance_bound}
        \Im k_o \leq \frac{\g}{2} (\norm{q}_2 + 2 \norm{p}_2)^4.
    \]
\end{theorem}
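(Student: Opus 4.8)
The plan is to run a variational (energy) estimate for the eigenfunction. Since $\psi(k_o)=y_+(0,k_o)=0$, the Jost solution $y:=y_+(\cdot,k_o)$ is a nontrivial solution of \er{intro:eq:equation}--\er{intro:eq:equation_1} with $y(0)=0$, and by \er{intro:eq:jostsol_def} it coincides with $e^{ik_ox}$ on $[\g,\iy)$; in particular $y\in L^2(\R_+)$ and $y,y'$ decay exponentially at $+\iy$. Write $k:=k_o=a+ib$ with $b=\Im k_o>0$, and put $A=\norm{y'}_2^2$, $B=\norm{y}_2^2$, $Q=\int_{\R_+}q\abs{y}^2$, $P=\int_{\R_+}p\abs{y}^2$. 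Multiplying \er{intro:eq:equation} by $\ol y$ and integrating by parts over $\R_+$ — the boundary terms vanish since $y(0)=0$ and $y,y'\to0$ — gives the energy identity $A+Q+2kP=k^2B$.

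The second ingredient is the one-dimensional Sobolev inequality. Since $y(0)=0$ we have $\abs{y(x)}^2=\int_0^x(\abs{y}^2)'\,dt\le2\norm{y}_2\norm{y'}_2$ for every $x$, so $\norm{y}_\iy^2\le2\norm{y}_2\norm{y'}_2$ and hence $\norm{y}_4^4\le\norm{y}_\iy^2\norm{y}_2^2\le2B^{3/2}A^{1/2}$. With Cauchy--Schwarz this yields $\abs{Q}\le\norm{q}_2\norm{y}_4^2\le\sqrt2\,\norm{q}_2B^{3/4}A^{1/4}$ and, likewise, $\abs{P}\le\sqrt2\,\norm{p}_2B^{3/4}A^{1/4}$, so $\abs{Q+2kP}\le\sqrt2\,(\norm{q}_2+2\abs{k}\norm{p}_2)B^{3/4}A^{1/4}$.

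To extract $\Im k$ I would divide the energy identity by $k$ (legitimate as $\abs{k}\ge1$) and take imaginary parts; using $\Im(1/k)=-\Im k/\abs{k}^2$ this gives $\Im k\,(B+A/\abs{k}^2)=\Im(Q/k)+2\Im P$. Bounding the denominator from below, first by $B$ and then — after multiplying numerator and denominator by $\abs{k}^2$ — by $2\abs{k}\sqrt{AB}$ via the arithmetic--geometric mean inequality, and inserting the Sobolev bounds, I obtain the two complementary estimates
\[
    \Im k_o\le\sqrt2\,\mu\,(A/B)^{1/4},\qq\qq
    \Im k_o\le\frac{\norm{q}_2+2\abs{k_o}\norm{p}_2}{\sqrt2\,(A/B)^{1/4}},\qq\text{where }\mu:=\norm{q}_2+2\norm{p}_2.
\]
Multiplying the two removes the ratio $A/B$ and leaves $(\Im k_o)^2\le\mu(\norm{q}_2+2\abs{k_o}\norm{p}_2)\le\mu^2\abs{k_o}$, again using $\abs{k_o}\ge1$.

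It remains to convert the factor $\abs{k_o}$ into the stated $\g$-dependent bound, and this is where the compact support hypothesis $\max\supp(\abs p+\abs q)=\g$ must be used quantitatively. I would run the energy identity on the finite interval $[0,\g]$ instead: since $y=e^{ik_ox}$ for $x\ge\g$ it reads $A_0+Q+2kP=k^2B_0+ik_oe^{-2\g b}$, with $A_0=\norm{y'}_{L^2(0,\g)}^2$ and $B_0=\norm{y}_{L^2(0,\g)}^2$. Combining this with the Poincar\'e inequality $B_0\le\g^2A_0$ on $[0,\g]$ (valid because $y(0)=0$), the relation $e^{-\g b}=\abs{y(\g)}\le\sqrt\g\,\norm{y'}_{L^2(0,\g)}$, and the imaginary part of the identity (which controls $\Re k_o$), should give an a priori bound on $\abs{k_o}$ in terms of $\g$, $\norm p_2$ and $\norm q_2$; feeding it back into $(\Im k_o)^2\le\mu^2\abs{k_o}$ and optimising constants should produce $\Im k_o\le\tfrac\g2\mu^4$. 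I expect this final step — obtaining the requisite control on $\abs{k_o}$ (equivalently on $\Re k_o$) purely through $\g$ and the $L^2$-norms of $p,q$ — to be the main obstacle; note that \er{intro:eq:real_part_eigenvalue_bound} gives exactly such control in the real-valued case, but for complex-valued $(p,q)$ it has to be wrung out of the finite-interval identity together with the Poincar\'e inequality, with careful bookkeeping of the constants.
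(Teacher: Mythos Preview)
Your energy-identity approach is correct and carefully executed through the bound $(\Im k_o)^2 \leq \mu^2|k_o|$ with $\mu = \norm{q}_2 + 2\norm{p}_2$. The gap is the final step: this inequality is strictly weaker than the target $\Im k_o \leq \tfrac{\g}{2}\mu^4$ whenever $|k_o| > \tfrac{\g^2}{4}\mu^6$, and your plan is to rule out that regime by bounding $|k_o|$ a priori in terms of $\g$ and $\mu$. But no such bound follows from the tools you propose. The finite-interval identity plus Poincar\'e controls $B_0$ by $A_0$, not the other way round, and taking the real part yields a quadratic inequality for $\Re k_o$ whose right-hand side still contains the uncontrolled ratio $A_0/B_0$; for complex $(p,q)$ there is simply no mechanism here to bound $\Re k_o$ by $\g$ and $L^2$ data alone. (The paper's own estimate on $|k_o|$ in Theorem~\ref{intro:thm:eigenvalues_and_resonances_bounds} uses $\norm{p'}_1$ and $|p(\g)|$, so invoking it would contaminate the final inequality with the wrong norms and would not in any case produce the precise constant $\tfrac{\g}{2}\mu^4$.)

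The paper avoids this difficulty entirely via a Birman--Schwinger argument that brings $\g$ in directly and never needs a bound on $|k_o|$. One forms $Y(k) = V^{1/2}(H_o - k^2)^{-1}|V|^{1/2}$ with $H_o$ the free Dirichlet Laplacian, and uses that $\norm{Y(k)} < 1$ prevents $k$ from being an eigenvalue. Writing the Dirichlet Green kernel as $G_o(x,y,k)/(2ik)$ with $G_o(x,y,k) = e^{ik(x+y)} - e^{ik|x-y|}$, H\"older gives $\norm{Y(k)}^2 \leq \tfrac{1}{4|k|^2}\,I(k)^{1/2}\norm{V(\cdot,k)}_2^2$, where $I(k) = \int_{[0,\g]^2}|G_o|^4\,dxdy$; the support condition $\supp V(\cdot,k) \ss [0,\g]$ is what restricts the integral to $[0,\g]^2$ and is the only place $\g$ enters. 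The elementary pointwise bound $|G_o(x,y,k)| \leq 2e^{-\Im k\,|x-y|}$ then gives $I(k) \leq 8\g/\Im k$, whence $\norm{Y(k)} \leq \g^{1/4}\mu/(2\Im k)^{1/4}$ for $|k|\geq 1$, and the claim follows immediately.
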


\begin{proof}
    We introduce the free Schr\"odinger operator $H_o y = -y''$
    on $L^2(\R_+)$ with the Dirichlet boundary condition $y(0)=0$.
    We define the Birman-Schwinger operator $Y(k)$, $k \in \C_+$, by
    $$
        Y(k) = V^{\frac{1}{2}} (H_o - k^2)^{-1} \abs{V}^{\frac{1}{2}},
    $$
    where $V^{\frac{1}{2}} =\abs{V}^{\frac{1}{2}} \sign V$.
    Recall that the resolvent $(H_o - k^2)^{-1}$ is an integral operator on the half-line with the kernel
    $$
        G(x,y,k) = \begin{cases}
            \frac{\sin(xk)}{k} e^{iky},\qq x \leq y\\
            \frac{\sin(yk)}{k} e^{ikx},\qq y \leq x
        \end{cases},\qq (x,y) \in \R_+^2.
    $$
    By simple calculation, we get
    $$
        G(x,y,k) = \frac{G_o(x,y,k)}{2ik},\qq
        G_o(x,y,k)= e^{ik(x+y)} - e^{ik \abs{x-y}},\qq
        (x,y) \in \R_+^2.
    $$
    Thus $Y(k)$ is an integral operator on the half-line with the kernel
    $$
        Y(x,y,k) = V^{\frac{1}{2}}(x,k)\frac{G_o(x,y,k)}{2ik}\abs{V(y,k)}^{\frac{1}{2}},\qq
        (x,y) \in \R_+^2.
    $$
    It is known that if $\norm{Y(k)} < 1$ for some $k \in \C_+$, then $k$
    is not an eigenvalue of equation (\ref{intro:eq:equation}) with the
    Dirichlet boundary condition. Using the H\"older's inequality and
    $\supp V(\cdot,k) \ss [0,\g]$, we get
    \[ \label{comp:eq:26}
        \begin{aligned}
            \norm{Y(k)}^2 &\leq \int_{\R_+^2} \abs{Y(x,y,k)}^2 dx dy =
            \frac{1}{4 \abs{k}^2}\int_{[0,\g]^2} \abs{V(x,k)} \abs{G_o (x,y,k)}^2 \abs{V(y,k)} dx dy\\
            &\leq \frac{1}{4\abs{k}^2} I(k)^\frac{1}{2} \int_{[0,\g]} \abs{V(x,k)}^2 dx =
            \frac{1}{4\abs{k}^2} I(k)^\frac{1}{2} \norm{V(\cdot,k)}_2^2,
        \end{aligned}
    \]
    where $I(k)=\int_{[0,\g]^2} \abs*{G_o (x,y,k)}^4 dx dy$.
   Now we estimate $I(k)$. Let $\Im k > 0$. Then we have
    $\abs{e^{ik \abs{x-y}}} \leq e^{-\Im k\abs{x-y}}$ for any $(x,y) \in [0,\g]^2$.
    Due to $x + y \geq \abs{x - y}$ we have
    $\abs{e^{ik (x+y)}} \leq e^{-\Im k\abs{x-y}}$ for any $(x,y) \in [0,\g]^2$.
    Using these estimates, we get
    \[ \label{comp:eq:24}
        \begin{aligned}
            I(k) = \int_{[0,\g]^2} \abs*{e^{ik(x+y)} - e^{ik \abs{x-y}}}^4 dx dy
            &\leq \int_{[0,\g]^2} (2e^{-\Im k\abs{x-y}})^4 dx dy\\
            &\leq 16 \int_{[0,\g]} dy \int_{\R}e^{- 4 \Im k \abs{x-y}} dx = \frac{8\g}{\Im k}.
        \end{aligned}
    \]
    Since $V(\cdot,k) = q + 2kp$, we get for $\abs{k} \geq 1$:
    \[ \label{comp:eq:25}
        \frac{\norm{V(\cdot,k)}_2}{\abs{k}} \leq \frac{\norm{q}_2}{\abs{k}} + 2 \norm{p}_2 \leq \norm{q}_2 + 2 \norm{p}_2.
    \]
    Substituting (\ref{comp:eq:24}) and (\ref{comp:eq:25}) in (\ref{comp:eq:26}), we get for $\abs{k} \geq 1$:
    $$
        \norm{Y(k)} \leq \frac{\g^{\frac{1}{4}}(\norm{q}_2 + 2 \norm{p}_2)}{(2\Im k)^{\frac{1}{4}}}.
    $$
    Thus, if we have
    $$
        \frac{\g^{\frac{1}{4}}(\norm{q}_2 + 2 \norm{p}_2)}{(2\Im k)^{\frac{1}{4}}} < 1,
    $$
    then $k$ is not an eigenvalue of equation (\ref{intro:eq:equation})
    with the Dirichlet boundary condition. This yields
    (\ref{comp:eq:eigenvalue_and_resonance_bound}) for an eigenvalue $k_o \in \C_+$, $\abs{k_o} \geq 1$.
\end{proof}

We are ready to prove the main theorems when $(p,q) \in \cP$.

\begin{proof}[Proof of Theorem \ref{intro:thm:eigenvalues_and_resonances_bounds}]
    Each eigenvalue $k_o \in \C_+ \cup \{0\}$ of equation (\ref{intro:eq:equation})
    with the Dirichlet boundary condition is a zero of the function $y_+(0,\cdot)$. Since 
    $y_+(0,\cdot) = y(0,\cdot)$, each eigenvalue $k_o$ is a solution of the equation
    \[ \label{comp:eq:7}
        y(0,k_o) = u(0) + v(0,k_o) = 0.
    \]
    Due to $\abs{u(0)} = 1$ if $\abs{v(0,k)} < 1$, then $k$ is not a solution of equation (\ref{comp:eq:7}).
    Using (\ref{jost:eq:v_estimates}) and (\ref{jost:eq:omega_definition}), we get
    $$
        \abs{v(0,k)} \leq \omega_o(k) e^{\omega_1(k)} \leq \frac{C_1}{\abs{k}} e^{\frac{\norm{q}_{1}}{\abs{k}} + 2\norm{p}_{1}} < 1.
    $$
    We rewrite the last inequality as follows
    \[ \label{comp:eq:3}
        \omega_* e^{\omega_*} < C_*,
    \]
    where
    \[ \label{comp:eq:4}
        \omega_* = \frac{\norm{q}_{1}}{\abs{k}},\qq C_* = \frac{\norm{q}_{1} e^{-2\norm{p}_{1}}}{C_1}.
    \]
    Let $W_o$ be the principal branch of the Lambert $W$ function,
    which satisfies the identity $W_o(x) e^{W_o(x)} = x$.
    Recall that $W_o$ is strictly increasing. So that, inequality (\ref{comp:eq:3}) implies
    \[ \label{comp:eq:8}
        \omega_* = \frac{\norm{q}_{1}}{\abs{k}} < W_o(C_*).
    \]
    Since $k_o$ is a solution of equation (\ref{comp:eq:7}) it follows from (\ref{comp:eq:8}) that
    $$
        \frac{\norm{q}_{1}}{\abs{k_o}} \geq W_o(C_*).
    $$
    Using (\ref{comp:eq:4}), we get
    \[ \label{comp:eq:5}
        \abs{k_o} \leq \frac{\norm{q}_{1}}{W_o(C_*)} = \frac{\norm{q}_{1}}{C_*} \frac{C_*}{W_o(C_*)} =
        C_1 e^{2\norm{p}_{1}} \frac{C_*}{W_o(C_*)}.
    \]
    It follows from definition of $W_o(x)$ that $\frac{x}{W_o(x)} = e^{W_o(x)}$.
    Substituting this identity in (\ref{comp:eq:5}), we obtain
    \[ \label{comp:eq:6}
        \abs{k_o} \leq C_1 e^{2\norm{p}_{1} + W_o(C_*)}.
    \]
    Note that $0 \leq C_* \leq e^{-2\norm{p}_{1}} \leq 1$. In particular, it leads to the estimate
    $$
        e^{W_o(C_*)} \leq e^{W_o(1)} = \frac{1}{W_o(1)} < 2.
    $$
    We need the following estimate (see Theorem 2.3 in \cite{HH08})
    $$
        W_o(x) \leq \log \frac{x + y}{1 + \log y},\qq x > -\frac{1}{e},\qq y > \frac{1}{e}.
    $$
    Putting $y = 1$, we get
    \[ \label{comp:eq:11}
        W_o(x) \leq \log (1 + x),\qq x > -\frac{1}{e}.
    \]
    Using this inequality, we get
    $$
        e^{W_o(C_*)} \leq e^{W_o(e^{-2\norm{p}_{1}})}
        \leq 1 + e^{-2\norm{p}_{1}}.
    $$
    Substituting this bound in (\ref{comp:eq:6}), we get
    $$
        \abs{k_o} \leq C_1 (1 + e^{2\norm{p}_{1}}).
    $$

    Let $r_o \in \C_-$ be a resonance of equation (\ref{intro:eq:equation})
    with the Dirichlet boundary condition. Then we have $y(0,r_o) = 0$.
    Since $y(x,k) = u(x) + v(x,k)$ it follows that
    $\abs{v(0,r_o)} = \abs{u(0)} = 1$. Using (\ref{jost:eq:v_estimates}), we get
    \[ \label{comp:eq:1}
        \omega_o(r_o) e^{2\g\abs{\Im r_o} + \omega_1(r_o)} \geq 1.
    \]
    It follows from (\ref{jost:eq:omega_definition}) that
    \[ \label{comp:eq:2}
        \omega_o(r_o) \leq \frac{C_1}{\abs{r_o}},\qq
        \omega_1(r_o) \leq C_2,
    \]
    where $C_1$ and $C_2$ were given in (\ref{intro:eq:constants}).
    Substituting (\ref{comp:eq:2}) in (\ref{comp:eq:1}), we get
    $$
        \abs{r_o} \leq C_1 e^{C_2 + 2\g\abs{\Im r_o}}.
    $$

    Estimate (\ref{intro:eq:imaginary_part_eigenvalue_bound}) was proved in Theorem \ref{comp:thm:eigenvalue_bound}.

    Let $(p,q) \in \cP$ be real-valued. Let $k_o$ be an eigenvalue of equation
    (\ref{intro:eq:equation}) with the Dirichlet boundary condition
    and let $e_o$ be the corresponding eigenfunction.
    Let $(\cdot,\cdot)$ be a scalar product in $L^2(\R_+,\C)$.
    Integrating by parts and using $e_o(0) = 0$ and
    $\displaystyle \lim_{x \to + \iy} e_o(x) = 0$, we get $(-e_o'', e_o) = \norm{e_o'}_2^2$.
    Then taking the imaginary part of the scalar product
    $(-e_o''+ V e_o, e_o)= k_o^2 \norm{e_o}_2^2$, we get
    $$
        \Re k_o = \frac{(p e_o, e_o)}{\norm{e_o}_2^2} =
        \frac{\int_{\R_+} p(x) \abs{e_o(x)}^2 dx}{\norm{e_o}_2^2}.
    $$
    Recall that $p_+$ and $p_-$ was defined by (\ref{intro:eq:constants}).
    Since $p \in C(\R_+)$, we get
    $$
        p_- \leq \Re k_o \leq p_+.
    $$

\end{proof}

\begin{proof}[Proof of Theorem \ref{intro:thm:resonances_estimate}]
 Recall that $y=u+v$.   Firstly, we find $k_o \in \C_+$ such that $v(0,k_o) \leq \frac{1}{2}$.
    Using (\ref{jost:eq:v_estimates}) and (\ref{jost:eq:omega_definition}), for any $k \in \C_+$ we get
    $$
        \abs{v(0,k)} \leq \omega_o(k) e^{\omega_1(k)} \leq \frac{C_1}{\abs{k}} e^{\frac{\norm{q}_{1}}{\abs{k}} + 2\norm{p}_{1}}.
    $$
    Thus, if we have
    \[ \label{comp:eq:9}
        \frac{C_1}{\abs{k}} e^{\frac{\norm{q}_{1}}{\abs{k}} + 2\norm{p}_{1}} \leq \a,
    \]
    then we get $v(0,k) \leq \a$. We rewrite (\ref{comp:eq:9}) as follows
    \[ \label{comp:eq:10}
        \omega_* e^{\omega_*} \leq C_*,\qq \omega_* = \frac{\norm{q}_{1}}{\abs{k}},\qq C_* = \a \frac{\norm{q}_{1} e^{-2\norm{p}_{1}}}{C_1}.
    \]
    Since the Lambert $W_o$ function is strictly increasing, inequality (\ref{comp:eq:10}) is equivalent
    to the inequality
    $$
        \omega_* \leq W_o(C_*).
    $$
    Using $\frac{\norm{q}_{1}}{C_*} = \frac{C_1}{\a} e^{2\norm{p}_{1}}$ and $\frac{x}{W_o(x)} = e^{W_o(x)}$, we rewrite the last inequality as follows
    $$
        \abs{k} \geq \frac{\norm{q}_{1}}{W_o(C_*)} = \frac{\norm{q}_{1}}{C_*} \frac{C_*}{W_o(C_*)} = \frac{C_1}{\a} e^{2\norm{p}_{1} + W_o(C_*)}.
    $$
    Note that $0 < C_* < \a e^{-2\norm{p}_{1}} < \a$. Using (\ref{comp:eq:11}), we get
    $$
        e^{W_o(C_*)} \leq 1 + \a,
    $$
    which yields
    $$
        \frac{C_1}{\a} e^{2\norm{p}_{1} + W_o(C_*)} \leq \frac{C_1(1 + \a)}{\a} e^{2\norm{p}_{1}}.
    $$
    Let $\a = \frac{1}{2}$. Then it follows from $\abs{k} \geq 3C_1e^{2\norm{p}_{1}}$
    that $\abs{v(0,k)} \leq \frac{1}{2}$.
    Let $k_o = i r_o$, where $r_o =3C_1e^{2\norm{p}_{1}}$, and let $F(k) = \psi(k+k_o) = y(0,k + k_o)$.
    Then $F$ is an entire function of exponential type and
     the Jensen formula (see e.g. \cite{Koosis98})
    holds true
    \[ \label{comp:eq:12}
        \log \abs{F(0)} + \int_0^{r} \frac{\cN(t,F)}{t} dt
        = \frac{1}{2\pi} \int_0^{2\pi} \log \abs{F(r e^{i\t})} d\t,
    \]
    where $F(0)\ne 0$.
    Since $\abs{v(0,k_o)} \leq \frac{1}{2}$ and $\abs{u(0)} = 1$, we have
    \[ \label{comp:eq:13}
        \log \abs{F(0)} = \log \abs{u(0) + v(0,k_o)}
        \geq \log \abs{\abs{u(0)} - \abs{v(0,k_o)}} \geq \log \frac{1}{2} = - \log 2.
    \]
    Since $\cN(\cdot,F)$ is an increasing function and $\cN(r - r_o, \psi) \leq \cN(r,F)$ for $r \geq r_o$, we get
    \[ \label{comp:eq:14}
        \int_0^{r} \frac{\cN(t,F)}{t} dt \geq \cN(r/2,F)
        \int_{r/2}^r \frac{dt}{t} \geq \cN(r/2 - r_o,\psi) \log 2.
    \]
    Now we estimate the right-hand side of the Jensen formulas
    (\ref{comp:eq:12}). Using (\ref{jost:eq:v_estimates}), we get
    \[ \label{comp:eq:15}
        \begin{aligned}
            \frac{1}{2\pi} \int_0^{2\pi} \log \abs{F(r e^{i\t})} d\t
            &\leq \abs{u(0)} + \frac{1}{2\pi} \int_0^{2\pi} \log \abs{v(0,
            k_o + r e^{i\t})} d\t
                        \\
            &\leq 1 + \frac{1}{2\pi} \int_{0}^{2\pi} \log \left( \omega_o(k_o + re^{i\t})
            e^{2\g(k_o + e^{i\t})_- + \omega_1(k_o + re^{i\t})} \right) d\t
            \\
            &\leq 1 + \frac{1}{2\pi} \int_{0}^{2\pi} \log \o_o(k_o + re^{i\t}) d\t\\
            &+ \frac{1}{2\pi} \int_{0}^{2\pi} \o_1(k_o + re^{i\t}) d\t + \frac{1}{2\pi}
            \int_{0}^{2\pi} 2\g(k_o + re^{i\t})_- d\t.
        \end{aligned}
    \]
    Using (\ref{jost:eq:omega_definition}), we get
    \[ \label{comp:eq:16}
        \begin{aligned}
            \frac{1}{2\pi} \int_{0}^{2\pi} \log \omega_o(k_o + re^{i\t}) d\t
            &\leq \log C_1 - \log(r - r_o),
            \\
            \frac{1}{2\pi} \int_{0}^{2\pi} \omega_1(k_o + re^{i\t}) d\t &\leq
            \frac{\norm{q}_{1}}{r - r_o} + 2\norm{p}_{1}.
        \end{aligned}
    \]
    Recall that $k_o = i r_o$ and $k_- = \abs{\Im k}$ if $k \in \C_-$ and $k_- = 0$ if $k \in \ol \C_+$. We have
    $$
        (k_o + re^{i\t})_- = (r \cos \t + i(r_o + r\sin \t))_-.
    $$
    Let $r \geq r_o$ and let $\t_o = \arcsin \frac{r_o}{r}$.
    Then $r_o + r\sin \t \leq 0$ if and only if $\pi + \t_o \leq \t \leq 2\pi - \t_o$, which yields
    \[ \label{comp:eq:17}
        \begin{aligned}
            \frac{1}{2\pi} \int_{0}^{2\pi} 2\g(k_o + re^{i\t})_- d\t
            &= -\frac{\g}{\pi} \int_{\pi + \t_o}^{2\pi-\t_o} (r_o + r \sin \t) d\t
            = \frac{\g r_o (2\t_o - \pi)}{\pi} + \frac{2 \g r \cos
            \t_o}{\pi}
            \\            &
            = \frac{2\g}{\pi} \sqrt{r^2 - r_o^2} + \frac{2
            \g}{\pi} r_o \arcsin \frac{r_o}{r} - \g r_o.
        \end{aligned}
    \]
    Substituting (\ref{comp:eq:16}) and (\ref{comp:eq:17}) in (\ref{comp:eq:15}), we get
    \[ \label{comp:eq:18}
        \begin{aligned}
            \frac{1}{2\pi} \int_0^{2\pi} \log \abs{F(r e^{i\t})} d\t
            &\leq \frac{2\g}{\pi} \sqrt{r^2 - r_o^2} - \log(r - r_o) + 2\norm{p}_{1} - \g r_o
            \\
            &+ 1 + \log C_1
            + \frac{\norm{q}_{1}}{r - r_o} + \frac{2 \g}{\pi} r_o
            \arcsin \frac{r_o}{r}.
        \end{aligned}
    \]
    Substituting (\ref{comp:eq:13}), (\ref{comp:eq:14}) and (\ref{comp:eq:18}) in (\ref{comp:eq:12}), we obtain
    $$
        \begin{aligned}
            \cN(r/2 - r_o,\psi) \log 2 - \log 2
            &\leq \frac{2\g}{\pi} \sqrt{r^2 - r_o^2} - \log(r - r_o) + 2\norm{p}_{1} - \g r_o
            \\
            &+ 1 + \log C_1
            + \frac{\norm{q}_{1}}{r - r_o} + \frac{2 \g}{\pi} r_o
            \arcsin \frac{r_o}{r}.
        \end{aligned}
    $$
    Let $R = \frac{r}{2} - r_o$. Then we have $r = 2R + 2r_o$ and
    \[ \label{comp:eq:19}
        \begin{aligned}
            \cN(R,\psi) \log 2 - \log 2
            &\leq \frac{2\g}{\pi} \sqrt{4 R^2 + 8Rr_o + 3r_o^2} - \log(2R + r_o) + 2\norm{p}_{1} - \g r_o
            \\
            &+ 1 + \log C_1
            + \frac{\norm{q}_{1}}{2R + r_o} + \frac{2 \g}{\pi} r_o
            \arcsin \frac{r_o}{2R + 2r_o}.
        \end{aligned}
    \]
    If $R \geq 0$, then we have
    \[ \label{comp:eq:20}
        \begin{aligned}
            \sqrt{4 R^2 + 8Rr_o + 3r_o^2} &\leq 2R + 2r_o,\qq
            \frac{\norm{q}_{1}}{2R + r_o} \leq \frac{\norm{q}_{1}}{r_o},\qq
            \arcsin \frac{r_o}{2R + 2r_o} \leq \frac{\pi}{6}\\
            - \log(2R + r_o) &\leq - \log r_o = - \log 3 - \log C_1 - 2\norm{p}_{1}.
        \end{aligned}
    \]
    Substituting (\ref{comp:eq:20}) in (\ref{comp:eq:19}), we get
    $$
        \begin{aligned}
            \cN(R,\psi) \log 2
            &\leq \frac{4\g}{\pi} R + \frac{4\g}{\pi} r_o - \log 3 - \log C_1 - 2\norm{p}_{1} + 2\norm{p}_{1} - \g r_o
            \\
            &+ 1 + \log C_1
            + \frac{\norm{q}_{1}}{r_o} + \frac{\g}{3} r_o
            + \log 2
            \\
            &= \frac{4\g}{\pi} R + \g r_o\left(\frac{4}{\pi} - \frac{2}{3} \right) + \frac{\norm{q}_{1}}{r_o} + 1 + \log 3 - \log 2.
        \end{aligned}
    $$
    Substituting $r_o = 3C_1e^{2\norm{p}_{1}}$ in this inequality and using $\frac{\norm{q}_{1}}{C_1} \leq 1$, we get
    $$
        \cN(R,\psi) \leq \frac{4 \g}{\pi \log 2} R + 
        \frac{3\left(\frac{4}{\pi} - \frac{2}{3}\right)}{\log 2} \g C_1 e^{2\norm{p}_{1}} + \frac{\frac{4}{3} + \log \frac{3}{2}}{\log 2}
        \leq \frac{4 \g}{\pi \log 2} R + C_3,
    $$
    where $C_3$ was given by (\ref{intro:eq:constant}).

    In order to estimate the number of eigenvalues in the case of real-valued potentials 
    we just recall that for an eigenvalue $\abs{k_o} > 1$ we have (\ref{intro:eq:imaginary_part_eigenvalue_bound})
    and (\ref{intro:eq:real_part_eigenvalue_bound}), which yields that $\abs{k_o} \leq R$, where
    $R$ is given by 
    $$
        R=\max\left(1, \max(p_+,p_-) + \frac{\g}{2} (\|q\|_2 + 2\|p\|_2)^4\right).
    $$
\end{proof}

\section{Inverse resonances problem for Dirac operator} \label{dirac}

We consider the self-adjoint Dirac operators $H_{\a}$,
$\a \in [0,\pi)$, on $L^2(\R_+,\C^2)$ given by
\[ \label{dirac:eq:differential_expression}
    H_{\a} \boldsymbol{w} = -i \s_3 \boldsymbol{w}' + i \s_3 Q \boldsymbol{w},\qq \boldsymbol{w} = \ma w_1 \\ w_2 \am,\qq \s_3 = \ma 1 & 0 \\ 0 & -1 \am,
\]
with the boundary condition
\[ \label{dirac:eq:bc}
    e^{-i\a} w_1(0) - e^{i\a} w_2(0) = 0.
\]
Here the matrix-valued potential $Q$ has the following form
$$
    Q = \ma 0 & v \\ \overline{v} & 0 \am,\qq v \in \cX.
$$
Recall that $\cX$ is the set of all functions $v \in L^2(\R_+,\C)$ such
that $\sup \supp v = \g$. 
Note that $H_{\a}$ is essentially self-adjoint on the subspace of functions from $C_o^{\iy}(\R_+,\C^2)$ 
satisfying (\ref{dirac:eq:bc}). Thus, it has a unique self-adjoint extension.
The spectrum of this operator is
absolutely continuous, i.e. $\s(H_{\a}) = \s_{ac}(H_{\a}) = \R$.

We introduce a vector-valued Jost solution $\by(x,k)$ of the Dirac equation
\[ \label{dirac:eq:dirac_equation}
    \by'(x,k) - Q(x)\by(x,k) = i\s_3 k \by(x,k),\qq (x,k) \in \R_+ \ts \C,
\]
satisfying condition
$$
    \by(x,k) = \ma y_1(x,k) \\ y_2(x,k) \am = \ma e^{i k x} \\ 0 \am,\qq x \geq \g.
$$
For each operator  $H_{\a}, \a \in [0,\pi)$, we introduce a Jost
function $\bpsi_{\a}:\C \to \C$ by
$$
    \bpsi_{\a}(k) = e^{-i\a} y_{1} (0,k) - e^{i\a} y_{2} (0,k),\qq k \in \C.
$$
It is well-known that $\bpsi_{\a}$ is entire, $\bpsi_{\a}(k) \neq 0$
for any $k \in \ol \C_+ $ and it has zeros in $\C_-$, which are
called \textit{resonances} of the operator $H_{\a}$. The
multiplicity of a resonance is the multiplicity of the zero of
$\bpsi_{\a}$. We also introduce a scattering matrix $\bS_{\a}: \R
\to \C$ by
$$
    \bS_{\a}(k) = \frac{\ol \bpsi_{\a}(k)}{\bpsi_{\a}(k)},\qq k \in \R.
$$
Note that $\bS_{\a}$ admit a unique meromorphic continuation from $\R$
onto $\C$ and then the resonances are poles of the scattering
matrix.

For any $\a \in [0,\pi)$, we introduce  the mapping $\br^{\a} : \cX
\mapsto \mathbf{s}$ such that $\br^{\a}(v) = (r_n)_{n \geq 1}$ is a
sequence of resonances of the operator $H_{\a}$ with potential $v
\in \cX$ counted with the multiplicity and arranged such that
$$
    \begin{cases}
        \abs{r_1} \leq \abs{r_2} \leq \ldots,\\
        \Re r_n \leq \Re r_{n+1}\qq \text{if $\abs{r_n} = \abs{r_{n+1}}$}.
    \end{cases}
$$
In order to formulate results about inverse problems for Dirac
operator,  we introduce classes of Jost functions, scattering
matrices and resonances. For the Jost functions of Dirac operator we
introduce the following class.
\begin{definition*}
    $\cJ_{\a}$, $\a \in [0,\pi)$, is a set of all entire functions $f$,
 which have no zeros in $\ol \C_+$ and admit the following
    representation
    \[ \label{dirac:eq:jost_representation}
        f(k) = e^{-i\a} + \int_{0}^{\g} h(s) e^{2iks} ds,\qq k \in \C,\qq h \in \cX.
    \]
\end{definition*}
Recall that the classes $\cR$, $\cS$ and $\cS_{\a}$, $\a \in [0,\pi)$, have been defined in Section \ref{intro}.
We need the following results (see Theorem 1.1 and Corollary 1.2 in \cite{KM21}).

\begin{theorem} \label{dirac:thm:inverse_jost}
    For any $\a \in [0,\pi)$ we have:
    \begin{enumerate}
        \item the mapping $v \mapsto \bpsi_{\a}(\cdot,v)$
        from $\cX$ into $\cJ_{\a}$ is a bijection between $\cX$ and $\cJ_{\a}$;
        \item the mapping $v \mapsto \bS_{\a}(\cdot,v)$
        from $\cX$ into $\cS_{\a}$ is a bijection between $\cX$ and $\cS_{\a}$.
    \end{enumerate}
\end{theorem}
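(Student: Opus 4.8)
The plan is to reduce the statement to the direct and inverse scattering theory for the half-line Dirac system. I would first record the two forward maps. For $v\in\cX$ one builds the Jost solution $\by(\cdot,k)$ of the Dirac equation~\er{dirac:eq:dirac_equation} for every $k\in\C$ by successive approximations applied to the equivalent Volterra integral equation with terminal data at $x=\g$; after the substitution $\boldsymbol{z}(x,k)=e^{-i\s_3 kx}\by(x,k)$ this equation becomes
$$
    \boldsymbol{z}'(x,k)=\begin{pmatrix} 0 & v(x)e^{-2ikx} \\ \ol v(x)\,e^{2ikx} & 0\end{pmatrix}\boldsymbol{z}(x,k),\qquad \boldsymbol{z}(\g,k)=\begin{pmatrix} 1 \\ 0\end{pmatrix},
$$
and iterating from $x=\g$ down to $x=0$ shows that $y_1(0,k)$ and $y_2(0,k)$ have the form $c+\int_0^{\g}(\cdots)\,e^{2iks}\,ds$ with $L^2(\R_+)$ densities; convergence and the requisite bounds are of the same iteration/Gr\"onwall type as in Lemmas~\ref{jost:lm:v_o_properties}--\ref{jost:lm:v_function}. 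Forming the boundary functional $e^{-i\a}y_1(0,k)-e^{i\a}y_2(0,k)$ then yields \er{dirac:eq:jost_representation} with some $h\in L^2(\R_+)$ supported in $[0,\g]$, and $\sup\supp h=\g$ exactly follows by matching the lowest-order contribution to $h$ near $s=\g$ with a constant multiple of $v$ near $\g$ and using $\sup\supp v=\g$. Since $\bpsi_{\a}$ is moreover entire and has no zeros in $\ol\C_+$ (a consequence of the self-adjointness of $H_{\a}$, recalled before the definition of $\cJ_{\a}$), we get $\bpsi_{\a}(\cdot,v)\in\cJ_{\a}$; and $\bS_{\a}(\cdot,v)=\ol{\bpsi_{\a}}(\cdot,v)/\bpsi_{\a}(\cdot,v)$ lies in $\cS_{\a}$, because $\abs{\bS_{\a}}=1$ on $\R$, $\ind\bS_{\a}=0$ (no bound states), and the representation $\bS_{\a}(k)=e^{2i\a}+\int_{-\g}^{+\iy}F(s)e^{2iks}\,ds$ with $\min\supp F=-\g$ follows from that of $\bpsi_{\a}$.

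Next I would prove injectivity. Since $H_{\a}$ has no bound states, the reflection coefficient $\bS_{\a}$ is the complete scattering data and is determined by $\bpsi_{\a}$; it then suffices to recall that in the inverse scattering theory for the half-line Dirac system the Marchenko equation attached to $\bS_{\a}$ is a Volterra equation with a unique solution, namely the transformation kernel, from which $v$ is recovered by one differentiation. This yields injectivity of $v\mapsto\bpsi_{\a}(\cdot,v)$, hence also of $v\mapsto\bS_{\a}(\cdot,v)$.

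For surjectivity in $(i)$, take $f\in\cJ_{\a}$ and put $S:=\ol f/f$. Then $\abs{S}=1$ on $\R$ is immediate; $\ind S=0$ and the representation $S(k)=e^{2i\a}+\int_{-\g}^{+\iy}F(s)e^{2iks}\,ds$ with $F\in L^1(\R)\cap L^2(\R)$ and $\min\supp F=-\g$ follow from $h\in\cX$ together with the absence of zeros of $f$ in $\ol\C_+$ (which forces the Fourier spectrum of $1/f$ to lie in $[0,+\iy)$), via the Paley--Wiener theorem; thus $S\in\cS_{\a}$. Solving the inverse scattering problem for $S$ then produces a transformation kernel supported in $\set{0\leq x\leq t\leq\g}$ whose $\sup\supp$ equals $\g$, hence a potential $v\in\cX$, and the injectivity step forces $\bpsi_{\a}(\cdot,v)=f$. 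Part $(ii)$ then follows from $(i)$: the map $v\mapsto\bS_{\a}(\cdot,v)$ factors through $\bpsi_{\a}$ via $g\mapsto\ol g/g$, and $g\mapsto\ol g/g$ is a bijection $\cJ_{\a}\to\cS_{\a}$: it is injective by a Phragm\'en--Lindel\"of/Liouville argument (a bounded entire quotient $g_1/g_2\to1$ equals $1$), and surjective because every $S\in\cS_{\a}$ equals $\ol g/g$ with $g$ the canonical outer factor normalized by $g(k)\to e^{-i\a}$ as $k\to\iy$, which one checks belongs to $\cJ_{\a}$.

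The hard part will be the surjectivity step: showing that the Marchenko construction applied to an arbitrary $S\in\cS_{\a}$ is solvable under the sole hypotheses $\abs{S}=1$ and $\ind S=0$, and that the resulting potential is genuinely compactly supported with $\sup\supp$ precisely $\g$. This is exactly where the support information $\min\supp F=-\g$ built into the class $\cS_{\a}$ is indispensable, through a Paley--Wiener analysis of the Marchenko kernel; the remaining estimates are routine Volterra/Gr\"onwall bookkeeping.
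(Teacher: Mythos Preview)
The paper does not prove this theorem: it is quoted from \cite{KM21} (Theorem~1.1 and Corollary~1.2 there), as the sentence immediately preceding the statement makes explicit. So there is no in-paper proof to compare against; the result is used as a black box throughout Section~\ref{dirac}.

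Your sketch is a reasonable outline of how the argument in \cite{KM21} goes, and the overall architecture (forward map via iterated Volterra equation, injectivity via uniqueness in the Gelfand--Levitan--Marchenko equation, surjectivity by running inverse scattering on an arbitrary $S\in\cS_{\a}$, and the factorization of $(ii)$ through $(i)$ via $g\mapsto\ol g/g$) is correct. You also correctly identify the genuinely hard step: solvability of the Marchenko equation for an arbitrary $S\in\cS_{\a}$ together with the exact support conclusion $\sup\supp v=\g$, which hinges on the Paley--Wiener input $\min\supp F=-\g$. One minor ordering issue: in your surjectivity paragraph for $(i)$ you write ``the injectivity step forces $\bpsi_{\a}(\cdot,v)=f$'', but what is actually needed there is the injectivity of $g\mapsto\ol g/g$ on $\cJ_{\a}$ (which you only establish in the following paragraph), not the injectivity of $v\mapsto\bpsi_{\a}(\cdot,v)$; this is a matter of presentation rather than a real gap.
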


\begin{theorem} \label{dirac:thm:inverse_resonances}
    For any $\a \in [0,\pi)$, the mapping $\br^{\a}: \cX \to \cR$ is a bijection.
\end{theorem}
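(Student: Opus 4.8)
The plan is to deduce Theorem~\ref{dirac:thm:inverse_resonances} from the inverse Jost result Theorem~\ref{dirac:thm:inverse_jost}(i), since by definition $\br^{\a}(v)$ is the sequence of zeros in $\C_-$ (arranged by (\ref{intro:eq:arrangement})) of the Jost function $\bpsi_{\a}(\cdot,v)$. As $v\mapsto\bpsi_{\a}(\cdot,v)$ is a bijection of $\cX$ onto $\cJ_{\a}$ by Theorem~\ref{dirac:thm:inverse_jost}(i), it suffices to show that the map sending $f\in\cJ_{\a}$ to its sequence of zeros is a bijection onto $\cR$. A first observation: multiplication by $e^{i\a}$ carries $\cJ_{\a}$ bijectively onto the class of entire functions having no zeros in $\ol\C_+$ and admitting the representation (\ref{psi_id}) with $h\in\cX$, because the prefactor $e^{i\a}$ merely replaces $h$ by $e^{i\a}h\in\cX$, leaving the support and the zero set unchanged. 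Hence, by the very definition of $\cR$, the zero map $\cJ_{\a}\to\cR$ is onto; in particular $\br^{\a}$ maps into $\cR$ and is surjective: given $\z\in\cR$, choose $\psi$ as in the definition of $\cR$ with zero sequence $\z$, put $f:=e^{-i\a}\psi\in\cJ_{\a}$, and let $v\in\cX$ be the unique potential with $\bpsi_{\a}(\cdot,v)=f$ provided by Theorem~\ref{dirac:thm:inverse_jost}(i); then $\br^{\a}(v)=\z$.

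The core of the argument is injectivity. Suppose $\br^{\a}(v_1)=\br^{\a}(v_2)$, and set $f_j:=\bpsi_{\a}(\cdot,v_j)\in\cJ_{\a}$, so that $f_1,f_2$ are entire, share the same zero set $\{r_n\}\subset\C_-$ counted with multiplicity, and do not vanish on $\ol\C_+$. I claim $f_1=f_2$; the injectivity half of Theorem~\ref{dirac:thm:inverse_jost}(i) then yields $v_1=v_2$. From the representation (\ref{dirac:eq:jost_representation}) one reads off that each $f_j$ is entire of finite exponential type, satisfies $|f_j(k)|\le 1+\norm{h_j}_1 e^{2\g k_-}$ for all $k\in\C$, and, by a Riemann--Lebesgue argument, satisfies $f_j(k)\to e^{-i\a}$ as $|k|\to\iy$ in $\ol\C_+$. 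As $f_j$ is also continuous and zero free on $\ol\C_+$, it follows that $0<\inf_{\ol\C_+}|f_j|\le\sup_{\ol\C_+}|f_j|<\iy$. Consequently $g:=f_1/f_2$ and $1/g=f_2/f_1$ are both entire (the common zeros cancel) and both bounded on $\ol\C_+$. Writing out the Hadamard factorizations of $f_1$ and $f_2$ over their common zero set, whose counting function grows at most linearly so that the canonical product $P$ has genus at most $1$, we get $f_j(k)=f_j(0)\,e^{c_j k}P(k)$ with the same $P$, and hence $g(k)=(f_1(0)/f_2(0))\,e^{(c_1-c_2)k}$. The two-sided boundedness of $g$ on the half-plane $\ol\C_+$ forces $c_1=c_2$, so that $g$ is constant; evaluating the limit of $g(k)=f_1(k)/f_2(k)$ along the positive imaginary axis gives $g\equiv e^{-i\a}/e^{-i\a}=1$. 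Thus $f_1=f_2$, and the bijection $v\mapsto\bpsi_{\a}(\cdot,v)$ finishes the argument.

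The hard part is this uniqueness step. The delicate points I would have to settle carefully are: (a) the convergence $f_j(k)\to e^{-i\a}$ is uniform on $\ol\C_+$, not just along $\R$, which is what legitimizes the lower bound $\inf_{\ol\C_+}|f_j|>0$; (b) the genus of the canonical product over $\{r_n\}$, which requires $\sum_n|r_n|^{-2}<\iy$ (available from the order bound on $f_j$, or from a resonance-counting estimate of the type in Theorem~\ref{intro:thm:resonances_estimate}); and (c) the half-plane Phragm\'{e}n--Lindel\"{o}f argument that isolates the exponential factor $e^{(c_1-c_2)k}$ from the boundedness of $g$ and of $1/g$ on $\ol\C_+$. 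All of these are standard features of Cartwright-class entire functions, and versions tailored to Dirac Jost functions are already available in \cite{KM21}; in the write-up I would either cite them or reproduce the short computation sketched above.
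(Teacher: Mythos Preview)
The paper does not prove Theorem~\ref{dirac:thm:inverse_resonances} itself: it is quoted, together with Theorem~\ref{dirac:thm:inverse_jost}, from \cite{KM21} (where it appears as Corollary~1.2 of Theorem~1.1). Your proposal supplies precisely the kind of argument that the label ``Corollary'' suggests: deduce the resonance bijection from the Jost bijection by showing that two functions in $\cJ_{\a}$ with the same zero set must coincide. The surjectivity step (multiplying a $\psi$ as in (\ref{psi_id}) by $e^{-i\a}$ to land in $\cJ_{\a}$, then inverting via Theorem~\ref{dirac:thm:inverse_jost}(i)) is exactly right, and the injectivity step via Hadamard factorization plus the Riemann--Lebesgue asymptotics $f_j(k)\to e^{-i\a}$ is correct as well; your three ``delicate points'' (a)--(c) are all routine to verify.

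One simplification is worth noting. You invoke the general Hadamard factorization $f_j(k)=f_j(0)e^{c_jk}P(k)$ with an a priori unknown linear exponent $c_j$, and then use two-sided boundedness of $g=f_1/f_2$ on $\ol\C_+$ (your Phragm\'en--Lindel\"of point (c)) to force $c_1=c_2$. But the paper already records, in Theorem~\ref{dirac:thm:Hadamard} and Corollary~\ref{dirac:cor:cartwright}, that each $f_j\in\cJ_{\a}$ lies in the Cartwright class $\cE_{Cart}$ with the \emph{fixed} exponent $e^{i\g k}$, i.e.\ $f_j(k)=f_j(0)\,e^{i\g k}\lim_{r\to\iy}\prod_{|r_n|\le r}(1-k/r_n)$. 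Since the zeros agree, this immediately gives $f_1/f_2=f_1(0)/f_2(0)$, a constant, and the Riemann--Lebesgue limit along $\R$ pins it to $1$. This removes the need for your points (a) and (c) entirely (and (b) is absorbed into the Cartwright statement). Either route is fine; yours is slightly more self-contained, the Cartwright route is shorter and is how the paper's own toolkit is set up.
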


Now we recall some properties of entire functions which are useful
to study Jost functions. We introduce a Cartwright class of entire
functions $\cE_{Cart}$.

\begin{definition*}
    $\cE_{Cart}$ is the set of all entire functions $f$ of exponential type such that
    $$
        \int_{\R} \frac{\log(1+|f(x)|)dx}{1 + x^2} < \iy,\qq \t_+(f) = 0,\qq \t_-(f) = 2\g>0,
    $$
where the type $\t_{\pm}(f) = \lim \sup_{y \to +\iy} \frac{\log
|g(\pm i y)|}{y}$.
\end{definition*}

Firstly, a function from $\cE_{Cart}$ admits  the Hadamard
factorization (see, e.g., p.130 in \cite{L96}).

\begin{theorem} \label{dirac:thm:Hadamard}
    Let $f \in \cE_{Cart}$ and let $f(0) \neq 0$.
    Let $(z_n)_{n \geq 1}$ be the zeros of $f$ in $\C \sm \{0\}$ counted with multiplicity and
    arranged such that $0 < |z_1| \leq |z_2| \leq \ldots$.
    Then $f$ has the Hadamard factorization
    \[ \label{dirac:eq:Hadamard_factorization}
        f(k) = f(0) e^{i \g k} \lim_{r \to +\iy}
            \prod_{|z_n| \leq r} \left(1 - \frac{k}{z_n}\right),\qq k \in \C,
    \]
    where the infinite product converges uniformly on compact subsets of $\C$ and
    $$
        \begin{aligned}
            \sum_{n \geq 1} \frac{|\Im z_n|}{|z_n|^2} &< +\iy,\\
            \exists \lim_{r \to +\iy} \sum_{\abs{z_n} \leq r} \frac{1}{z_n} &\neq \iy.
        \end{aligned}
    $$
\end{theorem}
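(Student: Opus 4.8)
The plan is to derive the statement from the general Hadamard--Lindel\"of factorization theorem for the Cartwright class --- this is exactly the result quoted above from \cite{L96} --- the only extra work being to identify the exponential factor as $e^{i\g k}$. Since $f\in\cE_{Cart}$, the function $f$ is entire of exponential type, satisfies $\int_{\R}\frac{\log(1+|f(x)|)}{1+x^{2}}\,dx<\iy$ and $f(0)\neq0$; hence the cited theorem applies and furnishes a real number $\beta$ with
$$
    f(k)=f(0)\,e^{i\beta k}\lim_{r\to+\iy}\prod_{|z_{n}|\leq r}\left(1-\frac{k}{z_{n}}\right),\qq k\in\C ,
$$
the product converging uniformly on compact subsets of $\C$, with $\sum_{n\geq1}\frac{|\Im z_{n}|}{|z_{n}|^{2}}<+\iy$ and with $\lim_{r\to+\iy}\sum_{|z_{n}|\leq r}\frac{1}{z_{n}}$ existing and finite. (Evaluation at $k=0$ fixes the leading constant to be $f(0)$, and there is no monomial factor $k^{m}$ since $f(0)\neq0$.) Thus all assertions of the theorem are in hand except the value $\beta=\g$.

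To compute $\beta$ I would compare the exponential types of the two sides along the imaginary axis. Write $P(k)=\lim_{r\to+\iy}\prod_{|z_{n}|\leq r}(1-k/z_{n})$, so that $f=f(0)e^{i\beta k}P$; using $|e^{i\beta(\pm iy)}|=e^{\mp\beta y}$ one gets $\t_{+}(f)=\t_{+}(P)-\beta$ and $\t_{-}(f)=\t_{-}(P)+\beta$. By the Cartwright theory the canonical product of the zeros of a function in $\cE_{Cart}$ has equal exponential types in the two purely imaginary directions, $\t_{+}(P)=\t_{-}(P)$; this reflects the asymptotic balance of the zeros between the upper and lower half-planes, itself a consequence of the finiteness of $\int_{\R}\frac{\log(1+|f(x)|)}{1+x^{2}}dx$ (see \cite{L96}). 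Combining with the hypotheses $\t_{+}(f)=0$ and $\t_{-}(f)=2\g$, the first identity gives $\t_{+}(P)=\t_{-}(P)=\beta$, and then the second gives $2\beta=2\g$, i.e.\ $\beta=\g$ (equivalently, the zeros of $f$ have density $\g/\pi$). Substituting $\beta=\g$ into the factorization above yields \er{dirac:eq:Hadamard_factorization}.

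The factorization and the two summability properties are lifted verbatim from the general theorem, so the only point requiring care is the value of $\beta$, and there the crux is the Cartwright-class input forcing $\t_{+}(P)=\t_{-}(P)$. I do not anticipate a real obstacle: in every situation where this theorem will be applied, $f$ is a Jost function admitting the representation \er{dirac:eq:jost_representation} (equivalently \er{psi_id}) with $h\in\cX$, so $f(iy)$ stays bounded as $y\to+\iy$ while $|f(-iy)|$ grows exactly like $e^{2\g y}$ because $\max\supp h=\g$, which already pins down $\beta=\g$ by a Paley--Wiener-type estimate; the indicator argument is needed only to keep the statement at the level of the whole class $\cE_{Cart}$.
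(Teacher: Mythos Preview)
The paper does not prove this theorem at all: it is stated as a known result with a bare citation (``see, e.g., p.130 in \cite{L96}'') and no argument is given. Your proposal is therefore not to be compared against a proof in the paper but against the decision to quote the result from Levin's book.

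That said, your sketch is sound and does exactly what one would do to extract the specific form \er{dirac:eq:Hadamard_factorization} from the general Cartwright factorization in \cite{L96}: the factorization, the convergence of the symmetric product, and the two summability conditions on the zeros are verbatim the content of the cited theorem, and the only parameter to pin down is the real number $\beta$ in the exponential. Your computation $\beta=\tfrac{1}{2}(\t_{-}(f)-\t_{+}(f))=\g$ via $\t_{+}(P)=\t_{-}(P)$ for the canonical product is the standard way to do this, and your closing remark that in the applications one can read off $\beta=\g$ directly from the Paley--Wiener representation \er{dirac:eq:jost_representation} is a useful sanity check. There is nothing to correct; you have simply supplied more detail than the paper chose to include.
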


Secondly, we need the Levinson result about zeros of functions  from
$\cE_{Cart}$ (see, e.g., p. 58 in \cite{Koosis98}). Recall that
$\cN(r,f)$ is the counting function of zeros of a function $f$ counted with
multiplicities in the disc of the radius $r$ centered at the origin.

\begin{theorem} \label{dirac:thm:Levinson}
    Let $f \in \cE_{Cart}$. Then we have
    \[ \label{dirac:eq:Levinson_asymptotics}
        \cN(r,f) = \frac{2\g}{\pi} r + o(r)\qq \text{as $r \to +\iy$}.
    \]
\end{theorem}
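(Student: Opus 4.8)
The plan is to derive the asymptotics from Jensen's formula, the regularity of Cartwright-class functions on circles, and the monotonicity of the zero-counting function. First I would reduce to the case $f(0)\neq0$: if $f$ vanishes to order $m$ at the origin, replace $f$ by $k^{-m}f(k)$, which stays in $\cE_{Cart}$ and changes $\cN(r,f)$ only by the constant $m$, hence is irrelevant for an $o(r)$ statement. Writing $n(t)=\cN(t,f)$, Jensen's formula then gives
\[
    \int_{0}^{r}\frac{n(t)}{t}\,dt
    =\frac{1}{2\pi}\int_{0}^{2\pi}\log\bigl|f(re^{i\theta})\bigr|\,d\theta-\log|f(0)| ,
    \qq r>0 .
\]

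Next I would establish the asymptotic identity
\[
    \frac{1}{2\pi}\int_{0}^{2\pi}\log\bigl|f(re^{i\theta})\bigr|\,d\theta
    =\frac{r}{2\pi}\int_{0}^{2\pi}h_{f}(\theta)\,d\theta+o(r),\qq r\to+\iy,
\]
where $h_{f}(\theta)=\limsup_{\rho\to\iy}\rho^{-1}\log|f(\rho e^{i\theta})|$ is the indicator of $f$; this is the statement that, for a Cartwright-class function, the mean of $\log|f|$ over the circle of radius $r$ follows the mean of its indicator up to an $o(r)$ error. Here the membership condition $\int_{\R}(1+x^{2})^{-1}\log(1+|f(x)|)\,dx<\iy$ is essential, since it rules out large peaks of $f$ near the real axis that would spoil the $\theta$-average; I would prove the identity starting from the Hadamard factorization of Theorem \ref{dirac:thm:Hadamard}, using the Lindel\"of-type relations $\sum_{n}|\Im z_{n}|/|z_{n}|^{2}<\iy$ and $\lim_{r}\sum_{|z_{n}|\le r}z_{n}^{-1}\neq\iy$ on the zero set to control $\log|f(re^{i\theta})|$ uniformly off the real directions, together with the logarithmic-integral bound to control the $\theta$-average near them (alternatively, invoking the corresponding classical result as on p.~58 of \cite{Koosis98}). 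Since $f\in\cE_{Cart}$ has $\tau_{+}(f)=0$ and $\tau_{-}(f)=2\gamma$, its indicator diagram is the vertical segment from $0$ to $2i\gamma$, so that $h_{f}(\theta)=0$ for $\theta\in[0,\pi]$ and $h_{f}(\theta)=2\gamma|\sin\theta|$ for $\theta\in[\pi,2\pi]$, whence $\frac{1}{2\pi}\int_{0}^{2\pi}h_{f}(\theta)\,d\theta=\frac{2\gamma}{2\pi}\int_{\pi}^{2\pi}|\sin\theta|\,d\theta=\frac{2\gamma}{\pi}$. Combining this with Jensen's formula gives
\[
    N(r):=\int_{0}^{r}\frac{n(t)}{t}\,dt=\frac{2\gamma}{\pi}\,r+o(r),\qq r\to+\iy .
\]

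Finally I would upgrade this Ces\`aro-type statement to the pointwise asymptotics for $n(r)=\cN(r,f)$, using only that $n$ is nondecreasing. For fixed $b>1$ one has $N(br)-N(r)=\int_{r}^{br}t^{-1}n(t)\,dt\ge n(r)\log b$, hence $\limsup_{r\to\iy}n(r)/r\le\frac{2\gamma}{\pi}\cdot\frac{b-1}{\log b}$; for fixed $a\in(0,1)$ one has $N(r)-N(ar)=\int_{ar}^{r}t^{-1}n(t)\,dt\le n(r)\log(1/a)$, hence $\liminf_{r\to\iy}n(r)/r\ge\frac{2\gamma}{\pi}\cdot\frac{1-a}{-\log a}$. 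Letting $b\downarrow1$ and $a\uparrow1$, so that $(b-1)/\log b\to1$ and $(1-a)/(-\log a)\to1$, yields $\lim_{r\to\iy}n(r)/r=\frac{2\gamma}{\pi}$, which is precisely $\cN(r,f)=\frac{2\gamma}{\pi}r+o(r)$.

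The main obstacle is the middle step, the identity between the circular mean of $\log|f|$ and the mean of the indicator. The upper bound $\log|f(re^{i\theta})|\le(h_{f}(\theta)+\ve)r$ for $\theta$ bounded away from $0$ and $\pi$ is routine, but near the real directions $f$ may be large on the real axis, and it is exactly the defining logarithmic-integral property of the Cartwright class that keeps the average over $\theta$ from growing faster than $o(r)$; making this quantitative --- controlling both the excess over the indicator near the real directions and the possible deficit elsewhere --- is the technical heart of the argument, and is where I would lean on the classical Cartwright--Levinson theory.
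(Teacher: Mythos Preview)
The paper does not give a proof of this statement at all: it is quoted as a classical result, with a reference to p.~58 of \cite{Koosis98}, and is then applied (via Corollary~\ref{dirac:cor:cartwright}) rather than established. So there is no ``paper's own proof'' to compare against.

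Your outline is essentially the standard Cartwright--Levinson argument one finds in the cited source: Jensen's formula gives the integrated counting function $N(r)=\int_0^r t^{-1}n(t)\,dt$ in terms of the circular mean of $\log|f|$; for functions in the Cartwright class that mean is asymptotic to $r$ times the mean of the indicator, which here equals $2\gamma/\pi$; and a routine Tauberian step upgrades $N(r)=\tfrac{2\gamma}{\pi}r+o(r)$ to the same asymptotics for $n(r)$ using only monotonicity. Your identification of the delicate point --- controlling the circular mean of $\log|f|$ near the real directions, where the logarithmic-integral hypothesis is decisive --- is accurate, and this is exactly where the classical theory (Hadamard factorization together with the Lindel\"of-type sums from Theorem~\ref{dirac:thm:Hadamard} and the Cartwright condition) does the work. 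One small remark: the precise form of the indicator for a Cartwright-class function with $\tau_+(f)=0$, $\tau_-(f)=2\gamma$ deserves a line of justification rather than assertion, but the integral $\tfrac{1}{2\pi}\int_0^{2\pi}h_f(\theta)\,d\theta=(\tau_+(f)+\tau_-(f))/\pi=2\gamma/\pi$ is what matters and is correct.
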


Since any Jost function admit representation (\ref{dirac:eq:jost_representation}) it follows that
$\bpsi_{\a} \in \cE_{Cart}$ and we get the following Corollary.

\begin{corollary} \label{dirac:cor:cartwright}
    Let $(v,\a) \in \cX \ts [0,\pi)$. Then $\bpsi_{\a}(\cdot,v) \in \cE_{Cart}$ and
    it satisfies (\ref{dirac:eq:Hadamard_factorization})-(\ref{dirac:eq:Levinson_asymptotics}).
\end{corollary}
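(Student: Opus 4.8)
The plan is to read off the assertion from the representation of the Dirac Jost function provided by Theorem \ref{dirac:thm:inverse_jost}(i). For $(v,\a)\in\cX\ts[0,\pi)$ that theorem gives $\bpsi_\a(\cdot,v)\in\cJ_\a$, so $\bpsi_\a(\cdot,v)$ is entire, has no zeros in $\ol\C_+$ (in particular $\bpsi_\a(0,v)\neq0$), and
$$
    \bpsi_\a(k,v)=e^{-i\a}+\int_0^{\g}h(s)e^{2iks}\,ds,\qquad k\in\C,
$$
for some $h\in\cX$, that is $h\in L^2(0,\g)\ss L^1(0,\g)$ with $\max\supp h=\g$. Once I verify that $\bpsi_\a(\cdot,v)$ satisfies the four conditions defining $\cE_{Cart}$, the Hadamard factorization (\ref{dirac:eq:Hadamard_factorization}) follows from Theorem \ref{dirac:thm:Hadamard} (applicable since $\bpsi_\a(0,v)\neq0$) and the asymptotics (\ref{dirac:eq:Levinson_asymptotics}) from Theorem \ref{dirac:thm:Levinson}.

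First I would dispatch the three routine conditions together with one half of the fourth. For $s\in[0,\g]$ and all $k\in\C$ we have $|e^{2iks}|=e^{-2s\Im k}\le e^{2\g|\Im k|}$, so $h\in L^1(0,\g)$ yields $|\bpsi_\a(k,v)|\le 1+\norm{h}_1e^{2\g|\Im k|}\le(1+\norm{h}_1)e^{2\g|k|}$; hence $\bpsi_\a(\cdot,v)$ is entire of exponential type at most $2\g$. On the real axis $|e^{2ixs}|=1$, so $|\bpsi_\a(x,v)|\le1+\norm{h}_1$ for $x\in\R$ and therefore $\int_\R\log(1+|\bpsi_\a(x,v)|)(1+x^2)^{-1}\,dx\le\pi\log(2+\norm{h}_1)<\iy$. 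For the indicator in the $+i$ direction, as $y\to+\iy$ one has $e^{-2ys}\to0$ pointwise on $(0,\g)$ with $|h|\in L^1$ as dominating function, so dominated convergence gives $\bpsi_\a(iy,v)\to e^{-i\a}$, a point of modulus $1$; thus $\log|\bpsi_\a(iy,v)|\to0$ and $\t_+(\bpsi_\a(\cdot,v))=0$. Finally, the growth bound specialised to $k=-iy$ reads $|\bpsi_\a(-iy,v)|\le1+\norm{h}_1e^{2\g y}$, which gives $\t_-(\bpsi_\a(\cdot,v))\le2\g$.

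The one genuinely nontrivial point — and the step I expect to be the main obstacle — is the matching lower bound $\t_-(\bpsi_\a(\cdot,v))\ge2\g$; this is precisely where the hypothesis $\max\supp h=\g$ is used. Here I would invoke the Paley--Wiener--P\'olya description of the indicator of the Fourier--Laplace transform of a compactly supported $L^1$ function (see e.g. \cite{L96}): since $\sup\supp h=\g$, the entire function $k\mapsto\int_0^{\g}h(s)e^{2iks}\,ds$ satisfies $\limsup_{y\to+\iy}\frac1y\log\bigl|\int_0^{\g}h(s)e^{2ys}\,ds\bigr|=2\g$. (Equivalently, if this limit superior were some $\mu<2\g$, then that entire function, being bounded on $\R$ with $+i$-indicator $\le0$, would be of exponential type $\mu<2\g$, and the $L^2$ Paley--Wiener theorem would force $\supp h\ss[0,\g']$ for some $\g'<\g$, contradicting $\max\supp h=\g$.) Adding the bounded constant $e^{-i\a}$ does not alter this limit superior, so $\t_-(\bpsi_\a(\cdot,v))=2\g>0$, and consequently $\bpsi_\a(\cdot,v)\in\cE_{Cart}$. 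Together with $\bpsi_\a(0,v)\neq0$ noted above, Theorems \ref{dirac:thm:Hadamard} and \ref{dirac:thm:Levinson} then deliver (\ref{dirac:eq:Hadamard_factorization})--(\ref{dirac:eq:Levinson_asymptotics}), which completes the argument.
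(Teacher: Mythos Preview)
Your argument is correct and follows exactly the route the paper indicates: the paper simply asserts, in the sentence preceding the corollary, that the representation (\ref{dirac:eq:jost_representation}) immediately gives $\bpsi_\a\in\cE_{Cart}$, and you have carefully written out that verification (exponential type, integrability on $\R$, $\t_+=0$, and the Paley--Wiener step for $\t_-=2\g$). There is nothing to add.
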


It follows from Theorem \ref{dirac:thm:inverse_resonances}  that for
some sequence of resonances there exist a unique potential for any
boundary condition parameter. Moreover, for any $(v_o,\a_o) \in \cX
\ts [0,\pi)$ we introduce the set of isoresonances potentials and
boundary condition parameters $\Iso(v_o, \a_o)$ as follows
$$
    \Iso(v_o, \a_o) = \set*{(v,\a) \in \cX \ts [0,\pi) \given \br^{\a}(v) = \br^{\a_o}(v_o)}.
$$
In case of Dirac operator, this set is simple and we describe it in the following theorem.

\begin{theorem} \label{dirac:thm:isoresonances_potentials}
    Let $(v_o,\a_o) \in \cX \ts [0,\pi)$. Then we have
    $$
        \Iso(v_o,\a_o) = \set*{(v_{\d}, \a_{\d}) \given \d \in [0,\pi)},
    $$
    where
    $$
        v_{\d} = e^{2i\d}v_o,\qq \a_{\d} = \a_o + \d \mod \pi,\qq \d \in [0,\pi).
    $$
    Moreover, for any $\d \in [0,\pi)$, the following identities hold true:
    $$
        \bpsi_{\a_o + \d}(\cdot,e^{2i\d}v_o) = e^{-i\d}
        \bpsi_{\a_o}(\cdot,v_o),\qq \bS_{\a_o + \d}(\cdot,e^{2i\d}v_o) =
        e^{2i\d} \bS_{\a_o}(\cdot,v_o).
    $$
\end{theorem}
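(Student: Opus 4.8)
The plan is to exploit a gauge (rotation) symmetry of the Dirac system. For $\d\in\R$ put $U_\d=\ma e^{i\d}&0\\0&e^{-i\d}\am$ and write $Q_v=\ma 0&v\\\ol v&0\am$. Being diagonal, $U_\d$ commutes with $\s_3$, and a one-line computation gives $U_\d Q_{v_o}U_\d^{-1}=Q_{v_\d}$ with $v_\d:=e^{2i\d}v_o$ (note $e^{-2i\d}\ol{v_o}=\ol{e^{2i\d}v_o}$, so $Q_{v_\d}$ is again of the admissible form). Consequently, if $\by$ solves $\by'-Q_{v_o}\by=i\s_3k\by$, then $U_\d\by$ solves $\by'-Q_{v_\d}\by=i\s_3k\by$. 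Applying this to the Jost solution $\by$ for $v_o$ and dividing by $e^{i\d}$ to restore the normalization $(e^{ikx},0)^\top$ for $x\ge\g$, we obtain that $e^{-i\d}U_\d\by$ is the Jost solution for $v_\d$; in particular $v_\d\in\cX$ since $\supp v_\d=\supp v_o$, and its Jost solution exists and is unique because the gauge is invertible.

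\textbf{Step 1: the family lies in $\Iso(v_o,\a_o)$, and the two identities.} Evaluating $e^{-i\d}U_\d\by$ at $x=0$ gives the components $y_1(0,k)$ and $e^{-2i\d}y_2(0,k)$, where $\by=(y_1,y_2)^\top$ is the Jost solution for $v_o$. Substituting into the definition of the Jost function with boundary parameter $\a_o+\d$,
\[
    \bpsi_{\a_o+\d}(k,v_\d)=e^{-i(\a_o+\d)}y_1(0,k)-e^{i(\a_o+\d)}e^{-2i\d}y_2(0,k)=e^{-i\d}\bpsi_{\a_o}(k,v_o),
\]
which is the first displayed identity (the expression $e^{-i\a}y_1(0,\cdot)-e^{i\a}y_2(0,\cdot)$ defines $\bpsi_\a$ for every real $\a$; passing to $\a_\d=\a_o+\d\bmod\pi$ only multiplies $\bpsi$ by a sign, which is immaterial for zeros and for scattering matrices). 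Since a nonzero constant factor does not move zeros and $\bpsi_{\a_o}(\cdot,v_o)$ has no zeros in $\ol\C_+$, the function $\bpsi_{\a_\d}(\cdot,v_\d)$ has the same zeros in $\C_-$ as $\bpsi_{\a_o}(\cdot,v_o)$, i.e. $\br^{\a_\d}(v_\d)=\br^{\a_o}(v_o)$; hence $(v_\d,\a_\d)\in\Iso(v_o,\a_o)$ for each $\d\in[0,\pi)$. Finally, $\bS_\a=\ol\bpsi_\a/\bpsi_\a$ and $\ol{e^{-i\d}}/e^{-i\d}=e^{2i\d}$, so the identity above yields $\bS_{\a_o+\d}(\cdot,v_\d)=e^{2i\d}\bS_{\a_o}(\cdot,v_o)$.

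\textbf{Step 2: $\Iso(v_o,\a_o)$ is exhausted by the family.} Let $(v,\a)\in\Iso(v_o,\a_o)$, so $\br^{\a}(v)=\br^{\a_o}(v_o)$. Let $\d\in[0,\pi)$ be the unique value with $\a_o+\d\equiv\a\pmod\pi$, so $\a_\d=\a$. By Step 1, $\br^{\a}(v_\d)=\br^{\a_\d}(v_\d)=\br^{\a_o}(v_o)=\br^{\a}(v)$. Since $\br^{\a}:\cX\to\cR$ is a bijection by Theorem \ref{dirac:thm:inverse_resonances}, it is injective, so $v=v_\d$ and $(v,\a)=(v_\d,\a_\d)$. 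Together with Step 1 this gives $\Iso(v_o,\a_o)=\set*{(v_\d,\a_\d)\given\d\in[0,\pi)}$.

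Conceptually there is little to overcome: everything follows once the rotation $U_\d$ is identified, and the only care needed is bookkeeping — checking that $v_\d\in\cX$ (it has the same support as $v_o$) and tracking the harmless sign flip when $\a_o+\d$ is reduced into $[0,\pi)$. If one wishes to avoid invoking Theorem \ref{dirac:thm:inverse_resonances} in Step 2, an alternative is to observe that $\bpsi_\a(\cdot,v)$ and $\bpsi_{\a_o}(\cdot,v_o)$ both belong to $\cE_{Cart}$ by Corollary \ref{dirac:cor:cartwright}, have no zeros in $\ol\C_+$ and the same zeros in $\C_-$, hence differ by a constant by the Hadamard factorization of Theorem \ref{dirac:thm:Hadamard}; the constant equals $e^{i(\a_o-\a)}$ by the limits $\bpsi_\a(k,v)\to e^{-i\a}$ and $\bpsi_{\a_o}(k,v_o)\to e^{-i\a_o}$ as $k\to+\iy$ along $\R$, and the injectivity of $v\mapsto\bpsi_\a(\cdot,v)$ from Theorem \ref{dirac:thm:inverse_jost} then finishes the argument.
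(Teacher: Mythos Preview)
Your proof is correct and considerably more direct than the paper's. The paper argues on the spectral side: it first observes that $e^{-i\d}\bpsi_{\a_o}(\cdot,v_o)\in\cJ_{\a_o+\d}$, invokes Theorem \ref{dirac:thm:inverse_jost} to produce an abstract $v_\d\in\cX$ with this Jost function, and only then identifies $v_\d$ explicitly by passing through the Gelfand--Levitan--Marchenko equation---showing $F_\d=e^{2i\d}F_o$, conjugating the kernel $\Omega$ by $e^{i\d\s_3}$, and reading off $v_\d=e^{2i\d}v_o$ from the transformed $\Gamma$. You bypass all of this by working directly on the differential equation: the gauge $U_\d$ conjugates $Q_{v_o}$ to $Q_{e^{2i\d}v_o}$, so $e^{-i\d}U_\d\by$ is manifestly the Jost solution for $e^{2i\d}v_o$, and both displayed identities drop out in one line. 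The reverse inclusion is handled the same way in both proofs, via the injectivity of $\br^\a$ from Theorem \ref{dirac:thm:inverse_resonances}. What you gain is transparency and the elimination of the GLM machinery; what the paper's route illustrates (at greater cost) is how the symmetry propagates through the entire inverse-scattering chain, which is perhaps pedagogically useful but logically unnecessary here.
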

\begin{proof}
    Let $\z = \br^{\a_o}(v_o)$ for some $(v_o,\a_o) \in \cX \ts [0,\pi)$.
    Then $\z$ is a sequence of zeros of the corresponding Jost function
    $\bpsi_o = \bpsi_{\a_o}(\cdot,v_o) \in \cJ_{\a_o}$. Let $\d \in
    [0,\pi)$ and $\bpsi_{\d} = e^{-i\d} \bpsi_o$. Then $\z$ is a sequence
    of zeros of $\bpsi_{\d}$. Since $\bpsi_o \in \cJ_{\a_o}$, it follows
    that $\bpsi_{\d}$ has no zeros in $\ol \C_+$ and it has the following
    representation
    $$
        \bpsi_{\d}(k) = e^{-i\d} \left( e^{-i\a_o}
        + \int_0^{\g} h_o(s) e^{2iks} ds\right) = e^{-i(\a_o + \d)} +
        \int_0^{\g} h_{\d}(s) e^{2iks} ds,
    $$
    where $h_{\d} = e^{-i\d} h_o \in \cX$. Thus, we have $\bpsi_{\d} \in
    \cJ_{\a_o + \d}$ and, by Theorem \ref{dirac:thm:inverse_jost}, there
    exists a unique $v_{\d} \in \cX$ such that $\bpsi_{\a_o +
    \d}(\cdot,v_{\d}) = \bpsi_{\d} = e^{-i\d} \bpsi_o$ and
    $\br^{\a_o+\d}(v_{\d}) = \z$, which yields $(v_{\d},\a_o + \d) \in
    \Iso(v_o,\a_o)$. Moreover, due to Theorem
    \ref{dirac:thm:inverse_resonances} for any $\d \in [0,\pi)$ there
    exists a unique potential $v_{\d}$ such that
    $\br^{\a_o + \d}(v_{\d}) = \z$. So that we have
    $$
        \Iso(v_o,\a_o) = \set*{(v_{\d},\a_o + \d) \given \d \in [0,\pi)},
    $$
    where $v_{\d}$ was defined above.

    In order to recover $v_{\d}$ for any $\d \in [0,\pi)$, we construct
    a solution of the corresponding     Gelfand-Levitan-Marchenko
    equation. Let $\bS_o = \bS_{\a_o}(\cdot,v_o)$ be the scattering matrix
    of $H_{\a_o}(v_o)$ and $\bS_{\d} = \bS_{\a_o + \d}(\cdot,v_{\d})$ be the
    scattering
    matrix of $H_{\a_o + \d}(v_{\d})$.
    Since $\bS_{o} \in \cS_{\a_o}$ and $\bS_{\d} in \cS_{\a_o + \d}$, it follows from the definition of $\cS_{\a}$ that
    $$
        \begin{aligned}
            \bS_o(k) = e^{2i\a_o} + \int_{-\g}^{+\iy} F_o(s) e^{2iks} ds,\qq
            \bS_{\d}(k) = e^{2i(\a_o + \d)} + \int_{-\g}^{+\iy} F_{\d}(s) e^{2iks} ds
        \end{aligned}
    $$
    for some $F_o, F_{\d} \in L^2(\R_+) \cap L^1(\R_+)$.
    Since we have
    $$
        \bS_{\d} = \frac{\ol{\bpsi_{\d}}}{\bpsi_{\d}} = e^{2i\d}
        \frac{\ol{\bpsi_{o}}}{\bpsi_{o}} = e^{2i\d} \bS_o
    $$
    it follows that $F_{\d} = e^{2i\d} F_o$.
    We also introduce the following matrix-valued functions
    $$
        \Omega_{o}(x) = \ma 0 & F_o(-x)
        \\
        \ol{F_o(-x)} & 0\am,\qq \Omega_{\d}(x) = \ma 0 & F_{\d}(-x)
        \\ \ol{F_{\d}(-x)} & 0\am.
    $$
    Note that
    \[ \label{dirac:eq:eq1}
        \Omega_{\d} = e^{i\d \s_3} \Omega_o e^{-i\d \s_3}.
    \]

    We also introduce a matrix-valued Jost solution $\boldsymbol{f}(x,k)$ of the Dirac equation (\ref{dirac:eq:dirac_equation}) satisfying
    $$
        \boldsymbol{f}(x,k) = e^{ixk\s_3},\qq x \geq \g.
    $$
    Note that $\boldsymbol{f}$ can be constructed via the vector-valued Jost solution$\by$ as follows
    $$
        \boldsymbol{f}(x,k) = \left( \by(x,k) \mid \s_1 \ol{\by(x,\ol k)}\right),\qq \s_1 = \ma 0 & 1 \\ 1 & 0\am.
    $$
    We need the following results (see, e.g., Lemmas 2.1 and 2.5 in \cite{KM21}).
    The matrix-valued Jost solution $\boldsymbol{f}(x,k)$ of the Dirac equation with the potential $v_o \in \cX$ has the following representation
    $$
        \boldsymbol{f}(x,k) = e^{ixk\s_3} + \int_0^{\g} \G^o(x,k) e^{i(2s + x)k \s_3}
        ds,\qq \G^o = \ma \G^o_{11} & \G^o_{12} \\ \G^o_{21} & \G^o_{22}\am,
    $$
    where $v_o(x) = - \G^o_{12}(x,0)$ for almost all $x \in \R_+$.
    The function $\G^o$ can be found as a unique solution of the Gelfand-Levitan-Marchenko equation:
    \[ \label{dirac:eq:eq2}
        \G^o(x,s) + \Omega_o(x+s) + \int_0^{+\iy} \G^o(x,t) \Omega_o(x+t+s) dt = 0,
    \]
    where the identity holds true for almost all $x,s \in \R_+$.

    We introduce $\G^{\d} = e^{i\d \s_3} \G^o e^{-i\d \s_3}$. Using
    (\ref{dirac:eq:eq1}) and (\ref{dirac:eq:eq2}), we get
    $$
        \G^{\d}(x,s) + \Omega_{\d}(x+s) + \int_0^{+\iy} \G^{\d}(x,t) \Omega_{\d}(x+t+s) dt = 0,
    $$
    for almost all $x,s \in \R_+$.
    Then we recover $v_{\d}$ as follows:
    $$
        v_{\d}(x) = - \G^{\d}_{12}(x,0) = - e^{2i\d} \G^o_{12}(x,0) = e^{2i\d} v_o(x)
    $$
    for almost all $x \in \R_+$.
\end{proof}

As an immediate corollary of the last theorem we get that the scattering problem for Dirac operator with arbitrary
boundary condition (\ref{dirac:eq:bc}) can be reduced to the scattering problem with the Dirichlet boundary
condition.

\begin{corollary} \label{dirac:cor:reducing_of_scattering_problem}
    Let $\a \in [0,\pi)$. Then for any $v \in \cX$ we have
    $$
        \bS_{\a}(\cdot,v) = e^{2i\a} \bS_{0}(\cdot,e^{-2i\a}v).
    $$
\end{corollary}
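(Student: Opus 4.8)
The plan is to obtain the corollary as the special case of Theorem \ref{dirac:thm:isoresonances_potentials} in which the reference boundary condition parameter is $0$, i.e.\ the Dirichlet one.

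First I would fix $\a \in [0,\pi)$ and $v \in \cX$, and set $v_o := e^{-2i\a}v$. Since multiplication by the unimodular constant $e^{-2i\a}$ preserves both the $L^2(\R_+,\C)$-norm and the support, we have $v_o \in \cX$, so the pair $(v_o,\a_o) := (e^{-2i\a}v,\,0)$ lies in $\cX \ts [0,\pi)$ and Theorem \ref{dirac:thm:isoresonances_potentials} applies to it. I would then take $\d := \a \in [0,\pi)$ as the free parameter in that theorem. For this choice one computes $v_{\d} = e^{2i\d}v_o = e^{2i\a}e^{-2i\a}v = v$ and $\a_{\d} = \a_o + \d \mod \pi = \a$, so the second displayed identity in Theorem \ref{dirac:thm:isoresonances_potentials}, namely
$$
    \bS_{\a_o+\d}(\cdot,e^{2i\d}v_o) = e^{2i\d}\,\bS_{\a_o}(\cdot,v_o),
$$
reads precisely $\bS_{\a}(\cdot,v) = e^{2i\a}\bS_0(\cdot,e^{-2i\a}v)$, which is the assertion of the corollary.

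The only point that needs a word of care is that on the left-hand side $\a_o + \d = \a$ already lies in $[0,\pi)$ (because $\a_o = 0$), so no reduction modulo $\pi$ is involved; alternatively one may simply note that the boundary condition (\ref{dirac:eq:bc}), hence the operator $H_{\beta}$ and its scattering matrix $\bS_{\beta}$, are $\pi$-periodic in $\beta$. There is no genuine obstacle here, since all the work has already been done in Theorem \ref{dirac:thm:isoresonances_potentials}; if one wished for a fully self-contained argument one could instead rerun the short computation from its proof, observing that the substitution $v \mapsto e^{-2i\a}v$ amounts to conjugating the Dirac equation (\ref{dirac:eq:dirac_equation}) by $e^{i\a\s_3}$, which transforms the Jost solution to give $\bpsi_0(\cdot,e^{-2i\a}v) = e^{i\a}\bpsi_{\a}(\cdot,v)$ and hence the stated relation between $\bS_0$ and $\bS_{\a}$.
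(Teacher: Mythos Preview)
Your proof is correct and is exactly the approach the paper intends: the corollary is stated as an immediate consequence of Theorem \ref{dirac:thm:isoresonances_potentials}, and your specialization $\a_o = 0$, $v_o = e^{-2i\a}v$, $\d = \a$ is precisely how one reads it off from the second identity there.
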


Recall that $\cR_o$ is a class of sequences of resonances, which are symmetric with respect
to the imaginary line. We show that in case of the Neumann boundary condition, i.e.
when $\a = \frac{\pi}{2}$ in (\ref{dirac:eq:bc}), the resonances are symmetric with respect
to the imaginary line if and only if the potential is real-valued. Let $\cX_{real}$
be a subset of all real-valued functions from $\cX$.
We also need the following result (see Theorem 1.6 in \cite{KM21}).

\begin{theorem} \label{dirac:thm:conjugate_potential}
    Let $(v,\a) \in \cX \ts [0,\pi)$. Then we have
    $$
        \ol{\bpsi_{\a}(-\ol{k}, v)} = e^{2i\a} \bpsi_{\a}(k, e^{4i\a}\ol{v}),\qq k \in \C.
    $$
\end{theorem}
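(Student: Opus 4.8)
The plan is to derive the identity from two elementary symmetries of the Dirac system \er{dirac:eq:dirac_equation}: invariance under complex conjugation combined with the substitution $k \mapsto -\ol k$, and covariance under the diagonal gauge rotations $e^{i\theta\s_3}$. Throughout I write $\by(\cdot,k;v)$ for the Jost solution of \er{dirac:eq:dirac_equation} with potential $v \in \cX$, normalized by $\by(x,k;v) = \ma e^{ikx} \\ 0 \am$ for $x \geq \g$, with components $y_1(\cdot,k;v),y_2(\cdot,k;v)$; existence and uniqueness of $\by(\cdot,k;v)$ are as in Lemma \ref{trans:lm:jost_solution} (see also \cite{KM21}). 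Note that $\ol v$ and $e^{4i\a}\ol v$ lie in $\cX$, so all Jost functions below are well defined.

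First I would establish the conjugation symmetry $\ol{\by(x,-\ol k;v)} = \by(x,k;\ol v)$. Substituting $-\ol k$ for $k$ in \er{dirac:eq:dirac_equation} and taking complex conjugates, and using that $\s_3$ is real while $\ol{i(-\ol k)} = ik$, one sees that $\boldsymbol{z}(x,k):=\ol{\by(x,-\ol k;v)}$ solves $\boldsymbol{z}'-\ol Q\,\boldsymbol{z} = i\s_3 k\,\boldsymbol{z}$ with $\ol Q = \ma 0 & \ol v \\ v & 0 \am$, which is precisely the potential matrix attached to the function $\ol v$. Since $\ol{e^{-i\ol k x}} = e^{ikx}$ for real $x$, the function $\boldsymbol{z}$ carries the Jost normalization for $x \geq \g$, so by uniqueness $\boldsymbol{z}(\cdot,k) = \by(\cdot,k;\ol v)$; in particular $y_j(0,k;\ol v) = \ol{y_j(0,-\ol k;v)}$ for $j=1,2$.

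Next I would use gauge covariance. For a constant $\theta$ and any $w \in \cX$, the vector $e^{-i\theta}e^{i\theta\s_3}\by(x,k;w)$ still satisfies the Jost normalization, and since $\s_3$ commutes with $e^{i\theta\s_3}$ while conjugation by $e^{i\theta\s_3}$ multiplies the off-diagonal entries of the potential matrix by $e^{\pm 2i\theta}$, this vector solves the Dirac equation with potential $e^{2i\theta}w$. Hence $\by(x,k;e^{2i\theta}w) = \ma w_1(x,k) \\ e^{-2i\theta}w_2(x,k)\am$ with $w_j = y_j(\cdot,\cdot;w)$. Taking $w = \ol v$, $\theta = 2\a$, and feeding in the previous step yields $y_1(0,k;e^{4i\a}\ol v) = \ol{y_1(0,-\ol k;v)}$ and $y_2(0,k;e^{4i\a}\ol v) = e^{-4i\a}\,\ol{y_2(0,-\ol k;v)}$.

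Finally I would assemble the Jost functions. By definition $\bpsi_{\a}(k;e^{4i\a}\ol v) = e^{-i\a}y_1(0,k;e^{4i\a}\ol v) - e^{i\a}y_2(0,k;e^{4i\a}\ol v) = e^{-i\a}\ol{y_1(0,-\ol k;v)} - e^{-3i\a}\ol{y_2(0,-\ol k;v)}$, while $\ol{\bpsi_{\a}(-\ol k;v)} = e^{i\a}\ol{y_1(0,-\ol k;v)} - e^{-i\a}\ol{y_2(0,-\ol k;v)}$; the two differ precisely by the scalar factor $e^{2i\a}$, which is the claim. Each step is a short, if sign-sensitive, manipulation, so there is no substantial obstacle; the only care needed is to keep the substitution $k \mapsto -\ol k$ and the gauge angle $\theta = 2\a$ synchronized, so that the rotated potential comes out as $e^{4i\a}\ol v$ and the prefactor as exactly $e^{2i\a}$. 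Alternatively, one may simply invoke Theorem 1.6 of \cite{KM21}, whose Dirac setting coincides with the present one.
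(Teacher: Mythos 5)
Your proof is correct. Note, however, that the paper itself does not prove Theorem \ref{dirac:thm:conjugate_potential}: it is quoted verbatim from Theorem 1.6 of \cite{KM21} (as your last sentence anticipates), so your argument is a self-contained alternative rather than a reproduction of the paper's route. Your two symmetries check out against the conventions used here: conjugating the equation $\by'-Q\by=i\s_3 k\by$ at the spectral parameter $-\ol k$ turns $Q=\ma 0 & v\\ \ol v & 0\am$ into the matrix attached to $\ol v$ and preserves the normalization $\ma e^{ikx}\\ 0\am$ for $x\ge\g$, giving $y_j(0,k;\ol v)=\ol{y_j(0,-\ol k;v)}$; and the gauge factor $e^{-i\theta}e^{i\theta\s_3}=\diag(1,e^{-2i\theta})$ commutes with $\s_3$, preserves the Jost normalization, and conjugates the potential matrix to that of $e^{2i\theta}w$, so with $w=\ol v$, $\theta=2\a$ one gets $y_1(0,k;e^{4i\a}\ol v)=\ol{y_1(0,-\ol k;v)}$ and $y_2(0,k;e^{4i\a}\ol v)=e^{-4i\a}\ol{y_2(0,-\ol k;v)}$. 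Plugging these into $\bpsi_\a(k)=e^{-i\a}y_1(0,k)-e^{i\a}y_2(0,k)$ indeed yields $e^{2i\a}\bpsi_\a(k,e^{4i\a}\ol v)=e^{i\a}\ol{y_1(0,-\ol k;v)}-e^{-i\a}\ol{y_2(0,-\ol k;v)}=\ol{\bpsi_\a(-\ol k,v)}$, which is the claim; the only ingredient you use beyond algebra is uniqueness of the Jost solution, which is available (it is standard for compactly supported $L^2$ potentials and is the setting of \cite{KM21}). Multiplication by a unimodular constant and conjugation keep $e^{4i\a}\ol v$ in $\cX$, so all objects are well defined. The benefit of your argument is that it makes the constants ($\theta=2\a$, prefactor $e^{2i\a}$) transparent and keeps the paper self-contained, at the modest cost of repeating a computation already recorded in \cite{KM21}; citing that theorem, as the paper does, is the shorter route and also guards against the sign ambiguity between \er{dirac:eq:dirac_equation} and \er{trans:eq:Dirac_equation} (which, as your computation shows, is immaterial for this identity).
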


\begin{remark}
    In particular, for $\a = \frac{\pi}{2}$, we get
    \[ \label{dirac:eq:jost_function_perturbation}
        \ol{\bpsi_{\frac{\pi}{2}}(-\ol{k}, v)} = - \bpsi_{\frac{\pi}{2}}(k, \ol{v}),\qq k \in \C.
    \]
\end{remark}

\begin{theorem} \label{dirac:thm:inverse_resonances_real_valued_potentials}
    The mapping $\br^{\frac{\pi}{2}}: \cX_{real} \to \cR_o$ is a bijection.
\end{theorem}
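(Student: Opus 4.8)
The plan is to deduce the statement from Theorem~\ref{dirac:thm:inverse_resonances} (that $\br^{\a}:\cX\to\cR$ is a bijection for every $\a\in[0,\pi)$) together with the symmetry relation \er{dirac:eq:jost_function_perturbation}. Since $\br^{\pi/2}$ is already injective on all of $\cX$, its restriction to $\cX_{real}$ is automatically injective, so the real work is to identify the image of $\cX_{real}$ under $\br^{\pi/2}$ precisely with $\cR_o$.

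First I would check the inclusion $\br^{\pi/2}(\cX_{real})\subseteq\cR_o$. Fix $v\in\cX_{real}$, so $\ol v=v$. By \er{dirac:eq:jost_function_perturbation} the function $k\mapsto\ol{\bpsi_{\pi/2}(-\ol k,v)}$ equals $-\bpsi_{\pi/2}(k,v)$. The left-hand side is a genuine entire function of $k$ (its Taylor coefficients are those of $\bpsi_{\pi/2}(\cdot,v)$, conjugated and multiplied by $(-1)^n$), and $\bpsi_{\pi/2}\in\cE_{Cart}$ by Corollary~\ref{dirac:cor:cartwright}, so its zero set is a well-defined locally finite multiset contained in $\C_-$. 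From the identity one reads off that $k_0$ is a zero of $\bpsi_{\pi/2}(\cdot,v)$ of order $m$ if and only if $-\ol{k_0}$ is a zero of order $m$; hence the multiset of resonances of $H_{\pi/2}$ with potential $v$ is invariant under $r\mapsto-\ol r$, which means $R\,\br^{\pi/2}(v)=\br^{\pi/2}(v)$, i.e. $\br^{\pi/2}(v)\in\cR_o$.

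Next I would prove surjectivity onto $\cR_o$. Given $\z\in\cR_o$, Theorem~\ref{dirac:thm:inverse_resonances} supplies a unique $v\in\cX$ with $\br^{\pi/2}(v)=\z$. Applying \er{dirac:eq:jost_function_perturbation} once more, now without assuming $v$ real, we get $\bpsi_{\pi/2}(k,\ol v)=-\ol{\bpsi_{\pi/2}(-\ol k,v)}$; reasoning as above, the multiset of zeros of $\bpsi_{\pi/2}(\cdot,\ol v)$, counted with multiplicity, is $\{-\ol r : r\in\z\}$, and since $\z\in\cR_o$ this multiset is $\z$ again. Noting that $\ol v\in\cX$ (because $\supp\ol v=\supp v$), we obtain $\br^{\pi/2}(\ol v)=\z=\br^{\pi/2}(v)$, and the injectivity of $\br^{\pi/2}$ on $\cX$ forces $\ol v=v$, i.e. $v\in\cX_{real}$. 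Thus every $\z\in\cR_o$ lies in $\br^{\pi/2}(\cX_{real})$, and combined with the injectivity already observed this yields the claimed bijection.

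The only delicate point, and the thing I would be most careful about, is the bookkeeping in passing from the functional identity \er{dirac:eq:jost_function_perturbation} at the level of Jost functions down to the level of the ordered sequences in $\mathbf{s}$: one must verify that the involution $r\mapsto-\ol r$ on the resonance multiset corresponds exactly to the reflection operator $R$ built from the rearrangement rule \er{intro:eq:arrangement}, and that multiplicities are tracked correctly throughout. Once it is recorded that $k\mapsto\ol{\bpsi_{\pi/2}(-\ol k,v)}$ is entire and that zeros and their orders are preserved under $k\mapsto-\ol k$ followed by conjugation, this reduces to a routine matching of multisets.
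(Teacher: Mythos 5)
Your argument is correct, and for the surjectivity step it takes a genuinely different (and shorter) route than the paper. Both proofs start the same way: injectivity is inherited from Theorem \ref{dirac:thm:inverse_resonances}, and the inclusion $\br^{\pi/2}(\cX_{real})\subseteq\cR_o$ follows from \er{dirac:eq:jost_function_perturbation} with $\ol v=v$, exactly as you do it. For surjectivity, the paper works at the level of the Jost function: starting from the unique $v\in\cX$ with $\br^{\pi/2}(v)=\z$, it uses Corollary \ref{dirac:cor:cartwright} and the Hadamard factorization \er{dirac:eq:Hadamard_factorization} together with the symmetry of the zero set to show that $-\ol{\bpsi_{\pi/2}(-\ol k,v)}$ and $\bpsi_{\pi/2}(k,v)$ differ only by the constant $-\ol{\bpsi_{\pi/2}(0,v)}/\bpsi_{\pi/2}(0,v)$, pins that constant to $1$ via the Riemann--Lebesgue asymptotics $\bpsi_{\pi/2}(k,v)\to -i$, and then concludes $\bpsi_{\pi/2}(\cdot,v)=\bpsi_{\pi/2}(\cdot,\ol v)$ and $v=\ol v$ from the injectivity of $v\mapsto\bpsi_{\pi/2}(\cdot,v)$ in Theorem \ref{dirac:thm:inverse_jost}. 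You instead stay entirely at the level of resonance sequences: \er{dirac:eq:jost_function_perturbation} shows the resonance multiset of $\ol v$ (which lies in $\cX$) is the reflection of that of $v$, hence equals $\z$ because $\z\in\cR_o$, and injectivity of $\br^{\pi/2}$ on all of $\cX$ forces $\ol v=v$. This bypasses the Cartwright/Hadamard/Riemann--Lebesgue machinery altogether; the trade-off is that you invoke the deep uniqueness-from-resonances statement (Theorem \ref{dirac:thm:inverse_resonances}) a second time, where the paper only needs the easier injectivity of the Jost-function map for the final identification, and the paper's detour yields as a by-product the functional identity $-\ol{\bpsi_{\pi/2}(-\ol k,v)}=\bpsi_{\pi/2}(k,v)$, which your argument never establishes (nor needs). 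Your closing caveat about matching multisets with the ordering \er{intro:eq:arrangement} is indeed only bookkeeping: in $\C_-$ a point is determined by its modulus and real part, so the arranged sequence is uniquely determined by the multiset.
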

\begin{proof}
    Using Theorem \ref{dirac:thm:inverse_resonances} and $\cX_{real} \ss \cX$, 
    we only need to prove that the mapping
    $\br^{\frac{\pi}{2}}: \cX_{real} \to \cR_o$ is a surjection. Let $v \in \cX_{real}$.
    Due to (\ref{dirac:eq:jost_function_perturbation}) we have
    $$
        \bpsi_{\frac{\pi}{2}}(k, v) = -\ol{\bpsi_{\frac{\pi}{2}}(-\ol{k}, v)},
    $$
    which yields $\br^{\frac{\pi}{2}}(v) \in \cR_o$.

    Let $(r_n)_{n \geq 1} \in \cR_o$. By Theorem \ref{dirac:thm:inverse_resonances} there exists
    a unique $v \in \cX$ such that $\br^{\frac{\pi}{2}}(v) = (r_n)_{n \geq 1}$.
    Due to Corollary \ref{dirac:cor:cartwright}, it
    follows from (\ref{dirac:eq:Hadamard_factorization}) that
    \[ \label{dirac:eq:2}
        \bpsi_{\frac{\pi}{2}}(k,v) = \bpsi_{\frac{\pi}{2}}(0,v) e^{i\g k}
        \lim_{r \to +\iy} \prod_{\abs{r_n} \leq r} \left(1 - \frac{k}{r_n}\right),
    \]
    which yields
    \[ \label{dirac:eq:1}
        -\ol{\bpsi_{\frac{\pi}{2}}(-\ol{k},v)} = -\ol{\bpsi_{\frac{\pi}{2}}(0,v)} e^{i\g k}
        \lim_{r \to +\iy} \prod_{\abs{r_n} \leq r} \left(1 - \frac{k}{-\ol{r_n}}\right).
    \]
    Since $(r_n)_{n \geq 1}$ is symmetric with respect to the imaginary line, we get
    $$
        \lim_{r \to +\iy} \prod_{\abs{r_n} \leq r} \left(1 - \frac{k}{-\ol{r_n}}\right) =
        \lim_{r \to +\iy} \prod_{\abs{r_n} \leq r} \left(1 - \frac{k}{r_n}\right).
    $$
    Substituting this identity in (\ref{dirac:eq:1}) and using (\ref{dirac:eq:2}), we obtain
    \[ \label{dirac:eq:3}
        -\ol{\bpsi_{\frac{\pi}{2}}(-\ol{k},v)} = -\frac{\ol{\bpsi_{\frac{\pi}{2}}(0,v)}}{\bpsi_{\frac{\pi}{2}}(0,v)} \bpsi_{\frac{\pi}{2}}(k,v).
    \]
    Now we describe the asymptotic of $\bpsi_{\frac{\pi}{2}}(k,v)$ at infinity.
    Since $\bpsi_{\frac{\pi}{2}}(\cdot,v) \in \cJ_{\frac{\pi}{2}}$ it follows from (\ref{dirac:eq:jost_representation}) that
    $$
        \bpsi_{\frac{\pi}{2}}(k,v) = -i + \int_0^{\g} h(s) e^{2iks} ds,\qq k \in \C, h \in \cX.
    $$
    Applying the Riemann-Lebesgue lemma (see e.g. Theorem IX.7 in \cite{RS80}), we get
    $$
        \lim_{k \to \pm \iy} \bpsi_{\frac{\pi}{2}}(k,v) = -i,
    $$
    which yields
    $$
        \lim_{k \to \pm \iy} -\ol{\bpsi_{\frac{\pi}{2}}(-\ol{k},v)} = -i.
    $$
    Going to the limit at $k \to \pm \iy$ in (\ref{dirac:eq:3}), we obtain $-\ol{\bpsi_{\frac{\pi}{2}}(0,v)} = \bpsi_{\frac{\pi}{2}}(0,v)$.
    Substituting this identity in (\ref{dirac:eq:3}), we get $-\ol{\bpsi_{\frac{\pi}{2}}(-\ol{k},v)} = \bpsi_{\frac{\pi}{2}}(k,v)$.
    On the other hand, using (\ref{dirac:eq:jost_function_perturbation}), we get
    $$
        \bpsi_{\frac{\pi}{2}}(k,v) = -\ol{\bpsi_{\frac{\pi}{2}}(-\ol{k},v)} = \bpsi_{\frac{\pi}{2}}(k,\ol{v}),\qq k \in \C.
    $$
    Due to Theorem \ref{dirac:thm:inverse_jost} the mapping $v \mapsto \bpsi_{\frac{\pi}{2}}(\cdot,v)$
    is a bijection, which yields $v = \ol{v}$ and then $v \in \cX_{real}$.
\end{proof}

At last we need the following result (see Theorem 1.3 in \cite{KM21}), which
describes the position of resonances and the forbidden domain.

\begin{theorem} \label{dirac:thm:forbidden_domain}
Let $(z_n)_{n \geq 1} = \br^{\a}(v)$ for some $(v,\a) \in \cX \ts
[0,\pi)$ and let $\ve > 0$.  Then there exists a constant $C =
C(\ve,v,\a) \geq 0$ such that the following inequality holds true
for each $n \geq 1$:
    $$
        2 \g \Im z_n \leq \ln \left( \ve + \frac{C}{\abs{z_n}} \right).
    $$
    In particular, for any $A > 0$, there are only finitely many resonances in the strip
    $$
        \set*{z \in \C \given 0 > \Im z > -A}.
    $$
\end{theorem}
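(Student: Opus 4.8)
The plan is to work directly with the Jost function $\bpsi_{\a}(\cdot,v)$, whose zeros in $\C_-$ are by definition the resonances $\br^{\a}(v)=(z_n)_{n\ge1}$. Since $\bpsi_{\a}(\cdot,v)\in\cJ_{\a}$ by Theorem \ref{dirac:thm:inverse_jost}, it admits the representation $\bpsi_{\a}(k,v)=e^{-i\a}+\int_0^{\g}h(s)e^{2iks}\,ds$ for some $h=h_{v,\a}\in\cX$. First I would pull out the exponential growth factor by the substitution $s=\g-t$, which gives
$$
    \bpsi_{\a}(k,v)=e^{-i\a}+e^{2ik\g}\int_0^{\g}\hat h(t)e^{-2ikt}\,dt,\qquad \hat h(t):=h(\g-t).
$$
Evaluating at a resonance $z_n$, where $\bpsi_{\a}(z_n,v)=0$, and taking absolute values (note $|e^{-2iz_n\g}|=e^{2\g\Im z_n}$) yields the identity
$$
    e^{2\g\Im z_n}=\left|\int_0^{\g}\hat h(t)e^{-2iz_n t}\,dt\right|.
$$
Since $\bpsi_{\a}$ has no zeros in $\ol\C_+$ we have $\Im z_n<0$ and $|z_n|>0$, so after taking logarithms the theorem is reduced to the estimate $\bigl|\int_0^{\g}\hat h(t)e^{-2iz_n t}\,dt\bigr|\le\ve+C/|z_n|$ for a suitable $C=C(\ve,v,\a)$.

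The main step, and the only delicate point, is this bound on the oscillatory integral. The difficulty is that $h$, hence $\hat h$, lies only in $L^2(0,\g)\ss L^1(0,\g)$, so one cannot integrate by parts directly. I would handle this by a density argument: given $\ve>0$, choose $g\in C^1([0,\g])$ with $\|\hat h-g\|_{L^1(0,\g)}<\ve$. For $z_n\in\C_-$ and $t\ge0$ one has $|e^{-2iz_n t}|=e^{2(\Im z_n)t}\le1$, so $\bigl|\int_0^{\g}(\hat h-g)e^{-2iz_n t}\,dt\bigr|<\ve$; and a single integration by parts in $\int_0^{\g}g(t)e^{-2iz_n t}\,dt$, again using $|e^{-2iz_n t}|\le1$ and $|e^{-2iz_n\g}|\le1$, bounds this term by $\tfrac{1}{2|z_n|}\bigl(|g(0)|+|g(\g)|+\|g'\|_{L^1(0,\g)}\bigr)$. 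Putting $C:=\tfrac12\bigl(|g(0)|+|g(\g)|+\|g'\|_{L^1(0,\g)}\bigr)$ --- which depends only on $\ve$, $v$, $\a$ through $\hat h$ and the chosen $g$ --- we obtain $e^{2\g\Im z_n}\le\ve+C/|z_n|$ for every $n$, and the first assertion follows upon taking logarithms.

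For the "in particular" statement I would argue by contradiction. By Corollary \ref{dirac:cor:cartwright} the function $\bpsi_{\a}(\cdot,v)$ is entire and not identically zero, so its zeros are isolated and only finitely many resonances lie in any bounded set; hence if the strip $\{z\in\C: 0>\Im z>-A\}$ contained infinitely many resonances $z_{n_j}$, then $|z_{n_j}|\to\iy$. Choosing $\ve$ so small that $\ln\ve<-2\g A$ and applying the inequality already established, we would get $2\g\Im z_{n_j}\le\ln(\ve+C/|z_{n_j}|)\to\ln\ve<-2\g A$, so $\Im z_{n_j}<-A$ for all large $j$, contradicting $z_{n_j}$ being in the strip. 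Alternatively, the statement is exactly Theorem~1.3 of \cite{KM21} and may be quoted directly.
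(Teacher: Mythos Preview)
Your proof is correct. Note, however, that the paper does not actually prove this theorem: it is stated as a quotation of Theorem~1.3 in \cite{KM21} and used as a black box. Your closing remark already anticipates this option.

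The self-contained argument you supply is essentially the standard one (and presumably close to what appears in \cite{KM21}): pull out the factor $e^{2ik\g}$ from the integral representation of $\bpsi_{\a}\in\cJ_{\a}$, evaluate at a zero, and bound the remaining oscillatory integral via an $L^1$-density argument plus a single integration by parts, exploiting that $|e^{-2iz_n t}|\le1$ for $t\ge0$ when $\Im z_n<0$. All steps are clean; the only thing the paper buys by citing is brevity, while your version makes the dependence of $C$ on $(\ve,v,\a)$ (through $h$ and the approximating $g$) completely explicit.
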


\section{Transformation to Dirac equation} \label{trans}

In this section we recall a correspondence between energy dependent
Schr\"odinger equations and Dirac equations on the half-line. It was
shown in \cite{HM20} that there exists a bijection between
potentials of energy dependent Schr\"odinger equations and Dirac
equations such that the corresponding solutions of these equations
are connected in simple way.

Recall that $y$ is a solution of the energy dependent Schr\"odinger equation
\[ \label{trans:eq:Schrodinger_equation}
    -y'' + q y + 2pky = k^2 y
\]
for some $(p,q) \in \cQ$, where $q = u' + u^2$, $u \in L^2(0,\g)$, if
$y(\cdot,k)$ and its quasi-derivative $y^{[1]}(\cdot,k) =
y'(\cdot,k) - u(\cdot)y(\cdot,k)$ are locally absolutely continuous
functions and the following identity holds true:
$$
    -\left(y^{[1]}\right)' - uy^{[1]} + 2kpy = k^2y.
$$
Recall that Dirac equation has the following form
\[ \label{trans:eq:Dirac_equation}
    \boldsymbol{w}' + Q\boldsymbol{w} = ik\s_3 \boldsymbol{w},\qq Q = \ma 0 & v \\ \ol v & 0\am,\qq \boldsymbol{w} = \ma w_1 \\ w_2 \am.
\]
Recall that the mapping $\cT : \cQ \to \cX$ was defined in (\ref{intro:eq:potential_mapping}) as
$$
    \cT: (p,q) \mapsto (-u + ip) e ^{-2i\vp},\qq q = u' + u^2,\qq \vp(x)
    = \int_x^{+\iy} p(t) dt,\qq x \in \R_+.
$$
We need the following results from \cite{HM20}.
\begin{lemma} \label{trans:lm:solutions}
    Let $v = \cT(p,q)$ for some $(p,q) \in \cQ$ and let $\vp(x) = \int_x^{+\iy} p(t) dt$.
    Then the following statements hold true:
    \begin{enumerate}
        \item If $y$ is a solution of (\ref{trans:eq:Schrodinger_equation}), then the function $y$ given by
        $$
            \boldsymbol{w}(x,k) = \ma w_1(x,k) \\ w_2(x,k) \am =
            \ma -ie^{-i\vp(x)}  & e^{-i\vp(x)} \\ ie^{i\vp(x)} & e^{i\vp(x)}\am
            \ma k^{-1} y^{[1]}(x,k) \\ y(x,k)\am
        $$
        is a solution of (\ref{trans:eq:Dirac_equation}) for any $(x,k) \in \R_+ \ts \C \sm \{0\}$.

        \item If $\boldsymbol{w} = \ma w_{1} \\ w_{2}\am$ is a solution of (\ref{trans:eq:Dirac_equation}), then
        \[ \label{trans:eq:Schrodinger_solution}
            \ma k^{-1} y^{[1]}(x,k) \\ y(x,k)\am =
            \ma ie^{i\vp(x)}  & -ie^{-i\vp(x)} \\ e^{i\vp(x)} & e^{-i\vp(x)}\am
            \ma w_1(x,k) \\ w_2(x,k) \am
        \]
        is a solution of (\ref{trans:eq:Schrodinger_equation}) for any
        $(x,k) \in \R_+ \ts \C \sm \{0\}$.

        \item Let $y$ and $\boldsymbol{w} = \ma w_{1} \\ w_{2}\am$ be solutions of (\ref{trans:eq:Schrodinger_equation})
        and (\ref{trans:eq:Dirac_equation})
        and let identity (\ref{trans:eq:Schrodinger_solution}) hold true.
        Then we have
        $$
            \begin{aligned}
                e^{-i\b}w_1(0,k) - e^{i\b}w_2(0,k) &= 0,\\
                y^{[1]}(0,k) \sin \a + k y(0,k) \cos \a &= 0
            \end{aligned}
        $$
        for some $\a, \b \in [0,\pi)$ and $k \in \C \sm \{0\}$ if and only if
        $$
            \vp(0) + \a + \b = \frac{\pi}{2} \mod \pi.
        $$
    \end{enumerate}
\end{lemma}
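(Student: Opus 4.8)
The plan is to verify all three items by direct computation; the result is essentially \cite{HM20}, and the only real care needed is the consistent handling of the quasi-derivative formalism, so that every manipulation below is valid almost everywhere.

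For \emph{(i)}, I would introduce $g(x,k):=k^{-1}y^{[1]}(x,k)$, so that $g=k^{-1}(y'-uy)$, and recall that the defining identity of a solution of \eqref{trans:eq:Schrodinger_equation} reads $(y^{[1]})'=-uy^{[1]}+2kpy-k^{2}y$ a.e., whence $g'=-ug+2py-ky$ and $y'=kg+uy$. Since $\varphi'=-p$ with $p\in L^{1}(\R_{+})$, the function $\varphi$ is absolutely continuous, and therefore so are $w_{1}=e^{-i\varphi}(-ig+y)$ and $w_{2}=e^{i\varphi}(ig+y)$. I would then differentiate $w_{1}$ and $w_{2}$, substitute the formulas for $g'$ and $y'$, and collect the coefficients of $g$ and of $y$ in each component, using the elementary identities $i(-u+ip)=-p-iu$, $-i(-u-ip)=-p+iu$, and $v=(-u+ip)e^{-2i\varphi}$; a short manipulation gives $w_{1}'=ikw_{1}-vw_{2}$ and $w_{2}'=-ikw_{2}-\ol v\, w_{1}$ a.e., i.e.\ $\boldsymbol{w}$ solves \eqref{trans:eq:Dirac_equation}. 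This coefficient matching is the step where arithmetic must be done carefully, though it is entirely routine.

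For \emph{(ii)}, note that the $2\times2$ matrix appearing in (i) has constant, non-zero determinant ($-2i$), hence is pointwise invertible; inverting the relation of (i) gives exactly \eqref{trans:eq:Schrodinger_solution} up to an overall constant factor, which is immaterial because \eqref{trans:eq:Schrodinger_equation} is linear and homogeneous. Thus, starting from a solution $\boldsymbol{w}$ of \eqref{trans:eq:Dirac_equation} and defining $y$ and $y^{[1]}$ by \eqref{trans:eq:Schrodinger_solution}, the computation of (i) read backwards shows that $y^{[1]}=y'-uy$, that $y$ and $y^{[1]}$ are locally absolutely continuous, and that $-(y^{[1]})'-uy^{[1]}+2kpy=k^{2}y$ a.e., i.e.\ $y$ solves \eqref{trans:eq:Schrodinger_equation}.

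For \emph{(iii)}, evaluate \eqref{trans:eq:Schrodinger_solution} at $x=0$ and substitute into the Schr\"odinger boundary functional; using $\cos\alpha\pm i\sin\alpha=e^{\pm i\alpha}$ one obtains, up to the nonzero factor $k/2$,
$$
    y^{[1]}(0,k)\sin\alpha+ky(0,k)\cos\alpha\ \propto\ e^{i(\varphi(0)+\alpha)}w_{1}(0,k)+e^{-i(\varphi(0)+\alpha)}w_{2}(0,k),
$$
whereas the Dirac boundary functional is $e^{-i\beta}w_{1}(0,k)-e^{i\beta}w_{2}(0,k)$. For a nontrivial solution the vector $(w_{1}(0,k),w_{2}(0,k))$ is nonzero (uniqueness for the first-order Dirac system), so these two linear functionals of $(w_{1}(0,k),w_{2}(0,k))$ vanish on the same line precisely when their coefficient pairs are proportional, i.e.\ $-e^{i(\varphi(0)+\alpha)}e^{i\beta}=e^{-i(\varphi(0)+\alpha)}e^{-i\beta}$, which rearranges to $e^{2i(\varphi(0)+\alpha+\beta)}=-1$, that is $\varphi(0)+\alpha+\beta=\frac{\pi}{2}\mod\pi$. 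I do not expect a serious obstacle anywhere; the points that genuinely require attention are keeping the notion of ``solution'' (hence the regularity of $w_{1},w_{2},y,y^{[1]}$) consistent through the quasi-derivative on both sides, and the harmless degenerate cases in (iii) where one of the boundary values vanishes.
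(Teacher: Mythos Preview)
Your proposal is correct. The paper itself does not give a proof of this lemma; it states the result as imported from \cite{HM20} (``We need the following results from \cite{HM20}''), so your direct verification is exactly the computation that the paper omits, and it matches what the cited reference does.

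Two small remarks. First, in (ii) you correctly observe that the matrix in (i) has nonzero determinant and that the matrix in \eqref{trans:eq:Schrodinger_solution} is $2$ times its inverse; it may be worth saying explicitly that when you ``read (i) backwards'' you are checking not only the differential identity but also that the function called $y^{[1]}$ in \eqref{trans:eq:Schrodinger_solution} really equals $y'-uy$ for the $y$ defined there --- this follows from the same linear-algebraic inversion, but should be stated. Second, in (iii) your proportionality argument is the right one; the nontriviality of $(w_1(0,k),w_2(0,k))$ you invoke is indeed guaranteed by uniqueness for the first-order system, and the degenerate case $\boldsymbol{w}\equiv 0$ is vacuous on both sides, so no gap arises. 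The factor you write as $k/2$ is actually just $k$ (there is no $1/2$ once you use \eqref{trans:eq:Schrodinger_solution} directly), but this does not affect the argument.
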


Now, we show that the mapping $\cT$ is a bijection.

\begin{lemma} \label{trans:lm:potential_mapping}
    The mapping $\cT: \cQ \to \cX$ given by (\ref{intro:eq:potential_mapping}) is a bijection.
\end{lemma}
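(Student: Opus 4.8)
The plan is to invert $\cT$ by hand, exploiting the fact that it is the composition of the inverse Miura map (acting on the second argument, $q=u'+u^{2}\mapsto u$, with $p$ left untouched) and the ``twist'' map $\Phi:(p,u)\mapsto(-u+ip)e^{-2i\vp}$, where $\vp(x)=\int_x^{+\iy}p(t)\,dt$. By \cite{KPST05} the inverse Miura map is a bijection between the Miura distributions $q=u'+u^{2}\in(H^1(\R_+))'$ and the functions $u\in L^2(\R_+)$, and it plainly respects the support condition $\max\supp(\abs{p}+\abs{u})=\g$. So the whole claim would follow once I show that
$$
    \Phi:\ \set*{(p,u)\in L^2(\R_+)\ts L^2(\R_+) \given \max\supp(\abs{p}+\abs{u})=\g}\ \to\ \cX
$$
is a bijection. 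Here $p$, $u$ and hence $\vp$ are real-valued---this is essential, as otherwise $(p,u)\mapsto -u+ip$ would not be injective---so $\abs{e^{-2i\vp}}\equiv1$, $\abs{v}^2=p^2+u^2$ a.e., and $\supp v=\supp(\abs{p}+\abs{u})$.

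For surjectivity, given $v\in\cX$, so that $\supp v\ss[0,\g]$, I would recover the phase $\vp$ as the solution of the scalar terminal-value problem
$$
    \vp'(x)=-\Im\bigl(v(x)e^{2i\vp(x)}\bigr)\quad\text{on }[0,\g],\qquad \vp(\g)=0,
$$
extended by $\vp\equiv0$ on $(\g,+\iy)$. Its right-hand side is measurable in $x$ and, for fixed $x$, Lipschitz in $\vp$ with constant $2\abs{v(x)}$, and $\abs{v}\in L^1(0,\g)$ because $v\in L^2(0,\g)$; hence the Carath\'{e}odory existence and uniqueness theorem supplies a unique absolutely continuous real solution $\vp\in H^1(0,\g)$. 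Setting $p:=-\vp'=\Im(ve^{2i\vp})$ and $u:=-\Re(ve^{2i\vp})$, both in $L^2(\R_+)$, one gets $-u+ip=ve^{2i\vp}$; since $\int_x^{+\iy}p=\vp(x)$ on $[0,\g]$ and vanishes for $x\ge\g$, the phase built into $\Phi$ from this $p$ is exactly $\vp$, so $\Phi(p,u)=v$, and $\supp(\abs{p}+\abs{u})=\supp v$ places $(p,u)$ in the domain. For injectivity, if $\Phi(p_1,u_1)=\Phi(p_2,u_2)=v$, then $\vp_j(x):=\int_x^{+\iy}p_j$ satisfies $ve^{2i\vp_j}=-u_j+ip_j$, so taking imaginary parts gives $\vp_j'=-p_j=-\Im(ve^{2i\vp_j})$ with $\vp_j(\g)=0$; uniqueness in the Carath\'{e}odory theorem forces $\vp_1=\vp_2$, whence $p_1=p_2$, then $u_1=-\Re(ve^{2i\vp_1})=u_2$, and finally $q_1=q_2$ via the Miura bijection. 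Composing the two bijections would conclude the proof.

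The only step requiring genuine care is the phase ODE displayed above: since its data is merely $L^1(0,\g)$, I would appeal to the Carath\'{e}odory (rather than the classical Picard--Lindel\"{o}f) theory and check that the solution stays real-valued---which is automatic, the right-hand side being real whenever $\vp$ is. The remaining ingredients are bookkeeping: that $u'+u^2$ indeed lies in $(H^1(\R_+))'$ for $u\in L^2(\R_+)$, matching the ``$\max\supp=\g$'' conventions across the Schr\"{o}dinger and Dirac sides, and quoting the inverse Miura bijection of \cite{KPST05} in the exact space in play.
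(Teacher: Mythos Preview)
Your proposal is correct and the key mechanism---recovering the phase $\vp$ from $v$ via the terminal-value ODE $\vp'=-\Im(ve^{2i\vp})$, $\vp(\g)=0$, and then reading off $p=-\vp'$ and $u=-\Re(ve^{2i\vp})$---is exactly the inversion recorded in the Remark following the lemma (equations (\ref{trans:eq:potentials_transform}) and (\ref{trans:eq:potential_phase})). The difference from the paper is one of packaging: the paper does not redo this ODE argument but instead quotes Theorem~1 of \cite{HM20}, which already establishes that $\cT_+:\cQ_+\to\cX_+$ is a bijection on the larger (non-compactly-supported) classes, and then only verifies that the support condition $\max\supp(\abs{p}+\abs{u})=\g$ transfers to $\max\supp v=\g$ and back, via the identity $\abs{v}^2=p^2+u^2$. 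Your route is more self-contained (you reprove the content of \cite{HM20} via Carath\'eodory); the paper's route is shorter but relies on an external reference. Both arrive at the same inverse, and your Carath\'eodory justification (Lipschitz constant $2\abs{v(x)}\in L^1(0,\g)$) is the right level of care for $v\in L^2$.
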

\begin{proof}
    We introduce the following sets:
    $$
        \begin{aligned}
            \cX_+ &= L^2(\R_+,\C) \cap L^1(\R_+,\C),\qq \cX_{+,real}
            = L^2(\R_+) \cap L^1(\R_+)\\
            \cQ_{+} &= \set*{(p,q) \in \cX_{+,real} \ts W^{-1,2}(\R_+) \given
            q = u' + u^2,\, u \in \cX_{+,real}}.
        \end{aligned}
    $$
    It was proved in Theorem 1 \cite{HM20}
    that the mapping $\cT_+:\cQ_{+} \to \cX_{+}$ given by (\ref{intro:eq:potential_mapping})
    is a bijection. Note that the following identities hold true:
    \[ \label{trans:eq:6}
        \begin{aligned}
            \cX &= \set*{v \in \cX_+ \given \sup \supp v = \g}\\
            \cQ &= \set*{(p,q) \in \cQ_+ \given \sup \supp (\abs{p} + \abs{u}) =
            \g,\, q = u' + u^2,\, u \in \cX_{+,real}},
        \end{aligned}
    \]
    and we have $\subs*{\cT_+}_{\cQ} = \cT$.
    So that in order to prove that the mapping $\cT$ is a bijection, we only need to prove that
    it is a surjection.

    Let $v = \cT(p,q)$ for some $(p,q) \in \cQ$, where $q = u' + u^2$, $u \in L^2(0,\g)$,
    and let $\vp(x) = \int_x^{+\iy} p(t) dt$.
    Using (\ref{intro:eq:potential_mapping}), we get
    \[ \label{trans:eq:4}
        \begin{cases}
            \Im v = p \cos(2\vp) + u \sin(2\vp),\\
            \Re v = -u \cos(2\vp) + p \sin(2\vp),
        \end{cases}
    \]
    which yields
    \[ \label{trans:eq:3}
        \sup \supp v \leq \sup \supp (\abs{p}+\abs{u}).
    \]
    Solving linear system (\ref{trans:eq:4}), we get
    $$
        \begin{cases}
            p = \Im v \cos(2\vp) + \Re v \sin(2\vp),\\
            u = -\Re v \cos(2\vp) + \Im v \sin(2\vp),
        \end{cases}
    $$
    which yields
    \[ \label{trans:eq:5}
        \sup \supp (\abs{p}+\abs{u}) \leq \sup \supp v.
    \]
    Combining (\ref{trans:eq:3}) and (\ref{trans:eq:5}), we get
    \[ \label{trans:eq:7}
        \sup \supp v = \sup \supp (\abs{p}+\abs{u}).
    \]
    By (\ref{trans:eq:6}), it follows from (\ref{trans:eq:7})
    that $v \in \cX$ if and only if $(p,q) \in \cQ$.
    Thus, the mapping $\cT: \cQ \to \cX$ is a bijection.
\end{proof}

\begin{remark}
    Note that for given $v \in \cX$ we can recover $(p,q) = \cT^{-1}(v)$ by the formulas:
    \[ \label{trans:eq:potentials_transform}
        \begin{aligned}
            q &= u' + u^2,\\
            u &= -\Re v \cos (2\vp) + \Im v \sin (2\vp),\\
            p &= \Im v \cos (2\vp) + \Re v \sin (2\vp),
        \end{aligned}
    \]
    where $\vp$ is a unique solution of the initial value problem
    \[ \label{trans:eq:potential_phase}
        \begin{cases}
            \vp' = -\Im v \cos (2\vp) - \Re v \sin (2\vp),\\
            \vp(\g) = 0.
        \end{cases}
    \]
\end{remark}

Recall that for any $k \in \C$ the Jost solution $y_{+}(\cdot,k)$
of equation (\ref{trans:eq:Schrodinger_equation}) satisfies
$$
    y_{+}(x,k) = e^{i k x},\qq x \geq \g.
$$
\begin{lemma} \label{trans:lm:jost_solution}
    Let $(p,q) \in \cQ$ and let $k \in \C \sm \{0\}$. Then there exists
    a Jost solution  $y_{+}(\cdot,k)$ of equation
    (\ref{trans:eq:Schrodinger_equation}) given by
    (\ref{trans:eq:Schrodinger_solution}),  where $\by$ is the
    vector-valued Jost solution of Dirac equation
    (\ref{trans:eq:Dirac_equation}) with $v = \cT(p,q)$.
\end{lemma}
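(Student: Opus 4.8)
The plan is to build $y_+(\cdot,k)$ directly out of the vector-valued Jost solution $\by(\cdot,k)$ of the Dirac equation \er{trans:eq:Dirac_equation} via the correspondence of Lemma \ref{trans:lm:solutions}(ii), and then to verify that the function thus obtained satisfies the normalization $y_+(x,k)=e^{ikx}$ for $x\ge\g$.

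First I would collect the relevant support properties. By Lemma \ref{trans:lm:potential_mapping} the potential $v=\cT(p,q)$ belongs to $\cX$, so $\sup\supp v=\g$; moreover $\vp(x)=\int_x^{+\iy}p(t)\,dt$ vanishes for $x\ge\g$, and since $p\in L^1(\R_+)$ the function $\vp$ is absolutely continuous on $\R_+$ with $\vp'=-p$. Next, recall from standard Dirac scattering theory (see e.g. \cite{KM21}) that for every $k\in\C$ the Dirac equation \er{trans:eq:Dirac_equation} with the compactly supported potential $v$ admits a unique vector-valued Jost solution $\by(\cdot,k)$, with locally absolutely continuous components, satisfying $\by(x,k)=\ma e^{ikx}\\ 0\am$ for $x\ge\g$.

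Then I would define $y(\cdot,k)$ together with its quasi-derivative $y^{[1]}(\cdot,k)$ by formula \er{trans:eq:Schrodinger_solution}. Since $e^{\pm i\vp}$ is bounded and absolutely continuous and the entries of $\by(\cdot,k)$ are locally absolutely continuous, so are $y(\cdot,k)$ and $y^{[1]}(\cdot,k)$; hence by Lemma \ref{trans:lm:solutions}(ii) the function $y$ is a solution of the energy-dependent Schr\"odinger equation \er{trans:eq:Schrodinger_equation} on $\R_+\ts(\C\sm\{0\})$ in the quasi-derivative sense. It remains to check the Jost normalization. For $x\ge\g$ we have $\vp(x)=0$, so \er{trans:eq:Schrodinger_solution} reduces to
$$
\ma k^{-1}y^{[1]}(x,k)\\ y(x,k)\am
=\ma i & -i\\ 1 & 1\am\ma e^{ikx}\\ 0\am
=\ma i\,e^{ikx}\\ e^{ikx}\am ,
$$
and reading off the second component yields $y(x,k)=e^{ikx}$ for $x\ge\g$, which is exactly the defining property of the Jost solution. (The first component gives $y^{[1]}(x,k)=ik\,e^{ikx}$, consistent with $u\equiv 0$ on $[\g,\iy)$.) Setting $y_+(\cdot,k):=y(\cdot,k)$ finishes the proof, and uniqueness of $y_+(\cdot,k)$, though not asked for, would follow the same way from uniqueness of $\by(\cdot,k)$.

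There is no real obstacle here: the substantive work has already been done in Lemma \ref{trans:lm:solutions} and Lemma \ref{trans:lm:potential_mapping}. The only two points that need attention are (a) making sure $v$ and $\vp$ are supported in $[0,\g]$, so that $\by(x,k)=\ma e^{ikx}\\0\am$ holds exactly for $x\ge\g$ and the transformation matrix in \er{trans:eq:Schrodinger_solution} becomes the constant matrix displayed above on $[\g,\iy)$, and (b) confirming that the transformed pair $\big(k^{-1}y^{[1]},y\big)$ inherits the local absolute continuity needed for the phrase ``$y$ is a solution'' to be meaningful — both of which are immediate from the preparatory lemmas.
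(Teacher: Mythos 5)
Your proof is correct and follows essentially the same route as the paper: define $y_+$ via the transformation \er{trans:eq:Schrodinger_solution} applied to the Dirac Jost solution with $v=\cT(p,q)$, invoke Lemma \ref{trans:lm:solutions}(ii) to see it solves \er{trans:eq:Schrodinger_equation}, and use $\vp(x)=0$, $\by(x,k)=(e^{ikx},0)^{T}$ for $x\ge\g$ to read off the normalization $y_+(x,k)=e^{ikx}$. The extra remarks on absolute continuity and the existence of $\by$ are fine but not part of the paper's (shorter) argument.
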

\begin{proof}
    Let $(p,q) \in \cQ$ and let $k \in \C \sm \{0\}$. Let $\by = \ma
    y_{1}\\ y_{2}\am$ be the Jost solution of Dirac equation
    (\ref{trans:eq:Dirac_equation}) with $v = \cT(p,q)$ and let
    $y_{+}(\cdot,k)$ be given by (\ref{trans:eq:Schrodinger_solution}).
    Due to Lemma \ref{trans:lm:solutions} the function $y_{+}(\cdot,k)$
    is a solution of equation (\ref{trans:eq:Schrodinger_equation}) and
    it follows from (\ref{trans:eq:Schrodinger_solution}) that
    \[ \label{trans:eq:8}
        y_{+}(x,k) = e^{i\vp(x)}y_{1}(x,k) + e^{-i\vp(x)} y_{2}(x,k)\qq
        (x,k) \in \R_+ \ts \C \sm \{0\}.
    \]
    Since $\by$ is the Jost solution and $\sup \supp p \leq \g$, we have
    $$
        y_{1}(x,k) = e^{ikx},\qq y_{2}(x,k) = 0,\qq \vp(x) = 0,\qq x \geq \g.
    $$
    Substituting these identities in (\ref{trans:eq:8}), we get
    $$
        y_{+}(x,k) = e^{ikx},\qq x \geq \g.
    $$
    So that $y_+$ is the Jost solution of equation (\ref{trans:eq:Schrodinger_equation}).
\end{proof}

Recall for any $\a \in [0,\pi)$ the Jost function $\psi_{\a}$  and the
scattering matrix $S_{\a}$ of equation
(\ref{trans:eq:Schrodinger_equation}) were defined by
\[ \label{trans:eq:1}
    \begin{aligned}
        \psi_{\a}(k) &= y_+^{[1]}(0,k) \sin \a + k y_+(0,k) \cos \a,\qq k \in \C,\\
        S_{\a}(k) &= \frac{\ol{\psi_{\a} (k)}}{\psi_{\a} (k)},\qq k \in \R.
    \end{aligned}
\]
Recall also that for any $\a \in [0,\pi)$ we have introduced  the
mapping $\r^{\a}: \cQ \to \mathbf{s}$ such that $\r^{\a}(p,q) = (r_n)_{n \geq 1}$
is a sequence of resonances of equation (\ref{intro:eq:equation}) with boundary
condition (\ref{intro:eq:boundary_condition}), i.e. it is a sequence
of zeros of $\psi_{\a}(\cdot,p,q)$ in $\C_-$ counted with
multiplicities and arranged such that
$$
    \begin{cases}
        \abs{r_1} \leq \abs{r_2} \leq \ldots,\\
        \Re r_n \leq \Re r_{n+1}\qq \text{if $\abs{r_n} = \abs{r_{n+1}}$}.
    \end{cases}
$$

Using the connection between Jost solutions of Dirac and  energy
dependent Schr\"odinger equations, we get the following
representation.
\begin{lemma} \label{trans:lm:jost_functions}
    Let $v = \cT(p,q)$ for some $(p,q) \in \cQ$ and let $\vp(x) = \int_x^{+\iy} p(s) ds$.
    Then the following identities hold true
    $$
        \begin{aligned}
            \psi_{\a}(\cdot,p,q) = i e^{i\b} k \bpsi_{\d}(\cdot,v),\qq
            S_{\a}(\cdot,p,q) = - \bS_{\d}(\cdot,v),\qq
            \r^{\a}(p,q) = \br^{\d}(v),
        \end{aligned}
    $$
    where $\a,\d \in [0,\pi)$ and $\b \in \{0,\pi\}$ are such that
    \[ \label{trans:eq:boundary_parameters}
        \vp(0) + \a + \d= \frac{\pi}{2} \mod \pi,\qq
        \vp(0) + \a + \d= \frac{\pi}{2} + \b \mod 2\pi.
    \]
\end{lemma}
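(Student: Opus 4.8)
The plan is to compute $\psi_\a$ directly from the explicit Jost solution supplied by Lemma~\ref{trans:lm:jost_solution}, to read off the first identity, and then to obtain the other two as easy consequences.

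By Lemma~\ref{trans:lm:jost_solution}, the Jost solution $y_+(\cdot,k)$ of (\ref{trans:eq:Schrodinger_equation}) is given by (\ref{trans:eq:Schrodinger_solution}) with $\boldsymbol{w}=\by$, the vector-valued Jost solution of (\ref{trans:eq:Dirac_equation}) for $v=\cT(p,q)$. Evaluating (\ref{trans:eq:Schrodinger_solution}) at $x=0$ I get, for $k\ne0$, that $k^{-1}y_+^{[1]}(0,k)=ie^{i\vp(0)}y_1(0,k)-ie^{-i\vp(0)}y_2(0,k)$ and $y_+(0,k)=e^{i\vp(0)}y_1(0,k)+e^{-i\vp(0)}y_2(0,k)$. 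Substituting these into $\psi_\a(k)=k\bigl(k^{-1}y_+^{[1]}(0,k)\sin\a+y_+(0,k)\cos\a\bigr)$ and using $\cos\a\pm i\sin\a=e^{\pm i\a}$, the coefficients of $y_1(0,k)$ and $y_2(0,k)$ collapse to exponentials, giving
$$
\psi_\a(k)=k\bigl(e^{i(\vp(0)+\a)}y_1(0,k)+e^{-i(\vp(0)+\a)}y_2(0,k)\bigr),\qq k\ne0.
$$
Now fix $\d\in[0,\pi)$ by $\vp(0)+\a+\d=\tfrac{\pi}{2}\bmod\pi$; then $e^{2i(\vp(0)+\a+\d)}=-1$, so $e^{-i(\vp(0)+\a)}=-e^{i(\vp(0)+\a+2\d)}$ and the bracket equals $e^{i(\vp(0)+\a+\d)}\bigl(e^{-i\d}y_1(0,k)-e^{i\d}y_2(0,k)\bigr)=e^{i(\vp(0)+\a+\d)}\bpsi_\d(k)$. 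Writing $\vp(0)+\a+\d=\tfrac{\pi}{2}+\b\bmod2\pi$ with $\b\in\{0,\pi\}$ gives $e^{i(\vp(0)+\a+\d)}=ie^{i\b}$, hence $\psi_\a(k)=ie^{i\b}k\,\bpsi_\d(k)$ for $k\ne0$. Since the left side is entire (Lemma~\ref{trans:lm:jost_solution}) and so is $k\mapsto k\,\bpsi_\d(k)$ ($\bpsi_\d$ being entire, cf.\ Corollary~\ref{dirac:cor:cartwright}), the identity extends to all $k\in\C$.

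For the scattering matrix I substitute $\psi_\a(k)=ie^{i\b}k\,\bpsi_\d(k)$ into $S_\a=\ol{\psi_\a}/\psi_\a$: for $k\in\R$ one has $\ol{\psi_\a(k)}=-ie^{-i\b}k\,\ol{\bpsi_\d(k)}$, whence $S_\a(k)=-e^{-2i\b}\bS_\d(k)$, and $e^{-2i\b}=1$ because $\b\in\{0,\pi\}$, so $S_\a(\cdot,p,q)=-\bS_\d(\cdot,v)$ on $\R$ and therefore on all of $\C$ by uniqueness of the meromorphic continuation. Finally, $ie^{i\b}k$ is nonvanishing on $\C_-$, so $\psi_\a$ and $\bpsi_\d$ have exactly the same zeros in $\C_-$, counted with multiplicity; since the arrangement (\ref{intro:eq:arrangement}) entering the definitions of $\r^\a$ and $\br^\d$ is identical, this yields $\r^\a(p,q)=\br^\d(v)$.

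I do not anticipate a genuine obstacle: everything reduces to a linear-algebraic computation together with the definitions of the two Jost functions. The one point needing care is the phase bookkeeping — verifying that the congruence $\vp(0)+\a+\d\equiv\tfrac{\pi}{2}\pmod{\pi}$ is precisely what aligns the $y_1$- and $y_2$-coefficients with those built into $\bpsi_\d$, and that the constraint $\a,\d\in[0,\pi)$ forces $\b\in\{0,\pi\}$, so that the proportionality factor in the scattering-matrix identity is exactly $-1$ rather than an arbitrary unimodular constant.
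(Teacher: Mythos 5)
Your proposal is correct and follows essentially the same route as the paper: evaluate the transformation (\ref{trans:eq:Schrodinger_solution}) at $x=0$, substitute into the definition of $\psi_{\a}$ to obtain $\psi_{\a}(k)=k\bigl(e^{i(\vp(0)+\a)}y_1(0,k)+e^{-i(\vp(0)+\a)}y_2(0,k)\bigr)$, and then use the phase constraint (\ref{trans:eq:boundary_parameters}) together with $\b\in\{0,\pi\}$ to identify this with $ie^{i\b}k\,\bpsi_{\d}(k)$, from which the scattering-matrix and resonance identities follow exactly as in the paper. Your bookkeeping via $e^{2i(\vp(0)+\a+\d)}=-1$ is just a cosmetic variant of the paper's substitution $\vp(0)+\a=\tfrac{\pi}{2}-\d+\b \pmod{2\pi}$, and your extra remarks (extending the identity from $k\neq0$ to all of $\C$ by analyticity, preservation of multiplicities) only make explicit points the paper leaves implicit.
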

\begin{proof}
    Let $v = \cT(p,q)$ for some $(p,q) \in \cQ$ and let $\vp(x) = \int_x^{+\iy} p(s) ds$.
    Let $y_+$ be the Jost solution of equation (\ref{trans:eq:Schrodinger_equation}) and
    let $\by = \ma y_1 \\ y_2 \am$ be the Jost solution of equation (\ref{trans:eq:Dirac_equation}).

    Due to Lemma \ref{trans:lm:jost_solution}, we have
    \[ \label{trans:eq:10}
        \begin{cases}
            y_+(0,k) = e^{i\vp(0)} y_1(0,k) + e^{-i\vp(0)} y_2(0,k)\\
            k^{-1} y_+^{[1]}(0,k) = ie^{i\vp(0)} y_1(0,k) - ie^{-i\vp(0)} y_2(0,k).
        \end{cases}
    \]
    Substituting (\ref{trans:eq:10}) in (\ref{trans:eq:1}), we get
    \[ \label{trans:eq:eq2}
        \begin{aligned}
            \psi_{\a}(k,p,q) &= k(\sin \a (ie^{i\vp(0)} y_1(0,k) - ie^{-i\vp(0)} y_2(0,k))\\
            &+ \cos \a (e^{i\vp(0)} y_1(0,k) + e^{-i\vp(0)} y_2(0,k)))\\
            &= k(e^{i\vp(0)} y_1(0,k) (\cos \a + i\sin \a) + e^{-i\vp(0)} y_2(0,k)(\cos \a - i \sin \a))\\
            &= k(y_1(0,k) e^{i(\vp(0) + \a)} + y_2(0,k) e^{-i(\vp(0) + \a)}),\qq k \in \C.
        \end{aligned}
    \]
    Recall that
    $$
        \bpsi_{\d}(k,v) = y_{1}(0,k)e^{-i\d} - y_{2}(0,k)e^{i\d},\qq k \in \C.
    $$
    Due to (\ref{trans:eq:boundary_parameters}), we have
    $$
        \vp(0) + \a = \frac{\pi}{2} - \d + \b (\mod 2\pi).
    $$
    Substituting this formula in (\ref{trans:eq:eq2})
     and using $e^{i\b} = e^{-i\b}$, $e^{\frac{i\pi}{2}} = -e^{-\frac{i\pi}{2}}$, we get
    \[ \label{trans:eq:11}
        \begin{aligned}
            \psi_{\a}(k,p,q) &= k(y_1(0,k) e^{i(\frac{\pi}{2} - \d + \b)} + y_2(0,k) e^{-(\frac{\pi}{2} - \d + \b)}) \\
            &= k i e^{i\b}(y_1(0,k) e^{-i\d} - y_2(0,k) e^{i \d})
            = k i e^{i\b} \bpsi_{\d}(k,v),\qq k \in \C.
        \end{aligned}
    \]
    Hence the functions $\psi_{\a}(\cdot,q,p)$ and $\bpsi_{\d}(\cdot)$ have
    the same zeros in $\C_-$, which yields that $\r^{\a}(p,q) =
    \br^{\d}(v)$. Using (\ref{trans:eq:11}), we get
    $$
        S_{\a}(k,p,q) = \frac{\ol{\psi_{\a} (k,p,q)}}{\psi_{\a} (k,p,q)}
        = \frac{-k i e^{-i\b} \ol{\bpsi_{\d}(k,v)}}{k i e^{i\b} \bpsi_{\d}(k,v)}
        = -\frac{\ol{\bpsi_{\d}(k,v)}}{\bpsi_{\d}(k,v)} = -\bS_{\d}(k,v),\qq k \in \R.
    $$
\end{proof}
\begin{remark}
    Note that for any $\a \in [0,\pi)$ there exists a unique $\d \in
    [0,\pi)$ such that (\ref{trans:eq:boundary_parameters})  holds true
    and vice versa. And for any $\a,\d \in [0,\pi)$ there exists a
    unique $\b \in \{0,\pi\}$ such that
    (\ref{trans:eq:boundary_parameters}) holds true.
\end{remark}
At last we need a result about inverse scattering problem for
equation  (\ref{trans:eq:Schrodinger_equation}) with $(p,q) \in
\cQ_+$ (see Theorem 1 in \cite{HM20}). We introduce the following
class of the scattering matrices
\begin{definition}
    $\cS_+$ is the set of all functions $S: \R \to \C$ such that
    \begin{enumerate}
        \item $\abs{S(k)} = 1$ for all $k \in \R$;
        \item $\ind S = 0$;
        \item $S$ admit the following representation
        $$
            S(k) = e^{2i\a} + \int_{\R} F(s) e^{2iks} ds,\qq k \in \R,
        $$
        for some $\a \in [0,\pi)$ and $F \in L^2(\R,\C) \cap L^1(\R,\C)$.
    \end{enumerate}
\end{definition}

\begin{theorem} \label{trans:thm:scattering_inverse_problem}
The mapping $(p,q,\a) \mapsto S_{\a}(\cdot,p,q)$ from $\cQ_{+} \ts
[0,\pi)$ into $\cS_{+}$ is a bijection.
    Moreover, we have
    $$
        S_{\a}(k,p,q) = e^{-2i(\vp(0) + \a)} + \int_{\R} F(s) e^{2iks} ds,
    $$
    for some $F \in L^2(\R) \cap L^1(\R)$.
\end{theorem}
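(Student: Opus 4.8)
The plan is to derive the statement from the inverse scattering theory for self-adjoint Dirac operators on the half-line with $L^{1}\cap L^{2}$ potentials, combined with the bijectivity of the transformation $\cT$. The first point is that the correspondence built in this section survives on the larger classes: $\cT_{+}\colon\cQ_{+}\to\cX_{+}$ is a bijection (already used above, from \cite{HM20}), and the algebraic identities of Lemma \ref{trans:lm:jost_functions} --- namely $\psi_{\a}(\cdot,p,q)=ie^{i\b}k\,\bpsi_{\d}(\cdot,v)$, hence $S_{\a}(\cdot,p,q)=-\bS_{\d}(\cdot,v)$, with $v=\cT_{+}(p,q)$ and $\vp(0)+\a+\d\equiv\tfrac{\pi}{2}\pmod\pi$ --- use only the form of the Jost solutions and the link between the two equations, not compactness of supports, so they extend to all $(p,q)\in\cQ_{+}$.

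Next I would invoke the classical inverse scattering theorem for the Dirac (AKNS) system with potential in $\cX_{+}$: the map $(v,\d)\mapsto\bS_{\d}(\cdot,v)$ is a bijection of $\cX_{+}\times[0,\pi)$ onto the set of unimodular functions $S$ on $\R$ with $\ind S=0$ that admit a representation $S(k)=e^{2i\d}+\int_{\R}F(s)e^{2iks}\,ds$ with $F\in L^{1}(\R)\cap L^{2}(\R)$; surjectivity is the unique solvability of the Gelfand--Levitan--Marchenko equation together with the fact that its solution yields a potential in $\cX_{+}$ realizing the prescribed data (for compactly supported potentials this is Theorem \ref{dirac:thm:inverse_jost}, and the $\cX_{+}$ version is the content underlying \cite{HM20}). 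Since $\bS_{\d}(k)\to e^{2i\d}$ as $k\to\pm\infty$ by the Riemann--Lebesgue lemma, $\bS_{\d}$ determines $\d\in[0,\pi)$, and then $v$.

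Granting these, the rest is bookkeeping. For injectivity, $S_{\a_{1}}(\cdot,p_{1},q_{1})=S_{\a_{2}}(\cdot,p_{2},q_{2})$ gives $\bS_{\d_{1}}(\cdot,v_{1})=\bS_{\d_{2}}(\cdot,v_{2})$, hence $\d_{1}=\d_{2}$, $v_{1}=v_{2}$, then $(p_{1},q_{1})=(p_{2},q_{2})$ by bijectivity of $\cT_{+}$, so $\vp_{1}(0)=\vp_{2}(0)$ and the constraint forces $\a_{1}=\a_{2}$. For surjectivity, given $S\in\cS_{+}$ the function $-S$ is unimodular, has zero index and the correct Fourier form, so $-S=\bS_{\d}(\cdot,v)$ for a unique $(v,\d)$; put $(p,q)=\cT_{+}^{-1}(v)$ and let $\a\in[0,\pi)$ be the unique parameter with $\vp(0)+\a+\d\equiv\tfrac{\pi}{2}\pmod\pi$ (it exists because, for $(p,q)$ fixed, $\a\mapsto\d$ is a bijection of $[0,\pi)$; cf.\ the remark after Lemma \ref{trans:lm:jost_functions}), so that $S_{\a}(\cdot,p,q)=-\bS_{\d}(\cdot,v)=S$. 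Finally the constraint yields $2\d\equiv\pi-2(\vp(0)+\a)\pmod{2\pi}$, i.e.\ $e^{2i\d}=-e^{-2i(\vp(0)+\a)}$, so $S_{\a}=-\bS_{\d}$ reads $S_{\a}(k,p,q)=e^{-2i(\vp(0)+\a)}+\int_{\R}(-F)(s)e^{2iks}\,ds$, which is the ``moreover'' assertion.

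The main obstacle is one of scope rather than of idea: every tool displayed in the excerpt (Theorem \ref{dirac:thm:inverse_jost}, Lemma \ref{trans:lm:jost_functions}) is stated for compactly supported potentials, whereas the claim lives on $\cQ_{+}$ and $\cX_{+}$. One must therefore either take the $L^{1}\cap L^{2}$ Dirac inverse scattering theory as a black box --- its hard core being the unique solvability of the Marchenko equation and the full characterization (unimodularity, vanishing winding number, $L^{1}\cap L^{2}$ Fourier coefficient) of admissible scattering data --- or simply cite \cite{HM20}, since the statement is exactly theirs. A minor technical point to handle en route is the value at $k=0$: as $\psi_{\a}=ie^{i\b}k\,\bpsi_{\d}$ vanishes there, the quotient $\ol{\psi_{\a}}/\psi_{\a}$ must be read as its continuous extension $-\bS_{\d}$, which is unambiguous since $\bpsi_{\d}(0)\neq0$.
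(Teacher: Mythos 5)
The paper gives no proof of this theorem at all: it is imported verbatim as Theorem 1 of \cite{HM20}, and your proposal ends up in the same place, since you explicitly defer the hard core (inverse scattering for Dirac/AKNS systems with $L^{1}\cap L^{2}$ potentials, i.e.\ solvability of the Marchenko equation and the characterization of the scattering data) to that reference or to a black box, while your surrounding bookkeeping via $\cT_{+}$ and the relation $S_{\a}=-\bS_{\d}$ matches the strategy actually used in \cite{HM20}. The only caveat worth flagging is that for non-compactly supported $(p,q)\in\cQ_{+}$ the Jost solutions satisfy the normalization only asymptotically as $x\to+\infty$ rather than identically for $x\ge\g$, so the identities of Lemma \ref{trans:lm:jost_functions} must be re-derived in that asymptotic form rather than simply declared to "extend"; this is routine but is precisely the kind of detail the citation to \cite{HM20} absorbs.
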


\section{Proof of the main theorems} \label{proof}

\begin{proof}[Proof of Theorem \ref{intro:thm:scattering_inverse_problem}]
    It follows from Theorem \ref{trans:thm:scattering_inverse_problem} that the mapping
    $(p,q,\a) \mapsto S_{\a}(\cdot,p,q)$ is a bijection between $\cQ_{+} \ts [0,\pi)$ and $\cS_{+}$.
    Now we only need to show that its restriction on $\cQ \ts [0,\pi)$ and $\cS$ is a surjection.
    Let $(p,q,\a) \in \cQ \ts [0,\pi)$ and let $v = \cT(p,q)$.
    Then it follows from Lemma \ref{trans:lm:jost_functions} that $S_{\a}(\cdot,p,q) = -\bS_{\d}(\cdot,v)$,
    where $\vp(0) + \a + \d = \frac{\pi}{2} \mod \pi$ and $\vp(x) = \int_x^{+\iy} p(s) ds$.
    Due to Lemma \ref{trans:lm:potential_mapping} we have $v \in \cX$ and then
    it follows from Theorem \ref{dirac:thm:inverse_jost} that $\bS_{\d}(\cdot,v) \in \cS_{\d}$.
    So that $S_{\a}(\cdot,p,q) \in \cS$. Moreover, it has the following representation
    $$
        S_{\a}(\cdot,p,q) = -\left(e^{2i\d} + \int_{-\g}^{+\iy} F(s) e^{2iks} ds\right).
    $$
    Substituting $\d = (\frac{\pi}{2} - \a - \vp(0)) \mod \pi$, we get
    $$
        S_{\a}(\cdot,p,q) = e^{-2i(\a + \vp(0))} - \int_{-\g}^{+\iy} F(s) e^{2iks} ds,
    $$
    which yields $S_{\a}(\cdot,p,q) \in \cS_{\b}$, where $\b = -(\a + \vp(0)) (\mod \pi)$.

    Now, let $S \in \cS$ and let $W = -S$. Then $W \in \cS_{\d}$ for
    some $\d \in [0,\pi)$. By Theorem \ref{dirac:thm:inverse_jost} there
    exists a unique $v \in \cX$ such that $\bS_{\d}(\cdot,v) = W$. Due to
    Lemma \ref{trans:lm:potential_mapping} there exists a unique $(p,q)
    = \cT^{-1}(v) \in \cQ$. And it follows from Lemma
    \ref{trans:lm:jost_functions} that
        $$
            S_{\a}(\cdot,p,q) = -\bS_{\d}(\cdot,v) = -W = S,
        $$
    where $\a = (\frac{\pi}{2} - \d - \vp(0)) \mod \pi$ and $\vp(x) =
    \int_x^{+\iy} p(s) ds$. So that  the mapping $(p,q,\a) \mapsto
    S_{\a}(\cdot,p,q)$ from $\cQ \ts [0,\pi)$ into $\cS$ is a bijection.
\end{proof}

\begin{proof}[Proof of Theorem \ref{intro:thm:resonances_inverse_problem}]
    i) Let $\a \in [0,\pi)$ and let $\z \in \cR$. We prove that there exists a unique
        $(p,q) \in \cQ$ such that $\r^{\a}(p,q) = \z$.

    By Theorem \ref{dirac:thm:inverse_resonances} there exists a unique
    $v_o \in \cX$ such that $\z = \br^{0}(v_o)$. Let $(p_o,q_o) =
    \cT^{-1}(v_o)$ and let $\a_o = \frac{\pi}{2} - \int_0^{\g} p(s)ds \mod \pi \in [0,\pi)$.
    By Lemma \ref{trans:lm:jost_functions}, we have $\r^{\a_o}(p_o,q_o) =
    \br^{0}(v_o) = \z$. So that $\r^{\a}(p,q) = \z$ for some $(p,q) \in
    \cQ$ if and only if $(p,q,\a) \in \Iso(p_o,q_o,\a_o)$.
    Thus, in order to prove the statement of the theorem, we only need to prove that
    there exists a unique $(p,q) \in \cQ$ such that $(p,q,\a) \in \Iso(p_o,q_o,\a_o)$.

    Due to Theorem \ref{intro:thm:isoresonance_potentials}, we have
    $$
        \Iso(p_o,q_o,\a_o) = \set*{(p_{\d},q_{\d},\a_{\d}) \given \d \in [0,\pi)},
    $$
    where $(p_{\d},q_{\d},\a_{\d})$ is given by (\ref{intro:eq:isoresonances_potentials}).
    In particular, we have
    $$
        \a_{\d} = \a_o + \vt_o(0) - \vt_{\d}(0) \mod \pi,
    $$
    where $\vt_{\d}$ is a unique solution of the initial value problem
    $$
        \begin{cases}
            \vt_{\d}' = - \Im v_o \cos 2 \vt_{\d} - \Re v_o \sin 2 \vt_{\d},\\
            \vt_{\d}(\g) = \d.
        \end{cases}
    $$
    Moreover, it follows from Theorem
    \ref{intro:thm:isoresonance_potentials} that $\vt_{\d}(0)$ is a
    strictly increasing function of $\d$ and $\vt_{\pi}(0) = \vt_o(0) +
    \pi$. Thus, for any $\a \in [0,\pi)$ there exists a unique $\d \in
    [0,\pi)$ such that $\a_{\d} = \a$. So that there exists a unique
    $(p_{\d},q_{\d})$ such that $(p_{\d},q_{\d},\a) \in
    \Iso(p_o,q_o,\a_o)$.

    ii) Let $(\z,\b) \in \cR \ts [0,\pi)$. Let $\d = \b - \frac{\pi}{2}
    \mod \pi$. By Theorem \ref{dirac:thm:inverse_resonances} there
    exists a unique $v \in \cX$ such that $\br^{\d}(v) = \z$. Due to
    Lemma \ref{trans:lm:potential_mapping} there exists a unique $(p,q)
    = \cT^{-1}(v) \in \cQ$. 
    Moreover, let $\a = \frac{\pi}{2} - \vp(0) - \d \mod \pi \in [0,\pi)$, where $\vp(x) = \int_x^{\g} p(s) ds$. 
    Then, by Lemma \ref{trans:lm:jost_functions}, we have
    $$
        S_{\a}(\cdot,p,q) = - \bS_{\d}(\cdot,v).
    $$
    Since $\bS_{\d}(\cdot,v) \in \cS_{\d}$ it follows that
    $$
        S_{\a}(\cdot,p,q) = -\bS_{\d}(\cdot,v) \in \cS_{\d + \frac{\pi}{2}} = \cS_{\b}.
    $$
\end{proof}

\begin{proof}[Proof of Theorem \ref{intro:thm:resonances_forbidden_domain}]
    Let $r_o \in \C_-$ be a resonance of equation (\ref{intro:eq:equation}) with
    the boundary condition (\ref{intro:eq:boundary_condition}) for some $(p,q,\a) \in \cQ \ts [0,\pi)$
    and let $\ve > 0$. Let $\r^{\a}(p,q) = (z_n)_{n \geq 1}$. Then $r_o = z_k$ for some $k \geq 1$.
    Let $v = \cT(p,q)$. Then, by Lemma
    \ref{trans:lm:jost_functions}, there exists a unique $\d \in
    [0,\pi)$ such that $\br^{\d}(v) = \r^{\a}(p,q) = (z_n)_{n \geq 1}$.
    Due to Theorem \ref{dirac:thm:forbidden_domain}, there exists a
    constant $C = C(v,\d,\ve) = C(p,q,\a,\ve)$ such that
    (\ref{intro:eq:forbidden_domain}) holds true for each $n \geq 1$ and then it holds true for $r_o$.
    Moreover, it follows from Theorem \ref{dirac:thm:forbidden_domain} that
    there exist a finitely many resonances in the strip given by
    (\ref{intro:eq:forbidden_domain_strip}).
\end{proof}

\begin{proof}[Proof of Theorem \ref{intro:thm:resonances_inverse_problem_schrodinger_operator}]
    Due to Theorem \ref{intro:thm:resonances_inverse_problem} we only need to prove that the
    mapping $\r^{o} : \cQ_o \to \cR_o$ is a surjection.

    Let $(p,q) \in \cQ$ and let $v = \cT(p,q)$. At first, we show that $p = 0$ if and only if $v \in \cX_{real}$.
    Let $p = 0$. Then we have $\vp = 0$ and it follows from (\ref{trans:eq:4}) that $\Im v = 0$.
    Let $v \in \cX_{real}$. Due to (\ref{trans:eq:potential_phase}) we have $\vp = 0$ and hence $p = 0$.

    Let $q \in \cQ_o$ and let $v = \cT(0,q)$. Since $v \in \cX_{real}$ it follows from Theorem
    \ref{dirac:thm:inverse_resonances_real_valued_potentials} that $\br^{\frac{\pi}{2}}(v) \in \cR_o$.
    Due to $p = 0$, we have $\vp = 0$. Using Lemma \ref{trans:lm:jost_functions}, we get
    $$
        \r^{o}(0,q) = \br^{\frac{\pi}{2}}(v) \in \cR_o.
    $$

    Let $\z \in \cR_o$. By Theorem \ref{dirac:thm:inverse_resonances_real_valued_potentials}
    there exists a unique $v \in \cX_{real}$ such that $\br^{\frac{\pi}{2}}(v) = \z$.
    Since $v \in \cX_{real}$ we have $\cT(0,q) = v$ for some $q \in \cQ_o$.
    Due to $p = 0$, we have $\vp = 0$. Using Lemma \ref{trans:lm:jost_functions}, we get
    $$
        \r^{o}(0,q) = \br^{\frac{\pi}{2}}(v) = \z.
    $$
\end{proof}

\begin{proof}[Proof of Theorem \ref{intro:thm:isoresonance_potentials}]
    Let $(p_o,q_o,\a_o) \in \cQ \ts [0,\pi)$ and let $v_o =
    \cT(p_o,q_o)$. Due to Lemma \ref{trans:lm:jost_functions} there
    exists a unique $\d_o \in [0,\pi)$ such that $\r^{\a_o}(p_o,q_o) =
    \br^{\d_o}(v_o)$ and
    \[ \label{proof:eq:9}
        \vp_o(0) + \a_o + \d_o = \frac{\pi}{2} \mod \pi,
    \]
    where $\vp_o(x) = \int_x^{\g} p_o(s) ds$.

    Let $(p,q,\a) \in \Iso(p_o,q_o,\a_o)$ and let $v = \cT(p,q)$. Due to
    Lemma \ref{trans:lm:jost_functions} there exists a unique $\b \in
    [0,\pi)$ such that $\r^{\a}(p,q) = \br^{\b}(v)$. Using $\r^{\a}(p,q)
    = \r^{\a_o}(p_o,q_o)$ and $\r^{\a_o}(p_o,q_o) = \br^{\d_o}(v_o)$, we
    get
    $$
        \br^{\b}(v) = \r^{\a}(p,q) = \r^{\a_o}(p_o,q_o) = \br^{\d_o}(v_o),
    $$
    which yields $(v,\b) \in \Iso(v_o, \d_o)$.

    On the other hand, let $v_{\d} = e^{2i\d} v_o$ and let
    $(p_{\d},q_{\d}) = \cT^{-1}(v_d)$ for some $\d \in [0,\pi)$. Due to
    Lemma \ref{trans:lm:jost_functions} there exists a unique $\a_{\d}
    \in [0,\pi)$ such that $\r^{\a_{\d}}(p_{\d},q_{\d}) = \br^{\d_o +
    \d}(v_{\d})$ and
    \[ \label{proof:eq:10}
        \vp_{\d}(0) + \a_{\d} + \d_o + \d = \frac{\pi}{2} \mod \pi,
    \]
    where $\vp_{\d}(x) = \int_x^{\g} p_{\d}(s) ds$. By Theorem
    \ref{dirac:thm:isoresonances_potentials}, for any $\d \in [0,\pi)$
    we have $\br^{\d_o + \d}(v_{\d}) = \br^{\d_o}(v_o)$, which yields
        $$
    \r^{\a_{\d}}(p_{\d},q_{\d}) = \br^{\d_o + \d}(v_{\d}) =
    \br^{\d_o}(v_o) = \r^{\a_o}(p_o,q_o).
    $$
    So that $(p_{\d},q_{\d},\a_{\d}) \in \Iso(p_o,q_o,\a_o)$. Thus, we obtain
    $$
        \Iso(p_o,q_o,\a_o) = \set*{(p_{\d},q_{\d},\a_{\d}) \given \d \in [0,\pi)},
    $$
    where $(p_{\d},q_{\d},\a_{\d})$ was defined above.

    Now, we obtain formulas for $p_{\d}$, $q_{\d}$ and $\a_{\d}$. By
    direct calculation, we get
    \[ \label{proof:eq:1}
        \Re v_{\d} = \Re v_o \cos 2\d - \Im v_o \sin 2\d,\qq \Im v_{\d} =
        \Re v_o \sin 2\d + \Im v_o \cos 2\d.
    \]
    Using (\ref{trans:eq:potential_phase}), we have
    \[ \label{proof:eq:2}
        \begin{cases}
            \vp_{\d}' = - \Im v_{\d} \cos 2 \vp_{\d} - \Re v_{\d} \sin 2 \vp_{\d},\\
            \vp_{\d}(\g) = 0.
        \end{cases}
    \]
    Substituting (\ref{proof:eq:1}) in (\ref{proof:eq:2}), we get
    \[ \label{proof:eq:3}
        \begin{aligned}
            \vp_{\d}' &= - (\Re v_o \sin 2\d +
            \Im v_o \cos 2\d) \cos 2 \vp_{\d} - (\Re v_o \cos 2\d - \Im v_o \sin 2\d) \sin 2 \vp_{\d}\\
            &= -\Im v_o (\cos 2\d \cos 2 \vp_{\d} - \sin 2\d \sin 2 \vp_{\d})
            -\Re v_o (\sin 2\d \cos 2 \vp_{\d} + \cos 2\d \sin 2 \vp_{\d})\\
            &= -\Im v_o \cos 2(\d + \vp_{\d}) - \Re v_o \sin 2(\d + \vp_{\d}).
        \end{aligned}
    \]
    Let $\vt_{\d} = \d + \vp_{\d}$. Then, using (\ref{proof:eq:2}) and
    (\ref{proof:eq:3}), we get
    \[ \label{proof:eq:5}
        \begin{cases}
            \vt_{\d}' = - \Im v_o \cos 2 \vt_{\d} - \Re v_o \sin 2 \vt_{\d},\\
            \vt_{\d}(\g) = \d.
        \end{cases}
    \]
    Using $\vt_{\d} = \d + \vp_{\d}$ and (\ref{proof:eq:9}), we get
    $$
        \d_o = \frac{\pi}{2} - \a_o - \vt_o(0) \mod \pi
    $$
    Substituting this formula in (\ref{proof:eq:10}), we obtain
    $$
        \a_{\d} = \a_o + \vt_o(0) - \vt_{\d}(0) \mod \pi.
    $$
    Recall that $p_{\d} = - \vp_{\d}'$ and then it follows from (\ref{proof:eq:3}) that
    $$
        p_{\d} = \Im v_o \cos 2 \vt_{\d} + \Re v_o \sin 2 \vt_{\d}.
    $$
    Let $q_{\d} = u_{\d}' + u_{\d}^2$. Using
    (\ref{trans:eq:potentials_transform}) and (\ref{proof:eq:1}), we get
    $$
        \begin{aligned}
            u_{\d} &= - \Re v_{\d} \cos 2 \vp_{\d} + \Im v_{\d} \sin 2 \vp_{\d} =\\
            &= - (\Re v_o \cos 2\d - \Im v_o \sin 2\d) \cos 2 \vp_{\d} +
            (\Re v_o \sin 2\d + \Im v_o \cos 2\d) \sin 2 \vp_{\d} =\\
            &= - \Re v_o (\cos 2 \d \cos 2 \vp_{\d} - \sin 2\d \sin 2\vp_{\d})
            + \Im v_o (\sin 2 \d \cos 2 \vp_{\d} + \cos 2\d \sin 2\vp_{\d})=\\
            &= - \Re v_o \cos 2(\d + \vp_{\d}) + \Im v_o \sin 2(\d + \vp_{\d}).
        \end{aligned}
    $$
    Substituting $\vt_{\d} = \d + \vp_{\d}$ in this equation, we get
    \[ \label{proof:eq:7}
        u_{\d} = - \Re v_o \cos 2 \vt_{\d} + \Im v_o \sin 2 \vt_{\d}.
    \]

    At last, we study the motion of $\vt_{\d}(0)$. Note that equation
    (\ref{proof:eq:5}) has a unique solution and then $\vt_{\d + \pi} =
    \vt_{\d} + \pi$ for any $\d \in \R$. In particular, $\vt_{\pi} =
    \vt_o + \pi$. Let $\dot y = \frac{\partial y}{\partial \d}$.
    Differentiating (\ref{proof:eq:5}) by $\d$ and using
    (\ref{proof:eq:7}), we get
    $$
        \begin{cases}
            \dot \vt_{\d}' = 2 u_{\d} \dot \vt_{\d},\\
            \dot \vt_{\d}(\g) = 1.
        \end{cases}
    $$
    Solving this first-order linear differential equation, we get
    $$
        \dot \vt_{\d}(x) = e^{-2 \int_x^{\g} u_{\d}(s) ds}.
    $$
    Thus, we have $\dot \vt_{\d}(0) > 0$, which yields that
    $\vt_{\d}(0)$ is a strictly increasing function of $\d$.
\end{proof}

\begin{proof}[Proof of Theorem \ref{intro:thm:isoresonances_scattering_matrices}]
    Let $(p_o,q_o,\a_o) \in \cQ \ts [0,\pi)$ and let $v_o =
    \cT(p_o,q_o)$. It follows from the proof of Theorem
    \ref{intro:thm:isoresonance_potentials} that $\cT(p_{\d},q_{\d}) =
    v_{\d} = e^{2i\d} v_o = e^{2i\d} \cT(p_o,q_o)$. Using Lemma
    \ref{trans:lm:jost_functions} and (\ref{proof:eq:10}),
    we get
    \[ \label{proof:eq:4}
        S_{\a_{\d}}(\cdot,p_{\d},q_{\d}) = - \bS_{\d_o + \d}(\cdot, v_{\d})
    \]
    for any $\d \in [0,\pi)$. Since $v_{\d} = e^{2i\d} v_o$, it follows
    from Theorem \ref{dirac:thm:isoresonances_potentials} that
    \[ \label{proof:eq:8}
        \bS_{\d_o + \d}(\cdot,v_{\d}) = e^{2i\d} \bS_{\d_o}(\cdot,v_o)
    \]
    for any $\d \in [0,\pi)$. Combining (\ref{proof:eq:4}) and (\ref{proof:eq:8}), we get
    $$
        S_{\a_{\d}}(\cdot,p_{\d},q_{\d}) = - \bS_{\d_o + \d}(\cdot, v_{\d}) =
        - e^{2i\d} \bS_{\d_o}(\cdot,v_o) = e^{2i\d} S_{\a_o}(\cdot,p_o,q_o).
    $$
\end{proof}

\begin{proof}[Proof of Theorem \ref{intro:thm:reducing_of_scattering_problem}]
    Let $\a \in [0,\pi)$. Firstly, we construct the mappings $\phi_{\a}$ and $\xi_{\a}$.
    Let $(p,q) \in \cQ$. Let $v_o = \cT(p,q)$ and let $\vt_{\d}$, $\d \in [0,\pi)$, be a solution
    of initial value problem (\ref{intro:eq:isoresonances_potentials_ivp}).
    Let also $p_{\d}$, $q_{\d}$ and
    $\a_{\d}$ be given by (\ref{intro:eq:isoresonances_potentials}).
    Due to Theorem \ref{intro:thm:isoresonance_potentials}
    $\vt_{\d}(0)$ is a strictly increasing function of $\d$, and then there exists a unique $\d' \in [0,\pi)$
    such that
    \[ \label{proof:eq:12}
        \a_{\d'} = \a + \vt_o(0) - \vt_{\d'}(0) = 0 \mod \pi.
    \]
    Due to Theorem \ref{intro:thm:isoresonance_potentials} it follows that
    $(p_{\d'},q_{\d'},0) \in \Iso(p,q,\a)$, which yields $\xi_{\a}(p,q) = (p_{\d'},q_{\d'})$.
    Let $\vp_{\d}(x) = \int_x^{+\iy} p_{\d}(s) ds$.
    Recall that in the proof of Theorem \ref{intro:thm:isoresonance_potentials} $\vt_{\d}$ was defined such that
    we have $\vt_{\d}(0) = \d + \vp_{\d}(0)$. Substituting this formula in definition of $\phi_{\a}$
    and using (\ref{proof:eq:12}), we get
    $$
        \phi_{\a}(p,q) = \a + \vp_{o}(0) - \vp_{\d'}(0) = \a + \vt_o(0) - \vt_{\d'}(0) + \d' = \d'.
    $$
    Now, it follows from Theorem \ref{intro:thm:isoresonances_scattering_matrices} that
    \[ \label{proof:eq:11}
        S_{\a}(\cdot,p,q) = e^{-2i\d'} S_{\a_{\d'}}(\cdot,p_{\d'},q_{\d'}) = e^{-2i \phi_{\a}(p,q)} S_{0}(\cdot, \xi_{\a}(p,q)).
    \]

    We prove that the mapping $\xi_{\a}$ is a bijection. Due to Theorem \ref{intro:thm:isoresonance_potentials}
    there exist a unique $\d \in [0,\pi)$ such that $(p_{\d},q_{\d},\a_{\d}) \in \Iso(p,q,\a)$ and
    $\a_{\d} = 0$, where $(p_{\d},q_{\d},\a_{\d})$ is given by (\ref{intro:eq:isoresonances_potentials})
    and (\ref{intro:eq:isoresonances_potentials_ivp}). Thus, the mapping $\xi_{\a}$ is an injection.

    Now we prove that the mapping $\xi_{\a}$ is an surjection. Let $(\tilde{p},\tilde{q}) \in \cQ$. Due to
    Theorem \ref{intro:thm:isoresonance_potentials} there exists a unique $(p,q) \in \cQ$ such that
    $(p,q,\a) \in \Iso(\tilde{p},\tilde{q},0)$ and it follows from the definition of the
    isoresonances set that $(\tilde{p},\tilde{q},0) \in \Iso(p,q,\a)$. So that we have
    $\xi_{\a}(p,q) = (\tilde{p},\tilde{q})$.
\end{proof}

\end{document}